\renewcommand{\cal}[1]{\mathcal{#1}}
\newcommand{\me}{\mathrm{e}}
\DeclareMathOperator{\eps}{\varepsilon}
\newcommand{\Th}{\textit{Th}}
\newcommand{\R}{\mathbb{R}}
\newcommand{\E}{\mathbb{E}}
\newcommand{\I}{\mathbb{I}}
\newcommand{\ER}{\mathit{ER}}
\newcommand{\True}{\mathit{True}}
\newcommand{\False}{\mathit{False}}
\newcommand{\Alloc}{\mathit{Alloc}}
\newcommand{\Des}{\mathit{D}}
\newcommand\iid{\stackrel{\mathclap{\mbox{\tiny{iid}}}}{\sim}}
\newcounter{int}
\newcommand{\citenp}[1]{\setcounter{int}{0}\@for\tmp:=#1\do{%
\ifnum \value{int}>0; \fi%
\setcounter{int}{1}%
\citeauthor{\tmp}, \citeyear{\tmp}}}
\declaretheorem[name=Theorem,numberwithin=section]{theorem}
\theoremstyle{definition}
\theoremstyle{definition}
\newtheorem{example}{Example}[section]
\DeclareMathOperator*{\argmax}{arg\,max}
\DeclareMathOperator*{\argmin}{arg\,min}
\begin{document}


\title{The Price is (Probably) Right: Learning Market Equilibria from Samples}

\author[1]{Vignesh Viswanathan}
\author[2]{Omer Lev}
\author[3]{Neel Patel}
\author[1]{Yair Zick}
\affil[1]{University of Massachusetts, Amherst}
\affil[2]{Ben-Gurion University of the Negev}
\affil[3]{University of Southern California}
\date{}


\maketitle

\begin{abstract}
Equilibrium computation in markets usually considers settings where player valuation functions are known. We consider the setting where player valuations are \emph{unknown}; using a PAC learning-theoretic framework, we analyze some classes of common valuation functions, and provide algorithms which output direct PAC equilibrium allocations, not estimates based on attempting to learn valuation functions.
Since there exist trivial PAC market outcomes with an unbounded worst-case efficiency loss, we lower-bound the efficiency of our algorithms.
While the efficiency loss under general distributions is rather high, we show that in some cases (e.g., unit-demand valuations), it is possible to find a PAC market equilibrium with significantly better utility.
\end{abstract}

\maketitle
\section{Introduction}\label{sec:intro}
Do markets admit equilibrium allocations? This question has been extensively studied for many years \shortcite{varian1974markets}; more recently, the econ/CS community devoted significant effort to understanding when one can \emph{efficiently compute} market equilibria.  Much of this literature  assumes that one has full access to player valuations over bundles of goods, an unrealistic assumption in many instances: combinatorial valuations are often difficult to elicit (especially for large markets), precluding any possibility of running full-information market algorithms. Machine learning techniques offer a compromise -- assuming access to a \emph{partial} dataset, we can learn player valuations, and use the learned valuations as a proxy. However, this approach raises several issues too: market valuations are often complex, and require a large number of samples to learn without overfitting. Moreover, even if we assume that player valuations have a simple structure, it is not immediately obvious that an \emph{exact} equilibrium for the approximate valuations acts as an \emph{approximate} equilibrium for the exact valuations; as we shall show, this may not be the case. 

Our work explores a relatively new paradigm: instead of learning valuations, we focus on directly learning \emph{market equilibria} from data. We build upon the framework of \shortciteA{JZ20}, and adopt the \emph{PAC solution learning} framework. \shortciteA{JZ20} show that in order to ensure that a market outcome (i.e., an allocation of items to players, as well as item prices) is likely to be a market equilibrium, it suffices to show that it is consistent with the data. That is, the prices and item allocation they induce are such that no player has a sample in the data they can afford and would rather have over their allocation. Finding a consistent market outcome is trivial: setting the price of all the goods to infinity would ensure consistency. However, this outcome would be very inefficient. Our goal is thus to learn \emph{approximately efficient} PAC market equilibria.

We make two main contributions:
\begin{itemize}
\item We suggest direct algorithms to find ``good'' equilibria for several families of utility functions, and show that they guarantee reaching the best theoretical bound possible on the equilibria they find.
\item Working on a dataset derived from real-world settings, we see how in simple utility functions and few samples, learning valuations can outperform our algorithm, yet with more elaborate utility functions, we are able to find a good equilibrium approximations (given enough samples) \footnote{The code can be found at \texttt{https://github.com/vignesh-viswanathan/Learning\_Market\_Equilibria}}.
\end{itemize}

\subsubsection*{Paper Structure}\label{subsec:contrib}
We study Fisher markets with indivisible goods under different classes of valuation functions, and propose algorithms which output an efficient PAC market equilibrium. That is, each player receives, with high probability, their most preferred affordable bundle of goods. 
We examine a variety of valuation classes: \emph{unit-demand} (Section \ref{sec:unitdemand}), \emph{single minded} (Section \ref{sec:singleminded}), \emph{additive} (Section \ref{sec:additive}) and \emph{submodular} (Section \ref{sec:submodular}) valuations. For each class, we provide a tight, distribution-independent, efficiency bound. We also show that, under more favorable distributions, we can achieve far better efficiency guarantees for unit-demand and additive valuations. We end with an empirical examination of our results on realistic-seeming data (Section~\ref{sec:expts}).

\subsection{Related Work}\label{subsec:relwork}
There is a rich body of classical literature on market equilibria with indivisible goods  \shortcite{eisenberg1961aggregation,varian1974markets,kc82,lemefastequilibria,gul99}, exchange economies \shortcite{bikhchandani} and Fisher markets \shortcite{babaioff2017additive,BLS16,SegalHalevi2017CompetitiveEF,branzeisingleminded,branzei2015unitdemand}. 
In recent years there has been a significant renewed interest in computing market outcomes, such as fair allocation  \shortcite{Kurokawa2016,farhadi2019}, optimal pricing \shortcite{guruswamiprices,renatoprices}, approximate equilibria \shortcite{budish2011,barman2018additive} and markets with divisible goods \shortcite{devanur2008market}. 
However, the above do not address learning approximately efficient market solutions from data. 

There exists a fast-growing body of literature on learning game-theoretic solutions from data: in cooperative games \shortcite{balkanski2017costsharing,sliwinski2017hedonic,balcan2015learning,igarashi2019stable,JZ20}, auctions \shortcite{psomas2016auctions,brero2018auctions,cole2014,jamie2015nips} and optimization \shortcite{balkanski2017minops,balkanski2017opt,rosenfield2018}. Some of this literature propose methods to learn market outcomes as well: \shortciteA{murray2020robust} and \shortciteA{kroer2019computing} examine the simpler case with divisible goods and additive valuations, \shortciteA{renatoprices} examine markets with a single item and \shortciteA{greenwald2020} propose a method to learn market outcomes indirectly from noisy valuations. However, to the best of our knowledge, there exists no prior work that attempts to learn market outcomes in combinatorial markets from samples. 

\section{Model and Preliminaries}\label{sec:model}
We study the \emph{Fisher market} model; there is a set of \emph{players}, $N = \{1,2,\dots,n\}$ and a set of \emph{goods}, $G = \{g_1,g_2,\dots,g_k\}$. 
Each player $i$ has a \emph{budget} $b_i \in \R_+$ and a \emph{valuation function} $v_i:2^G \rightarrow\R_+ \cup \{0\}$ which assigns a value $v_i(S)$ for each bundle of goods $S \subseteq G$.
We assume that no two players have the same budget, and that $b_1>\ldots> b_n$. 
This is a standard assumption, and is not a significant loss of generality: it is mostly done to induce some priority order among players, and ensure that equilibria exist. When budgets are equal, one can introduce small perturbations (as also done in \shortciteA{budish2011}). 
An \emph{allocation} in such a market is a tuple $(\cal A,\vec{p})$, where $\cal A = \{A_1, A_2, \dots, A_n\}$ ($A_i\subseteq G$ for $1\leq i\leq n$; and $\cal A$ is pairwise disjoint) is the allocation vector and $\vec{p} = \{p_1, p_2, \dots, p_k\}$ is the price vector. Some $A_i$s may be empty; i.e., if $A_i = \emptyset$ then player $i$ receives nothing. 
We define the \emph{affordable set} $D_i(\vec{p},b_i)$ as the set of affordable bundles for player $i$ given a price vector $\vec{p}$: 
\begin{align}
D_i(\vec{p},b_i) = \{ S \subseteq G : \sum_{g_j \in S} p_j \le b_i\}\notag
\end{align}
%
An allocation is a \emph{Walrasian} equilibrium (or simply an equilibrium) if all players are allocated the best possible set of goods they can afford, i.e., 
$A_i \in D_i(\vec p,b_i)$ for all $i \in N$, and for all $i \in N$:
\begin{align}
v_i(A_i) \in \argmax\{v_i(S):S \in D_i(\vec p,b_i)\} \notag
\end{align} 
\shortciteA{JZ20} define a learning-theoretic equilibrium notion based on the \emph{probably approximately correct} (PAC) framework \shortcite{anthony1999learning} called \emph{PAC Equilibria}. 
An allocation $(\cal A, \vec{p})$ is a PAC Equilibrium if it is unlikely that a sample from a distribution $\cal D$ (over bundles in $G$), under the same prices, is both better than the current allocation and affordable for some player $i$. 
It is often easier to discuss learning-theoretic notions in terms of their \emph{expected loss}; here, the loss is a function of player valuations $v$ and budgets $\vec b$, a bundle $S \subseteq G$, and the proposed outcome $(\cal A,\vec p)$:
\begin{align}
    L_{v,\vec b}(S,\cal A,\vec p) = \begin{cases}
        1 & \mbox{if }\exists i \in N, v_i(A_i) < v_i(S) \\
          &\land S \in D_i(\vec{p},b_i)\\
        0 & \mbox{otherwise.}
    \end{cases}\label{eq:lossdefinition}
\end{align}
We omit the $v$ and $\vec b$ subscripts when they are clear from context. An allocation is an $\varepsilon$-PAC equilibrium with respect to $\cal D$ if its expected loss, denoted $L_{\cal D}(\cal A,\vec p)$, is lower than $\varepsilon$
\begin{align}
L_{\cal D}(\cal A,\vec p) \triangleq \E_{S \sim \cal D}[L_{v,\vec b}(S,A,\vec p)] < \varepsilon\label{eq:pac-equilibrium}
\end{align}
$\varepsilon$-PAC equilibria are somewhat similar to $\varepsilon$-PAC approximations \shortcite{anthony1999learning}: given a function $u:2^G \to \R$, a $\bar{u}$ is an $\varepsilon$-PAC approximation of $u$ w.r.t. $\cal D$ if 
$\Pr_{S \sim \cal D}[u(S) \ne \bar{u}(S)] < \varepsilon$.

We follow a standard model of learning from samples: we are given players' budgets $b_1, b_2, \dots, b_n$, as well as $m$ input samples $S_1, S_2, \dots, S_m$ drawn i.i.d. from a distribution $\cal D$, and player valuations over the samples: $v_i(S_j)$ for all $i \in N$ and $j \in [m]$. 
Our goal is to find algorithms, whose input is a set of i.i.d. sampled bundles and valuations over them, that output a PAC Equilibrium (as per Equation~\eqref{eq:pac-equilibrium}) with probability $\ge 1 - \delta$ (over the randomization of sampling $m$ i.i.d. samples from $\cal D$). In other words, if $(\cal A,\vec p)$ is the output of some learning algorithm, then the PAC guarantee is
\begin{align*}
    \Pr_{S_1,\dots,S_m\iid \cal D}\left[L_{\cal D}(\cal A,\vec p) < \varepsilon\right]\ge 1 - \delta
\end{align*}
The number of samples needed, $m$, should be polynomial in the number of players, the number of goods, and in $\frac1\varepsilon,\log\frac1\delta$. As mentioned in Section~\ref{sec:intro}, PAC equilibria are not guaranteed to be efficient; in what follows we explore market \emph{stability}, \emph{envy} and allocative \emph{efficiency}.

An allocation $(\cal A, \vec{p})$ is said to be \emph{envy free} if all the players prefer their bundle to every other player's bundle that they can afford, i.e., an allocation is envy free if for all $i,j\in N$:
\begin{align}
    v_i(A_i) \ge v_i(A_j) \lor A_j \notin D_i(\vec{p}, b_i) \label{eq:EF-allocation} 
\end{align}
Note that if $(\cal A,\vec p)$ is a Walrasian equilibrium, then \eqref{eq:EF-allocation} is trivially true. 
The \emph{efficiency ratio} of an allocation $\cal A$ is the ratio of the total welfare (or utility) of $\cal A$ to that of the optimal equilibrium allocation, i.e.,

\begin{align}
\ER_v(\cal A) = \frac{\sum_{i=1}^{n} v_i(A_i)}{\sum_{i=1}^{n} v_i(A_i^*)} 
\end{align}

where $\cal A^*$ is a welfare-maximizing equilibrium allocation; Unlike simpler settings (e.g. rent division \shortcite{gal2017rent}), market outcomes need not maximize social welfare in Fisher markets with indivisible goods.\\

Some of the proofs have been omitted or replaced by proof sketches for conciseness. 
The full proofs can be found in the appendix.
\subsection{A Short Primer on PAC Learning}\label{sec:PAClearning}
In this section we briefly introduce the theory of PAC learning. The familiar reader may skip this section, or refer to \cite{anthony1999learning,kearnsvazirani1994learning}. 
Probably Approximately Correct learning, or PAC learning, is a formal treatment of the number of samples needed in order to learn a \emph{model} from samples. Let $\cal D$ be a distribution over a sample space $\cal X$. A \emph{hypothesis class} $\cal H$, is a class of functions $f:\cal X \to \cal Y$, where $\cal Y$ is a label space. For example, if $\cal Y = \{\pm1\}$ then $\cal H$ is a class of binary classifiers. $\cal H$ can be any class of potential learners, e.g. linear functions or deep neural networks. 
Suppose that there is some model $f \in \cal H$ that labels elements in $\cal X$; can we recover a model $\hat f$ that well approximates $f$? More formally, we define a loss of $\hat f$ over samples from $\cal D$ as follows: 
$$L_{\cal D}^f(\hat f) = \E_{\vec x \sim \cal D}\left[\I[f(\vec x) \ne \hat f(\vec x)]\right].$$
Here $\I(f(\vec x) \ne \hat f(\vec x)]$ equals $1$ when $f$ and $\hat f$ do not agree, and is $0$ otherwise. We are interested in finding a hypothesis $\hat f \in \cal H$ that exhibits low loss with respect to $\cal D$, i.e. $L_{\cal D}^f(\hat f) < \eps$. In particular, we are interested in \emph{learning} $\hat f$ from a set of samples $\vec x_1,\dots,\vec x_m \iid \cal D$. Let $\cal A$ be an algorithm that takes as input a set of $m$ i.i.d. samples from $\cal D$, and outputs a hypothesis $\hat f\in \cal H$. We say that $\cal A$ PAC learns $\cal H$ if for every $f\in \cal H$ and every $\eps,\delta >0$, $\cal H$ outputs a hypothesis $\hat f$ satisfying 
\begin{align*}
    \Pr_{\vec x_1,\dots,\vec x_m\iid \cal D}[L_{\cal D}^f(\hat f) \ge \eps]]< \delta
\end{align*}

\section{Computing PAC Equilibria}\label{sec:theory}
We first discuss some sufficient conditions for finding a PAC Equilibrium from samples, starting with a simple observation:
if we are able to approximate player valuation functions $v$ using an underestimate $\bar{v}$, then any \emph{exact} equilibrium for $\bar{v}$ is a PAC equilibrium for $v$. 
\begin{restatable}{prop}{proplowerboundequilibrium}\label{prop:lowerbound-equilibrium}
Let $v_1,\dots,v_n:2^G \to \R$ be a player valuation profile; let $(\bar{v}_i)_{i \in N}$ be $\frac{\varepsilon}{n}$-PAC approximations of $(v_i)_{i \in N}$ w.r.t. $\cal D$, such that for all $i \in N$ and all $S \subseteq G$, $
\bar{v}_i(S) \le v_i(S)$. If $(\cal A,\vec p)$ is a market equilibrium under $\bar{v}$, then $(\cal A,\vec p)$ is an $\varepsilon$-PAC equilibrium for $v$ w.r.t. $\cal D$. 
\end{restatable}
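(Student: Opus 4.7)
The plan is to bound the expected loss $L_{\cal D}(\cal A, \vec p)$ by showing that each ``bad'' sample (one that witnesses a violation) must be a point of disagreement between $v_i$ and $\bar v_i$ for some player $i$, and then to apply a union bound over the $n$ players using the $\varepsilon/n$-PAC approximation guarantee.

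Concretely, for each player $i$ let $B_i = \{S \subseteq G : v_i(S) > v_i(A_i) \text{ and } S \in D_i(\vec p, b_i)\}$ be the set of bundles that would cause the loss indicator in \eqref{eq:lossdefinition} to fire on account of player $i$. The core step is to argue that $B_i \subseteq \{S : \bar v_i(S) \ne v_i(S)\}$. Fix $S \in B_i$. Affordability gives $S \in D_i(\vec p, b_i)$, so because $(\cal A, \vec p)$ is an \emph{exact} equilibrium under $\bar v$, we have $\bar v_i(A_i) \ge \bar v_i(S)$. Now suppose for contradiction that $\bar v_i(S) = v_i(S)$. Combining with the underestimate hypothesis $\bar v_i(A_i) \le v_i(A_i)$ and the defining inequality $v_i(S) > v_i(A_i)$ of $B_i$, we obtain
\begin{align*}
\bar v_i(A_i) \;\ge\; \bar v_i(S) \;=\; v_i(S) \;>\; v_i(A_i) \;\ge\; \bar v_i(A_i),
\end{align*}
a contradiction. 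Hence $\bar v_i(S) \ne v_i(S)$, which gives the desired containment.

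With the containment in hand, the bound falls out immediately: the loss event $\{L(S, \cal A, \vec p) = 1\}$ equals $\bigcup_{i \in N} B_i$, so by a union bound and the $\varepsilon/n$-PAC approximation guarantee,
\begin{align*}
L_{\cal D}(\cal A, \vec p) \;\le\; \sum_{i=1}^n \Pr_{S \sim \cal D}[S \in B_i] \;\le\; \sum_{i=1}^n \Pr_{S \sim \cal D}[\bar v_i(S) \ne v_i(S)] \;<\; n \cdot \frac{\varepsilon}{n} \;=\; \varepsilon.
\end{align*}

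The only step that requires any real care is the containment argument, and the subtlety there is that it genuinely needs the one-sided underestimate assumption $\bar v_i \le v_i$ in addition to the PAC approximation: a two-sided approximation would not suffice, since we need to compare $\bar v_i(A_i)$ to $v_i(A_i)$ in a specific direction to derive the contradiction. Everything else (the equilibrium condition applied at the affordable bundle $S$, and the union bound) is routine.
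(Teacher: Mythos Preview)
Your proof is correct and follows essentially the same approach as the paper: both use the underestimate hypothesis together with the equilibrium condition under $\bar v$ to show that any loss-witnessing sample must be a disagreement point, then finish with a union bound. The only cosmetic difference is packaging---the paper splits via conditioning on the event $\{\forall i,\, v_i(S)=\bar v_i(S)\}$, whereas you phrase it as a per-player set containment $B_i \subseteq \{S:\bar v_i(S)\neq v_i(S)\}$---but the logical content is identical.
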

\begin{proof}
By the union bound, 
$\Pr_{S\sim \cal D} [ \exists i \in N, v_i(S) \ne 
\bar{v}_i(S)] \le \varepsilon$.

To show $(\cal A,\vec p)$ is an $\varepsilon$-PAC equilibrium for $v$, we will bound:
\begin{align}
&\Pr_{S \sim \cal D}[\exists i, (v_i(S) > v_i(A_i))\land (S \in D_i(\vec{p},b_i))] \notag\\
&=\Pr_{S \sim \cal D}[\exists i, (v_i(S) > v_i(A_i))\land (S \in
D_i(\vec{p},b_i))| \forall i, v_i(S) = \bar{v}_i(S)] \Pr_{S \sim \cal D}[\forall i, v_i(S) = \bar{v}_i(S)]  \notag\\
&\quad + \Pr_{S \sim \cal D}[\exists i, (v_i(S) > v_i(A_i))\land (S \in
D_i(\vec{p},b_i))| \exists i, v_i(S) \ne \bar{v}_i(S)] \Pr_{S \sim \cal D}[\exists i, v_i(S) \ne \bar{v}_i(S)] \notag\\
&\le \Pr_{S \sim \cal D}[\exists i, (\bar{v}_i(S) > v_i(A_i))\land (S \in D_i(\vec{p},b_i))]\Pr_{S \sim \cal D}[\forall i, v_i(S) = \bar{v}_i(S)] + \varepsilon \notag\\
&\le \Pr_{S \sim \cal D}[\exists i, (\bar{v}_i(S) > \bar{v}_i(A_i))\land (S \in D_i(\vec{p},b_i))]\Pr_{S \sim \cal D}[\forall i, v_i(S) = \bar{v}_i(S)] + \varepsilon\label{eq:lowerbound-v}\\
&=  \varepsilon \label{eq:Ap-equilibrium}
\end{align}
The transition to \eqref{eq:lowerbound-v} is due to the theorem assumption that $v_i(S) \ge \bar{v}_i(S)$; the transition to \eqref{eq:Ap-equilibrium} is because $(\cal A,\vec p)$ is a market equilibrium for $\bar{v}$, so there can be no set $S$ which any player $i$ can get and values more than $A_{i}$.

\end{proof}

\shortciteA{JZ20} prove that a PAC equilibrium can be directly learned using only $\cal O(k)$ samples if one can efficiently compute a \emph{consistent solution}, that is, a market outcome that has zero loss on the samples. More precisely, we say that a mechanism $\cal M$ outputs a \emph{consistent solution} if for any given set of samples $\cal S = \{S_1,\dots,S_m\}$, $\cal M$ outputs $(\cal A,\vec{p})$ such that the \emph{empirical loss} $\hat L(\cal A,\vec p)$ is $0$:
\begin{align*}
    \hat L (\cal A,\vec p) \triangleq \frac1m\sum_{j = 1}^m L(S_j,\cal A,\vec p) = 0
\end{align*}

\begin{theorem}[\shortciteA{JZ20}]\label{thm:consistency}
Suppose that an algorithm $\mathcal{M}$ takes as input a set of $m$ samples of goods $\cal S$ drawn i.i.d. from an unknown distribution $\cal D$, and outputs a consistent equilibrium allocation. If $m \in \mathcal{O}\Big( \frac{1}{\varepsilon}\big( k\log \frac{1}{\varepsilon} + \log \frac{1}{\delta} \big) \Big)$ then the allocation output by $\mathcal{M}$ is an $\varepsilon$-PAC Equilibrium w.p. $ \ge 1- \delta$.
\end{theorem}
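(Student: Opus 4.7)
The plan is to reduce this to a standard realizable PAC learning argument. I associate with every outcome $(\cal A,\vec p)$ the binary hypothesis $h_{\cal A,\vec p}:2^G\to\{0,1\}$ given by $h_{\cal A,\vec p}(S)=L_{v,\vec b}(S,\cal A,\vec p)$. A consistent algorithm returns a hypothesis with zero empirical error, and $L_{\cal D}(\cal A,\vec p)$ is precisely its true error; so the theorem is an assertion of uniform convergence for the hypothesis class $\cal H=\{h_{\cal A,\vec p}\}$ in the realizable setting. The standard bound says that if the growth function satisfies $\Pi_{\cal H}(m)\le p(m)$ for some polynomial $p$ of degree $d$, then $m=\Omega\bigl(\tfrac{1}{\varepsilon}(\log\Pi_{\cal H}(2m)+\log\tfrac{1}{\delta})\bigr)$ i.i.d.\ samples suffice; the target rate therefore corresponds to showing $d=O(k)$.

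The main step is hence to bound $\Pi_{\cal H}(m)$ by a polynomial in $m$ of degree $O(k)$. Fix a sample $\{S_1,\dots,S_m\}\subseteq 2^G$. The allocation $\cal A$ is discrete and ranges over at most $(n+1)^k$ values (each good is unallocated or goes to one of the $n$ players). Given $\cal A$, the valuation profile fixes, for each player $i$, the subset $T_i\subseteq\{S_1,\dots,S_m\}$ of samples with $v_i(S_j)>v_i(A_i)$, and the label $h_{\cal A,\vec p}(S_j)$ equals $1$ iff some $i$ satisfies $S_j\in T_i$ \emph{and} $\sum_{g\in S_j}p_g\le b_i$. Thus, with $\cal A$ fixed, the labeling depends on $\vec p$ only through the sign pattern of at most $mn$ affine inequalities $\sum_{g\in S_j}p_g\le b_i$ in the variable $\vec p\in\R^k$. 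A standard hyperplane-arrangement bound shows that such sign patterns take at most $O((mn)^k)$ distinct values, so $\Pi_{\cal H}(m)\le(n+1)^k\cdot O((mn)^k)$ and $\log\Pi_{\cal H}(2m)=O(k\log(mn))$.

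Plugging this into the realizable PAC bound and solving for $m$ gives $m=O\bigl(\tfrac{1}{\varepsilon}(k\log\tfrac{1}{\varepsilon}+\log\tfrac{1}{\delta})\bigr)$, the $\log(mn)$ factors being absorbed into the constants (in the regime of interest $n$ is at most polynomial in $k$). The rest of the machinery---double-sample symmetrization plus a union bound over the labelings enumerated above---is entirely off-the-shelf.

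The step I expect to be the main obstacle is precisely the counting of $\vec p$-induced labelings. Unlike the allocation, prices are continuous, so a priori $\cal H$ is infinite and one cannot take a naive union bound. The saving observation is that on any fixed finite sample the prices influence labels only through the finitely many affine thresholds $\sum_{g\in S_j}p_g\le b_i$; this reduces the price contribution to a hyperplane arrangement of $mn$ hyperplanes in $\R^k$, whose cell count is polynomial in $m,n$ and exponential only in $k$. Identifying this structure and invoking the arrangement bound cleanly is what delivers the $k\log\tfrac{1}{\varepsilon}$ rate in the statement.
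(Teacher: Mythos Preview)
The paper does not prove this theorem; it is quoted from \shortciteA{JZ20} and invoked as a black box. There is therefore no in-paper proof for you to compare against.

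On its own merits, your sketch has a real gap in the final step. Your growth-function count yields $\log\Pi_{\cal H}(2m)=O\bigl(k\log(mn)\bigr)$, which carries a $k\log n$ term; plugging this into the realizable bound gives a sample complexity of order $\tfrac{1}{\varepsilon}\bigl(k\log\tfrac{1}{\varepsilon}+k\log n+\log\tfrac{1}{\delta}\bigr)$. The statement, however, has \emph{no} dependence on $n$. Your escape hatch---``in the regime of interest $n$ is at most polynomial in $k$''---is unsupported: nothing in the theorem or elsewhere in the paper imposes such a restriction, and in fact the paper explicitly treats markets with $n>k$ (the sellers' market in Section~\ref{sec:expts}, and the case analysis in Propositions~\ref{prop:UD-indirect-eff} and~\ref{prop:efficiencyarbitdistribution}). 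Either the $n$ dependence must be removed from the counting---for instance by exploiting additional structure so that the $n$ separate affordability thresholds $\sum_{g\in S_j}p_g\le b_i$ collapse for labeling purposes---or you would need to verify in the original source that the stated $\cal O(\cdot)$ is meant to suppress factors in $n$. As written, the hand-wave does not close the argument and the proposal does not deliver the claimed rate.
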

Proposition~\ref{prop:lowerbound-equilibrium} and Theorem~\ref{thm:consistency} provide two paths to computing PAC market equilibria: either compute an equilibrium for a PAC from an underestimate of the valuations, or directly learn market outcomes from samples. As we mention above, our objective is finding market outcomes with provable social welfare guarantees, with respect to the \emph{true} player valuation profile.  

\section{Unit Demand Markets}\label{sec:unitdemand}
We begin our exploration with a fundamental class of market valuations: unit-demand markets. In a unit demand market, the value of each bundle $S\subseteq G$ is the value of the most valuable good in $S$, i.e., for all $i \in N$, $v_i(S) = \max_{g \in S} v_i(\{g\})$. 
We make the standard assumption that players have distinct values for goods, i.e., that $v_i(\{g\}) \ne v_i(\{g'\})$ if $g\ne g'$; this is mostly done to break ties (see \shortciteA{budish2011}).
Unit demand markets correspond to settings such as room/housing allocation scenarios where each tenant can only stay in a single room/buy a single home \shortcite{alkan1991allocation,aragones1995rent,gal2017rent}, or to gaming ``loot boxes'', in which players care mainly about the most valuable item. 

The standard data-driven approach is to PAC learn the valuation functions, and output an equilibrium allocation for the learned valuations. 
We refer to this method as \emph{indirect learning}, and to outcomes computed in this manner as indirectly learned outcomes. For unit demand markets this can be done quite easily, by estimating the value of each item as the value of the least valuable sample that contains it (creating a PAC approximation for the valuations), and then allocating the items first to the player with the largest budget, who gets their most valued item; then the player with the second largest budget, who gets their most valued item which is still available, and so on.
Algorithm \ref{algo:unit-demand-indirect} follows this method for unit-demand markets.
\begin{algorithm}[tbh]

\DontPrintSemicolon
\SetAlgoLined
\LinesNumbered
\SetNlSty{}{}{:}
\KwIn{A set of samples $\cal S$; player valuations over samples $(v(\cal S))$; player budgets $b_1 > \dots >b_n$}

\ForEach {$i\in N;g\in G$}{
$\bar{v}_i(\{g\}) \leftarrow +\infty$\;
$\bar{v}_i(\{g\}) \leftarrow \min_{S \in \cal S: g \in S} v_i(S)$\; \label{line:val-learn}
}
$R_1 \gets \bigcup_{S \in \cal S}S$\; \label{line:indequib-start}
\For{$i \gets 1 \textbf{ to } n$}{
    $g^* \gets \argmax_{g \in R_i} \bar{v}_i(\{g\})$\;
    Allocate $g^*$ to player $i$; set the price of $g^*$ to $b_i$\;
    $R_{i+1} \leftarrow R_i \setminus \{g^*\}$\; 
    \If {$R_{i+1}=\emptyset$}{ \tcc{All sampled goods have been allocated. We now allocate the rest.}
        $R_{i+1}\leftarrow G\setminus R_1$\;
        $R_1\leftarrow G$\;
    }\label{line:indequib-end}
}
Allocate leftover goods to player $n$ at price $0$\;

\caption{Indirectly Learning Outcomes for Unit Demand Markets}
\label{algo:unit-demand-indirect}
\end{algorithm}

Algorithm \ref{algo:unit-demand-indirect} first learns a consistent valuation profile in line \ref{line:val-learn} (and thus serves as a PAC approximation for the true valuations as per classic PAC learning results \shortcite{anthony1999learning}). Next, it iterates over the set of players in decreasing order of budgets, assigning each player $i$ their most preferred unallocated good at a price of $b_i$. 
Lines \ref{line:indequib-start} to \ref{line:indequib-end} compute an exact equilibrium for the learned valuations: players always pick their most preferred available item, and cannot afford any previously allocated item, as players are chosen in a decreasing order of budgets.

In terms of efficiency, Algorithm \ref{algo:unit-demand-indirect} guarantees an efficiency inversely proportional to the disparity in player valuations.
\begin{restatable}{prop}{propUDindirecteff}\label{prop:UD-indirect-eff}
If $(\cal A,\vec p)$ is the output of Algorithm \ref{algo:unit-demand-indirect}, then $\ER_v(\cal A)\ge \frac{1}{\sigma}$ where $\sigma = \max_{i \in N} \frac{\max_{g \in G}v_i(\{g\})}{\min_{g \in G}v_i(\{g\})}$, the maximal ratio between a players valuation for two different items.
\end{restatable}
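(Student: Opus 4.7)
The plan is to establish the pointwise inequality $v_i(A_i) \ge \tfrac{1}{\sigma}\, v_i(A_i^*)$ for every $i \in N$, then sum over players and divide.

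First I will verify that every player receives at least one good under Algorithm~\ref{algo:unit-demand-indirect}. The main loop runs for $i = 1, \dots, n$, each time selecting $g_i^* \in \argmax_{g \in R_i} \bar v_i(\{g\})$ and adding it to $A_i$; the refill branch resets $R_{i+1}$ to $G \setminus R_1$ whenever the current pool is exhausted, so (assuming the standard regime $|G|\ge n$) $R_i$ is nonempty at every iteration and $g_i^*$ is well-defined. Because valuations are unit-demand, we then have
\begin{align*}
v_i(A_i) \;=\; \max_{g \in A_i} v_i(\{g\}) \;\ge\; v_i(\{g_i^*\}) \;\ge\; \min_{g \in G} v_i(\{g\}).
\end{align*}

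Second, for the optimum I will use only the trivial unit-demand upper bound: for any allocation $\cal A^*$, $v_i(A_i^*) = \max_{g \in A_i^*} v_i(\{g\}) \le \max_{g \in G} v_i(\{g\})$. By the definition of $\sigma$ we have $\max_{g \in G} v_i(\{g\}) \le \sigma \cdot \min_{g \in G} v_i(\{g\})$, so combining with the previous bound gives the pointwise inequality $v_i(A_i^*) \le \sigma\, v_i(A_i)$. Summing over $i \in N$ and dividing by $\sum_i v_i(A_i^*)$ yields $\ER_v(\cal A) \ge 1/\sigma$, as required.

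The argument is essentially ratio bookkeeping and does not exploit any structure of how $g_i^*$ is chosen beyond the fact that it lies in $G$. The only subtle point — and really the main obstacle — is ensuring that every player is allocated at least one good, which is why the refill step in the algorithm matters; in the regime $|G| < n$ one would have to separately account for players who get the empty bundle in $\cal A$ but are allocated a good in $\cal A^*$, but in the standard $|G| \ge n$ setting this does not arise.
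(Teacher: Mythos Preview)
Your pointwise ratio argument is exactly the paper's approach, and for $k \ge n$ it is correct. The gap is that you decline to treat $k < n$, calling $k \ge n$ ``the standard regime''; the proposition carries no such assumption, and the paper handles both cases.

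The fix is short. The concern you raise---players who receive $\emptyset$ in $\cal A$ but a good in $\cal A^*$---does not actually arise: the set of players receiving a good is the \emph{same} under Algorithm~\ref{algo:unit-demand-indirect} and under any equilibrium, namely the top $k$ players by budget. For the algorithm this is immediate from the loop order. For the equilibrium side, if some player $j \le k$ received nothing while some $i > k$ received a good $g$, then $p_g \le b_i < b_j$, so $j$ can afford $g$ and (since all single-item values are strictly positive when $\sigma$ is finite) strictly prefers it to $\emptyset$, contradicting equilibrium. Hence both sums in $\ER_v(\cal A)$ range effectively over players $1,\dots,k$, and your pointwise bound $v_i(A_i^*) \le \sigma\, v_i(A_i)$ on those players finishes the proof exactly as in your $k \ge n$ argument.
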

\begin{proof}
If a good is valued at $0$ by some player, then $\sigma$ is undefined. So, we only consider the case where all goods have a non-zero valuation.

Let $\cal A^*$ be the socially optimal allocation. When $n \ge k$, in the optimal equilibrium allocation, the top $k$ players budget-wise get one good each and the rest of the players get nothing. Indirect learning also allocates a good to each of the top $k$ players budget wise. 
If $g_i$ is the good allocated to player $i$ in the algorithm, then the value to player $i$ in the optimal equilibrium allocation is bounded by $\sigma v_i(g_i)$. This gives us the efficiency $\ER_v(\cal A)$,
\begin{align*}
    \frac{\sum_{i = 1}^{n} v_i(A_i)}{\sum_{i = 1}^{n} v_i(A_i^*)} 
    = \frac{\sum_{i = 1}^{k} v_i(A_i)}{\sum_{i = 1}^{k} v_i(A_i^*)} 
    \ge \frac{\sum_{i = 1}^{k} v_i(A_i)}{\sum_{i = 1}^{k} \sigma v_i(A_i)} 
    = \frac{1}{\sigma}
\end{align*}
Similarly, when $n \le k$, every player gets one good and some goods may be left unallocated. Indirect learning also allocates a good to each player. 
If $g_i$ is the good allocated to player $i<n$ in the algorithm, then the value to player $i$ in the optimal equilibrium allocation is bounded by $\sigma v_i(g_i)$. This gives us
\begin{align*}
    \ER_v(\cal A)
    = \frac{\sum_{i = 1}^{n} v_i(A_i)}{\sum_{i = 1}^{n} v_i(A_i^*)} 
    \ge \frac{\sum_{i = 1}^{n} v_i(A_i)}{\sum_{i = 1}^{n} \sigma v_i(A_i)} 
    = \frac{1}{\sigma}
\end{align*}
\end{proof}

The main drawback with such an algorithm is that it does not output a PAC Equilibrium. 
Consider the example below:
\begin{example}\label{ex:unitdemandindirect}
Consider a setting where $N =  \{1,2\}$ and $G = \{g_1,g_2,g_3\}$. Player budgets are $b_1 = 2,b_2 = 1$. Player valuations satisfy
\begin{align}
 v_1(\{g_1, g_2\}) = 5;\; &v_1(\{g_3\}) = 3 \notag \\
v_2(\{g_1, g_2\}) = 4 ;\; & v_2(\{g_3\}) = 2 \notag 
\end{align}
We observe a distribution $\cal D$ which samples uniformly at random two sets: $\{g_1,g_2\}$ and $\{g_3\}$. We can thus reasonably assume that we observe both bundles with high probability after a small number of i.i.d. samples. 
Approximating preferences would yield:
\begin{align*}
    \bar{v}_1(\{g_1\}) = \bar{v}_1(\{g_2\}) = 5;\; & \bar{v}_1(\{g_3\}) = 3\\
    \bar{v}_2(\{g_1\}) = \bar{v}_2(\{g_2\}) = 4;\; & \bar{v}_2(\{g_3\}) = 2
\end{align*}
A valuation-approximating algorithm allocates one item from $g_1,g_2$ to player 1 and the other to player 2, and allocates $g_3$ to player 2. We set the price of $g_1$ to $2$ and the price of $g_2$ to $1$. Assume w.l.o.g. that $g_1$ is assigned to player 1; it is possible that $v_1(\{g_1\}) = 0$ and $v_1(\{g_2\}) = 5$, in which case player 1 demands $g_3$. In that case, the probability of observing a sample (namely $\{g_3\}$) which player~$1$ demands is $\frac12$, not an arbitrarily low $\varepsilon>0$, so this approach does not yield an $\varepsilon$-PAC equilibrium.
\end{example}
The bad result in Example \ref{ex:unitdemandindirect} is not due to some intrinsic failure of the valuation-approximating algorithm; it is impossible to learn a consistent underestimate of a unit demand valuation. Consider again the setting in Example \ref{ex:unitdemandindirect}: it is impossible to determine whether $v_1(\{g_1\}) = 5$ or $v_1(\{g_2\}) = 5$; indeed, the only viable underestimate sets both items' values to $0$. However, doing so yields $\bar{v}_1(\{g_1,g_2\}) = 0 < v_1(\{g_1,g_2\})$, an inconsistency.
To conclude, the indirect approach does not yield a PAC Equilibrium.

Let us turn our attention to directly learning PAC market outcomes from samples. 
We refer to this method as \emph{direct solution learning}, and any outcome computed from this method as a directly learned equilibrium. 
Algorithm \ref{algo:unit-demand-market} directly learns a PAC equilibrium in the unit-demand setting. It iterates over all players in decreasing order of budget, and allocates the smallest bundle of goods from all available goods with the highest possible value. 
We use two properties of unit demand valuations, formalized in the following lemma.
\begin{restatable}{lemma}{lemunitdemandproperties}\label{lem:unitdemand-properties}
Given two bundles of goods $S,T \subseteq G$ and some player $i \in N$ with unit demand valuations, if no two goods have the same value for $i$ then
\begin{enumerate}
    \item\label{STequal} If $v_i(S) = v_i(T) = c$ then $v_i(S\cap T) = c$ as well.
    \item\label{STdifferent}If $v_i(S) > v_i(T)$ then $v_i(S) = v_i(S\setminus T)$
\end{enumerate}
\end{restatable}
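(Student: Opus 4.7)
The plan is to unpack the definition $v_i(S) = \max_{g \in S} v_i(\{g\})$ in both parts and exploit the distinctness of singleton values to identify unique maximizers.

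For part \ref{STequal}, I would let $g_S \in \arg\max_{g \in S} v_i(\{g\})$ and $g_T \in \arg\max_{g \in T} v_i(\{g\})$, so that $v_i(\{g_S\}) = v_i(\{g_T\}) = c$. Because no two goods share the same singleton value, this forces $g_S = g_T$; call this common good $g^*$. Since $g^* \in S \cap T$, I get $v_i(S \cap T) \ge v_i(\{g^*\}) = c$. The reverse inequality $v_i(S \cap T) \le v_i(S) = c$ is immediate from $S \cap T \subseteq S$, and part \ref{STequal} follows.

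For part \ref{STdifferent}, I would let $g^* \in \arg\max_{g \in S} v_i(\{g\})$, so $v_i(S) = v_i(\{g^*\})$. I claim $g^* \notin T$: otherwise $v_i(T) \ge v_i(\{g^*\}) = v_i(S)$, contradicting $v_i(S) > v_i(T)$. Hence $g^* \in S \setminus T$, which gives $v_i(S \setminus T) \ge v_i(\{g^*\}) = v_i(S)$. The reverse inequality $v_i(S \setminus T) \le v_i(S)$ again follows from $S \setminus T \subseteq S$, closing part \ref{STdifferent}.

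There is no substantive obstacle here; the only point to be careful about is that the distinctness hypothesis is genuinely needed in part \ref{STequal}. Without it, two distinct goods in $S$ and $T$ could tie at value $c$ while $S \cap T$ is empty (or contains only lower-valued goods), breaking the conclusion. Part \ref{STdifferent}, by contrast, uses only the max-structure of unit demand valuations, though we invoke distinctness implicitly via the choice of a unique maximizer $g^*$.
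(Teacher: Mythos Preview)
Your proposal is correct and takes essentially the same approach as the paper: both arguments track where the maximizing good sits relative to $S\cap T$ and $S\setminus T$. You work directly with the unique maximizer $g^*$, whereas the paper argues by contradiction from the decomposition $v_i(S) = \max\{v_i(S\cap T), v_i(S\setminus T)\}$, but the underlying logic is identical.
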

\begin{proof}
Since player $i$ has a unit-demand valuation, 
\begin{align}
    v_i(S) = &\max\{v_i(S\cap T),v_i(S\setminus T)\}\\
    v_i(T) = &\max\{v_i(S\cap T),v_i(T\setminus S)\};
\end{align} 
since all items have different values, it must be the case that $v_i(S)>v_i(S\cap T)$ or $v_i(S) > v_i(S\setminus T)$. 
Suppose that $v_i(S) = v_i(T) = c$ and $c > v_i(S\cap T)$; then it must be the case that $v_i(S) = v_i(S\setminus T) =c$ and $v_i(T) = v_i(T\setminus S) = c$. However, this implies that there are two disjoint goods: $g\in S\setminus T$ and $g' \in T \setminus S$ that are equally valued by $i$, a contradiction. We have thus proven Item~\ref{STequal}. 

Similarly, if $v_i(S) > v_i(T)$ and $v_i(S) = v_i(S\cap T)$, we get that
$$
v_i(S) > v_i(T) \ge v_i(S\cap T) = v_i(S), 
$$
a contradiction. Therefore $v_i(S) = v_i(S\setminus T)$ which proves Item~\ref{STdifferent}.
\end{proof}

Using Lemma~\ref{lem:unitdemand-properties}, we identify the smallest most valued bundle $B^1_i$ for player $i$, and allocate it to the player if it contains no previously allocated items; otherwise, we remove all such samples from $\cal S$, since we know such items are already priced out of their budget by previous players, and we cannot use them to get information on the next most valued set of goods for this player. 
We continue to identify the next most valued bundle of minimal size for player $i$. 
We repeat this process until we identify the smallest subset of most valued items among unallocated items. 
If we allocate a bundle to $i$ after $t$ steps, we denote it as $B^t_i$; we then price the items in $B_i^t$ such that their total price is $b_i$. 
Note that all samples that contain $B^t_i$ have a price of $\geq b_i$, which guarantees that no player $i'>i$ can afford them.

We repeat this procedure for all players. At the end of the \textbf{\textit{for loop}} (Algorithm~\ref{algo:unit-demand-market}, line~\ref{algo-2:forloop}), we allocate any leftover goods to player $n$ for free, and assign any good which is not present in the sample set to player $1$ at a price of $0$. 

We first show that Algorithm~\ref{algo:unit-demand-market} outputs a consistent outcome. 


 \begin{algorithm}[tbh]
 \DontPrintSemicolon
 \SetAlgoLined
 \LinesNumbered
 \SetNlSty{}{}{:}
 \KwIn{A set of samples $\cal S$; player valuations over samples $v(\cal S)$ and budgets $b_1 > \dots>b_n$}
 $\Alloc \gets \emptyset$\;
 Allocate unobserved goods to player $1$ at price 0\;
 \For{$i \gets 1 \textbf{ to } n$}{\label{algo-2:forloop}
     $\cal S^1_i\gets \mathcal{S}$; $c \gets \False$; $t\gets 1$\;
     \While {$c=\False$} {\label{line7}
         $C^t_i \gets$ some set in $\argmax_{T \in \cal S^t_i} v_i(T)$\;
         $\cal L^t_i\gets\{T\in \cal S^t_i|v_i(T)=v_i(C^t_i)$\}\;
         $B^t_i \gets \bigcap_{T \in \cal L^t_i} T$\;
         $B^t_i \gets B^t_i \setminus \bigcup_{T \in \cal S \mid v_i(T)<v_i(C^t_i)}T$\;
         \If {$B^t_i\cap \Alloc\neq \emptyset $} {\label{line-12}
             $t \gets t + 1$; $\cal S^t_i \gets  \cal S^{t-1}_i\setminus{\cal L^t_i}$\;\label{lineSChange}
             }
         \Else {
             $c\gets \True$; $\Alloc\gets \Alloc\cup B^t_i$\;
             $A_i \gets B^t_i$ and price of each $g\in B^t_i$ is $\frac{b_i}{|B_i|}$\;
             }
         }
     
 }
 Allocate the leftover goods to player $n$ at price $0$\;
 \caption{Directly Learning Equilibria for Unit Demand Valuations}
\label{algo:unit-demand-market}
\end{algorithm}

\begin{theorem}\label{thm:unitdemand-consistency}
Algorithm \ref{algo:unit-demand-market} outputs a consistent market outcome.
\end{theorem}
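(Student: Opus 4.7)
The plan is to show via the contrapositive that for every sample $S_j\in\cal S$ and every player $i$, if $v_i(S_j)>v_i(A_i)$ then $S_j\notin D_i(\vec p,b_i)$; this gives $\hat L(\cal A,\vec p)=0$. Let $t^*_i$ denote the iteration at which player $i$'s while loop terminates, so $A_i=B^{t^*_i}_i$, and let $v^*_i:=v_i(C^{t^*_i}_i)$, which by Lemma~\ref{lem:unitdemand-properties} also equals $v_i(A_i)$. Since the sequence $v_i(C^1_i)>v_i(C^2_i)>\cdots$ is strictly decreasing and $v^*_i$ is the maximum of $v_i$ over $\cal S^{t^*_i}_i$, the assumption $v_i(S_j)>v_i(A_i)$ forces $S_j\in\cal L^t_i$ for some $t<t^*_i$. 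At that iteration the If branch fired, so $B^t_i\cap\Alloc\neq\emptyset$, and $S_j\supseteq\bigcap\cal L^t_i\supseteq B^t_i$ yields some $g^*\in S_j\cap A_{i'}$ with $i'<i$.

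The technical core is a lemma I plan to prove by downward induction on the player index: whenever $g\in A_{i'}\cap S_j$ for some $i'<i$, there exists $i^{(k)}\le i'$ with $A_{i^{(k)}}\subseteq S_j$. Because $g\in A_{i'}=B^{t^*_{i'}}_{i'}$, the ``$\setminus\bigcup_{T:v_{i'}(T)<v^*_{i'}}T$'' step in the construction of $B^{t^*_{i'}}_{i'}$ forbids $g$ from lying in any sample of $v_{i'}$-value strictly below $v^*_{i'}$, so $v_{i'}(S_j)\ge v^*_{i'}$. In the equality case $S_j\in\cal L^{t^*_{i'}}_{i'}$, hence $A_{i'}\subseteq\bigcap\cal L^{t^*_{i'}}_{i'}\subseteq S_j$ and the recursion stops. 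In the strict case $S_j$ was itself removed at some earlier iteration $t''$ of $i'$'s loop, so the If-branch trigger there produces $g'\in B^{t''}_{i'}\cap A_{i''}$ with $i''<i'$; since $S_j\in\cal L^{t''}_{i'}$ we get $g'\in B^{t''}_{i'}\subseteq S_j$, so the hypothesis applies to $(i'',S_j)$ and the recursion continues with a strictly smaller player index.

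Because the player indices strictly decrease, the chain terminates in at most $i$ steps; in the worst case it reaches player $1$, whose loop terminates at $t=1$ with $v^*_1$ equal to the global $v_1$-maximum over $\cal S$, so only the equality branch is possible and $A_1\subseteq S_j$ must hold. Once we have $A_{i^{(k)}}\subseteq S_j$ for some $i^{(k)}<i$, the per-good price $b_{i^{(k)}}/|A_{i^{(k)}}|$ assigned by the algorithm gives $\sum_{g\in S_j}p(g)\ge b_{i^{(k)}}>b_i$ by the strict ordering of budgets, so $S_j\notin D_i(\vec p,b_i)$.

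The main obstacle is that a single conflict good $g^*\in A_{i'}\cap S_j$ is not by itself enough to force $A_{i'}\subseteq S_j$, since $A_{i'}$ may contain other goods absent from $S_j$; the recursive argument closes this gap by exploiting the purification step in the definition of $B^{t^*_{i'}}_{i'}$ twice---once to certify $v_{i'}(S_j)\ge v^*_{i'}$, and then to either terminate the recursion in the equality case or to hand off the argument to a strictly smaller-indexed player via the conflict at iteration $t''$ of $i'$'s loop. Finally, the leftover/unobserved goods are priced at $0$ and so cannot violate affordability, meaning no extra work is needed to handle them.
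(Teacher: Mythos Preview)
Your approach is essentially the same as the paper's: both argue that if $v_i(S_j)>v_i(A_i)$ then $S_j$ must fully contain some earlier player's allocated bundle $B^{t^*_{i'}}_{i'}$ with $i'<i$, hence has price at least $b_{i'}>b_i$. The paper simply asserts the key implication (``if $S\cap\Alloc\ne\emptyset$ then $S$ must contain some previously allocated bundle $A_{i'}$'') without justification; your downward-induction lemma is precisely the missing argument, and it is correct. One small slip: writing $A_i=B^{t^*_i}_i$ is inaccurate for $i\in\{1,n\}$ because of the unobserved/leftover goods, but this is harmless---your price argument only needs $B^{t^*_{i'}}_{i'}\subseteq S_j$ (which already gives total price $b_{i'}$), and on the value side monotonicity of unit-demand valuations gives $v_i(A_i)\ge v_i(B^{t^*_i}_i)=v^*_i$, so $v_i(S_j)>v_i(A_i)$ still forces $v_i(S_j)>v^*_i$ as you use.
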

\begin{proof}
Let the output of Algorithm \ref{algo:unit-demand-market} be $(\cal A,\vec p)$. 
Let us assume there is a sample $S\in \cal S$ such that for some player $i$, $v_i(S) > v_i(A_i)$; we need to show that $S \notin D_i(\vec p,b_i)$. 
Consider the items not available to player $i$ when it is their turn to select a bundle, referred to as $\Alloc$ in Algorithm \ref{algo:unit-demand-market}. If $S \cap \Alloc \ne \emptyset$, then $S$ must contain some previously allocated bundle $A_{i'}$, where $b_{i'} > b_i$; thus the price of $S$ is greater than $b_i$, and $S$ is not demanded by $i$. 
If $S$ can be allocated to player $i$ and is one of the most valued bundles at the time, player $i$ selects their bundle (i.e., $S \in \cal L^t_i$), then $B^t_i \subseteq S$; in particular, $v_i(S) = v_i(B^t_i)$. Otherwise, $v_i(B^t_i) > v_i(S)$ therefore $v_i(A_i)\geq v_i(S)$ and $i$ would not demand $S$.
\end{proof}

While Algorithm \ref{algo:unit-demand-market} outputs a consistent outcome, it offers an efficiency guarantee of $\frac{1}{\min\{n, k\}}$, under the minor assumption that player valuations are normalised with respect to their budget (i.e., $\max_{g \in G} v_i(\{g\}) = b_i$ for all $i \in N$).
\begin{restatable}{prop}{propefficiencyarbitrarydistribution}\label{prop:efficiencyarbitdistribution}
If for all $i \in N$, $\max_{g \in G} v_i(\{g\}) = b_i$, Algorithm \ref{algo:unit-demand-market} outputs an allocation $(\cal A, \vec{p})$ with $\ER_v(\cal A) \ge \frac{1}{\min\{n, k\}}$.
\end{restatable}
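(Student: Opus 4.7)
The plan is to lower-bound the algorithm's welfare by $b_1$ (the highest budget) and upper-bound the optimal welfare by $\min\{n,k\}\cdot b_1$, from which the ratio $\frac{1}{\min\{n,k\}}$ follows immediately.

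First I would handle the upper bound on $\mathrm{OPT}=\sum_i v_i(A_i^*)$. In any welfare-maximizing Walrasian allocation under unit-demand valuations, we may assume WLOG that each player receives at most one good (extra goods contribute nothing to a unit-demand player's utility). Hence at most $\min\{n,k\}$ players receive a non-empty bundle, and for each such player $v_i(A_i^*)\le \max_g v_i(\{g\}) = b_i \le b_1$ by the normalization hypothesis. Thus $\mathrm{OPT}\le \min\{n,k\}\cdot b_1$.

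Next, the heart of the argument is showing that player $1$'s allocation in Algorithm~\ref{algo:unit-demand-market} satisfies $v_1(A_1)\ge b_1$. Let $g_1^*$ be player $1$'s uniquely most-valued good (unique by the distinct-values assumption), so $v_1(\{g_1^*\})=b_1$. If $g_1^*$ is unobserved, it is allocated to player $1$ at the start, and unit-demand gives $v_1(A_1)\ge b_1$. Otherwise, I would trace through the first iteration of the \textbf{while} loop: since some sample contains $g_1^*$ and no sample can exceed value $b_1$, we have $v_1(C_1^1)=b_1$, and $\cal L_1^1$ is exactly the set of samples with value $b_1$. Because $g_1^*$ is the unique good of value $b_1$, every $T\in\cal L_1^1$ must contain $g_1^*$, so $g_1^*\in\bigcap_{T\in\cal L_1^1}T$. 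Conversely no sample of strictly smaller value can contain $g_1^*$, so $g_1^*$ survives the subtraction and lies in $B_1^1$. Since $\Alloc=\emptyset$ when player $1$ is processed, the check $B_1^1\cap\Alloc=\emptyset$ passes and $A_1\supseteq B_1^1\ni g_1^*$, yielding $v_1(A_1)\ge b_1$.

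Combining the two bounds gives
\begin{align*}
\ER_v(\cal A)=\frac{\sum_{i=1}^n v_i(A_i)}{\sum_{i=1}^n v_i(A_i^*)}\ge \frac{v_1(A_1)}{\min\{n,k\}\cdot b_1}\ge \frac{1}{\min\{n,k\}},
\end{align*}
using that the other $v_i(A_i)$ terms are non-negative. The only delicate step is verifying $g_1^*\in B_1^1$ via the set operations in the algorithm; this is exactly where the distinct-values assumption on unit-demand valuations (formalized earlier in Lemma~\ref{lem:unitdemand-properties}) does the work, since without distinctness a sample of value $b_1$ could avoid $g_1^*$ and the intersection argument would fail.
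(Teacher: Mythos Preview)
Your proposal is correct and follows essentially the same approach as the paper's proof: lower-bound the algorithm's welfare by $b_1$ via player~1's allocation, upper-bound the optimal equilibrium welfare by $\min\{n,k\}\cdot b_1$, and take the ratio. The only difference is that you spell out in detail why $g_1^*\in A_1$ by tracing the first \textbf{while} iteration, whereas the paper simply asserts that player~1 receives their best possible allocation; your added justification is sound and arguably fills a gap the paper leaves implicit.
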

\begin{proof}
Irrespective of the samples and the distribution, Algorithm \ref{algo:unit-demand-market} ensures that the player with the highest budget (Player 1) gets their best possible allocation i.e. $v_1(A_1) = b_1$. 
When $n \le k$, Given the normalisation w.r.t. player budgets, the utility of the optimal equilibrium allocation has to be less than the sum of all the budgets i.e. $\sum_{i\in N}v_i(A^*_i) \le \sum_{i \in N}b_i \le nb_1$. When $n > k$, only $k$ players can get a good which means the upper bound on the utility of the optimal equilibrium allocation will be $\le kb_1$. From this the upper bound on the utility of the optimal equilibrium allocations will be $\le \min\{k, n\}b_1$
Therefore $\ER_v(\cal A)$ is 
\begin{align}\frac{\sum_{i\in N}v_i(A_i)}{\sum_{i\in N}v_i(A^*_i)} \ge \frac{b_1}{\sum_{i\in N}v_i(A^*_i)} \ge \frac{b_1}{\min\{n, k\}b_1} = \frac{1}{\min\{n,k\}}\end{align}
\end{proof}

Proposition \ref{prop:efficiencyarbitdistribution} offers a rather weak bound: the same efficiency ratio can be achieved by allocating all goods to the player with the highest budget. 
However, the bound is tight, and is an outcome of ``bad" distributions. We show that there exists sample sets for which no allocation can guarantee an efficiency greater than $\frac{1}{\min\{n, k\}}$. 
\begin{restatable}{theorem}{thmUDinfobounds}\label{thm:UD-info-bounds}
Let $\cal S, v(\cal S)$ be a set of samples along with its valuations; let $\cal V$ be the set of unit demand valuation profiles consistent with the set of samples and are budget normalised (i.e., $\max_{g \in G} v_i(\{g\}) = b_i$ for all players $i \in N$) and let $\cal B \subset \R_+^n$ be the set of all feasible budgets, i.e., the set of all budgets in $\mathbb{R}^n_{+}$ such that $b_1 > b_2 > \dots > b_n$. Then, we have 
\begin{align*}
    \min_{v \in \cal V} \max_{\cal A} \min_{\cal S \subseteq 2^G, \vec{b} \in \cal B} \ER_v(\cal A) \le \frac{1}{\min\{n, k\} - \delta}
\end{align*}
for any $\delta \in (0, n)$ where $\cal A$ is a consistent allocation with respect to the samples.
\end{restatable}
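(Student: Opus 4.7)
The approach is to exhibit a single adversarial sample whose reported values are insufficient to distinguish among many unit-demand valuations, and then to show that for every consistent allocation one can pick a valuation in $\cal V$ making the efficiency small. Set $m = \min\{n,k\}$, take the one-sample set $\cal S = \{G\}$ with reported values $v_i(G) = b_i$ for every player $i$, and choose budgets $b_1 > \cdots > b_n$ clustered tightly enough that $\sum_{i=1}^m b_i \ge (m-\delta) b_1$; for example, $b_i = 1 - (i-1)\gamma$ with $\gamma$ small enough works. Every budget-normalized unit-demand profile automatically satisfies $v_i(G) = \max_g v_i(\{g\}) = b_i$, so the set $\cal V$ of profiles consistent with the sample is very rich.

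Fix any allocation $(\cal A, \vec p)$ consistent with $\cal S$. Disjointness of the $A_i$'s forces at most one player $i^*$ to have $A_{i^*} = G$, in which case $A_j = \emptyset$ for every $j \ne i^*$. I use Hall's marriage theorem to select distinct goods $g^*_1, \dots, g^*_m \in G$ with $g^*_i \in G \setminus A_i$ whenever $A_i \ne G$: for any two players $i \ne j$ in $\{1, \dots, m\}$, disjointness of $A_i, A_j$ yields $(G \setminus A_i) \cup (G \setminus A_j) = G$, and for any singleton $i$ with $A_i \ne G$ we have $|G \setminus A_i| \ge 1$. (If some $A_{i^*} = G$ exists, one picks $g^*_{i^*}$ arbitrarily in $G$ and the remaining $g^*_j$'s distinctly from the other $k-1$ goods, possible since $k - 1 \ge m - 1$.) Define the adversarial valuation by $v_i(g^*_i) = b_i$ and $v_i(g) = \eta$ for $g \ne g^*_i$, with $\eta > 0$ arbitrarily small; this profile is unit-demand, budget-normalized, and satisfies $v_i(G) = b_i$, so it lies in $\cal V$.

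Under $v$, player $i$ obtains value at most $\eta$ from $A_i$ whenever $g^*_i \notin A_i$, i.e.\ whenever $A_i \ne G$; only the possible $i^*$ enjoys value $b_{i^*} \le b_1$. Hence the total welfare of $\cal A$ is at most $b_1 + n\eta$. Allocating $g^*_i$ to player $i$ for each $i \le m$ at zero prices is a Walrasian equilibrium for $v$ (each player receives a bundle in their demand set, since at zero prices everything is affordable and their unique favorite $g^*_i$ is included), and this equilibrium achieves welfare $\sum_{i=1}^m b_i$, which is optimal because each player's value is capped at $b_i$ and only $m$ players can be assigned distinct goods. Thus $\ER_v(\cal A) \le \frac{b_1 + n\eta}{\sum_{i=1}^m b_i}$, which by the budget choice and by taking $\eta$ sufficiently small is at most $\frac{1}{m - \delta}$.

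The main obstacle is the matching step: the adversary must \emph{simultaneously} drive $v_i(A_i)$ to be negligible for every player and arrange its ``ideal'' assignment to use $m$ distinct goods, so that the optimal welfare really does reach $\sum_{i=1}^m b_i$. Pairwise disjointness of the $A_i$'s is precisely what makes Hall's condition hold, and it is what couples the two bounds: at most one player can be ``locked in'' by receiving all of $G$, while every other player's allocation leaves enough room for the adversary to place $g^*_i$ outside $A_i$ and still have $m$ distinct favorites overall.
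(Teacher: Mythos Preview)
Your proof follows the same overall strategy as the paper's: take the single sample $\cal S=\{G\}$ with reports $v_i(G)=b_i$, choose budgets clustered near $b_1$ so that $\sum_{i=1}^{m}b_i\ge (m-\delta)b_1$, and then, for any consistent allocation $\cal A$, exhibit a profile in $\cal V$ whose optimal equilibrium welfare is $\sum_{i=1}^{m}b_i$ while $\cal A$ attains at most $b_1$ (plus a vanishing term). The paper does the last step by a two-case split: either all goods go to one player (welfare $\le b_1$), or several players $i_1,\dots,i_{n'}$ receive nonempty bundles, in which case the paper cyclically places $i_j$'s favourite good inside $A_{i_{j+1}}$, driving the welfare of $\cal A$ to $0$. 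Your Hall-matching argument is a clean unification of these two cases: pairwise disjointness of the $A_i$ is precisely what makes Hall's condition trivial for $|I|\ge 2$, and the single-recipient case is absorbed as the degenerate branch $A_{i^*}=G$. The paper sets non-favourite values to $0$ whereas you use a small $\eta>0$; either works.

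There is, however, a genuine (though easily repaired) gap when $n>k$, i.e.\ $m=k$. First, you only define $v_i$ and $g_i^*$ for $i\le m$, but budget normalization requires every player $i>k$ to also have some good valued $b_i$. Second, and more seriously, your claim that allocating $g_i^*$ to player $i$ for $i\le m$ \emph{at zero prices} is a Walrasian equilibrium fails here: at zero prices each player $j>k$ can afford any good (worth at least $\eta>0$, and $b_j$ for her favourite) yet is assigned $\emptyset$, so she is not receiving a demand-maximizing bundle. The fix is straightforward: assign each player $j>k$ an arbitrary favourite good in $G$, and price each $g_i^*$ (for $i\le k$) at $b_i$ rather than $0$. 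Then every $j>k$ has $b_j<b_k$ and cannot afford any single good, so $\emptyset$ is optimal for her; and each $i\le k$ can afford only goods $g_j^*$ with $j\ge i$, among which $g_i^*$ is uniquely best. With this pricing the equilibrium claim holds and the remainder of your argument goes through unchanged.
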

\begin{proof}
Consider a market with $n$ players and $k$ goods. Define a set of unit demand valuation function profiles $\cal V'$ as follows: each player has one good for which $v_i(g) = b_i$ and every other good has value $0$ for this player. We refer to the good with non-zero valuation as the favourite good of player $i$. Also, let no two players in the top $\min\{n, k\}$ players budget wise have the same favourite good. This set of valuations profiles satisfies our budget normalisation condition. 

Define the budget vector $\{b_1, b_2, \dots, b_n\}$ as follows: for every player $b_i = b_1 - \delta_i$ where $\delta_1 = 0$, $0 < \delta_2 < \delta_3 < \dots < \delta_n$ and $\sum_{i \in N} \delta_i = \delta b_1$. Let us call this vector of budgets $\vec{b'}$.

Now, suppose the only sample we have is the set of goods $G$ ($\cal S = \{G\}$) and $v_i(G) = b_i$ for all $i \in N$. This satisfies our budget normalisation condition and is consistent with all the valuation function profiles in $\cal V'$.

Note that for any valuation profile in $v \in \cal V'$, the best equilibrium allocation is where the top $\min\{n, k\}$ players get their favourite good. This allocation gives us a total value of 
\begin{align}
    \sum_{i \in N} v_i(A^*_i) = \sum_{i = 1}^{\min\{n, k\}} b_i  &= \min\{n, k\}b_1 - \sum_{i = 1}^{\min\{n, k\}} \delta_i \notag \\
    &\ge \min\{n, k\}b_1 - \delta b_1 \label{eqn:UD-info-bound-optimal-utility}
\end{align}

Suppose the allocation $\cal A$ allocates all the goods to one player. The maximum total welfare that $\cal A$ can guarantee is $b_1$ and this arises when the entire bundle is allocated to player $1$. Allocating the entire bundle to any other player will give us a strictly lower utility since all other players have a lower budget. This allocation gives us an efficiency 
\begin{align*}
    \min_{v \in \cal V} ER_v(\cal A) \le \frac{b_1}{\min\{n, k\}b_1 - \delta b_1} = \frac{1}{\min\{n, k\} - \delta}
\end{align*}
since the maximum utility that the optimal equilibrium allocation can obtain among all the valuation function profiles consistent with $\cal S$ is lower bounded by Equation \eqref{eqn:UD-info-bound-optimal-utility}.

If this is not the case and $\cal A$ allocates goods to more than one player, then we show that the maximum utility that $\cal A$ can guarantee is $0$. Let $\cal A$ allocate non-empty bundles to all players in $\{i_1, i_2, \dots, i_{n'}\}$. Therefore, the bundles $\{A_{i_1}, A_{i_2}, \dots, A_{i_{n'}}\}$ are non-empty. There exists a valuation function profile in $\cal V'$ such that the favourite good of $i_1$ is in $A_{i_2}$, the favourite good of $i_2$ is in $A_{i_3}$ and so on till finally, the favourite good of $i_{n'}$ is in $A_{i_1}$. All the players in $\{i_1, i_2, \dots, i_{n'}\}$ have different favourite goods here implying that all the players in $\{i_1, i_2, \dots, i_{n'}\}$ which are in the top $\min\{n, k\}$ players budget wise have different favourite goods. For those players in the top $\min\{n, k\}$ budget wise who are not allocated any goods, we can set their favourite good such that no two players in the top $\min\{n, k\}$ budget wise have the same favourite good. This valuation profile is in $\cal V'$ and is consistent with $\cal S$. The optimal equilibrium utility in this case is non-zero trivially and therefore the efficiency guaranteed by this allocation is $0$.

This means, given the set of samples and the set of budgets as defined above, we cannot guarantee an efficiency greater than $\frac{1}{\min\{n, k\} - \delta}$. This means that 
\begin{align*}
    \min_{v \in \cal V} \max_{\cal A} \min_{\cal S \subseteq 2^G, b \in \cal B} \ER_v(\cal A) \le 
    \min_{v \in \cal V} \max_{\cal A} \min_{\cal S = \{G\}, \vec{b} = \vec{b'}} \ER_v(\cal A) \le \frac{1}{\min\{n, k\} - \delta}
\end{align*}
\end{proof}

While Algorithm~\ref{algo:unit-demand-market} offers no reasonable welfare guarantees for general distributions, its performance guarantees improve significantly under certain distributional assumptions. 
Specifically, this holds true if $\cal D$ is a product distribution with a bounded probability of sampling each good. 
A product distribution $\cal D$ over $G$ is a distribution for which there exist values $p_1,\dots,p_k \in [0,1]$ such that for every $S \subseteq G$, $\Pr_{\cal D}[S] = \prod_{g_j \in S}p_j$. 
Product distributions offer more amenable welfare guarantees for two reasons: first, by definition, the presence of a particular good in the sample is independent of the presence of any other good (offering us a better chance of observing players' valuations for individual items); second, goods are sampled with non-zero probability (thus we observe all goods in some bundle with high probability). 
 Theorem~\ref{thm:prod_improved_bound} shows that Algorithm~\ref{algo:unit-demand-market} outputs a PAC equilibrium with an efficiency ratio of 1 with exponentially high probability, when samples are drawn i.i.d. from a product distribution; the proof requires that player preference orders over items are sufficiently distinct. 
 Before we prove Theorem~\ref{thm:prod_improved_bound}, we present two technical results -- Lemma~\ref{lem:best_eqm} and Lemma~\ref{lem:prodD-expbound} -- which we use to prove Theorem~\ref{thm:prod_improved_bound}. 
 \begin{restatable}{lemma}{lembestegm}\label{lem:best_eqm}
In unit demand markets with unequal budgets and strict preferences over items, any equilibrium allocation assigns player $i$ the best possible available good, i.e., $\{g_i^*\}$ equals $\argmax_{g\in \cal G_i} v_i(g)$ ($\cal G_1 = G$ and for $i > 1$, $\cal G_i = G \setminus \{\bigcup_{l=1}^{i-1}\cal \{g_l^*\}\}$). Moreover, all equilibria have the same social welfare $\sum_{i}v_i(g_i^*)$.
\end{restatable}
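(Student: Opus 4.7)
I plan to prove this by strong induction on the player index $i$, taken in decreasing order of budget. At each step I will show that in any equilibrium $(\cal A, \vec p)$ we have $g_i^* \in A_i$, equivalently $v_i(A_i) = v_i(g_i^*)$, since under unit demand a bundle's value equals that of its best-valued element; the ``same social welfare'' conclusion then follows by summing these equalities. The inductive hypothesis will be that $g_j^* \in A_j$ for every $j < i$, which immediately implies that the set of goods still potentially available to player $i$ is exactly $\cal G_i = G \setminus \{g_1^*, \ldots, g_{i-1}^*\}$.

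For the base case ($i = 1$), the equilibrium condition gives $v_1(A_1) = \max_{g : p_g \le b_1} v_1(g)$. I would argue that $p_{g_1^*} \le b_1$: otherwise $g_1^*$ is unaffordable to every player (because $b_1$ is the largest budget) and so must be left unallocated, but the standard Walrasian convention allows the price of an unallocated good to be dropped to zero without changing any demand set, contradicting $p_{g_1^*} > b_1$. Hence $v_1(A_1) \ge v_1(g_1^*) = \max_g v_1(g)$, and strict preferences force $g_1^* \in A_1$. The inductive step for $i > 1$ repeats this argument on the restricted market over $\cal G_i$: if $p_{g_i^*} > b_i$ then $g_i^*$ is unaffordable to every player $j \ge i$, and the only way it could appear in some $A_j$ with $j < i$ is as a ``zero-marginal'' extra (by IH, player $j$'s unit-demand value is already attained by $g_j^* \in A_j$, so $j$ is indifferent to carrying $g_i^*$ alongside it); invoking the same convention eliminates this case. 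Thus $p_{g_i^*} \le b_i$, giving $v_i(A_i) \ge v_i(g_i^*)$, while the inductive hypothesis ensures no good in $\cal G_i$ exceeds this value, so equality and $g_i^* \in A_i$ follow from strict preferences.

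The main obstacle I expect is precisely the handling of these degenerate configurations -- equilibria in which a good is priced above every budget and left unallocated, or in which a higher-budget player hoards a good that contributes nothing to their own unit-demand utility but blocks a lower-budget player. Both scenarios are disposed of by the standard Walrasian convention that unused or zero-marginal items do not command positive prices, which is consistent with the equilibrium definition used in the paper and is what allows both the inductive step and the equal-welfare conclusion to extend to every equilibrium, not merely a canonical one.
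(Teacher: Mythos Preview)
Your proposal is correct and follows essentially the same inductive argument as the paper: for each player $i$ in decreasing order of budget, show that $g_i^*$ is affordable to player $i$ (because if $i$ does not hold it, it is either unallocated---hence priced at zero---or held by a lower-budget player---hence priced below $b_i$), so the equilibrium condition forces $g_i^* \in A_i$ and $v_i(A_i)=v_i(g_i^*)$. Your explicit treatment of the two degenerate configurations (an unallocated good with positive price, and a higher-budget player hoarding $g_i^*$ as a zero-marginal extra) is in fact more careful than the paper's own proof, which in the inductive step simply asserts that $g_i^*$ must be held by some $i'>i$ without discussing these cases; both proofs ultimately lean on the same market-clearing convention you flag.
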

(Proof in Appendix~\ref{apdx:unit-demand})\\

In Lemma~\ref{lem:best_eqm}, we show that the social welfare for any equilibrium for unit demand players is unique and each player $i$ gets the good $g_i^*$. 
Therefore to show that the efficiency of Algorithm~\ref{algo:unit-demand-market} is 1 with high probability, it is sufficient to show that Algorithm~\ref{algo:unit-demand-market} assigns $g_i^*$ for all $i$ with high probability.

We now present Lemma~\ref{lem:prodD-expbound}, in which we prove that for any player $i$, if $\cal S^t_i$ at $t-$th iteration of the \textbf{\textit{while loop}} in Algorithm~\ref{algo:unit-demand-market} contains more than $k^2$ samples then the corresponding $B_i^t$ contains only the best available good for player $i$ in $\bigcup \limits_{S\in \cal S^t_i }S$, with high probability.    
\begin{restatable}{lemma}{lemprodDexpbound}\label{lem:prodD-expbound}
Suppose that $\cal D$ is a product distribution such that for all $g \in G$, $1 - \sqrt{2\me^{-1/k}-1} < \Pr_{S \in \cal D}(g \in S) < \frac{1}{2} + \frac{\sqrt{2\me^{-1/k}-1}}{2}$. If $|\cal S^t_i|\geq k^2$ (at the $t-$th iteration of the \textbf{\textit{while loop}} in Algorithm~\ref{algo:unit-demand-market} for player $i$), the corresponding $B_i^t$ equals $\{\hat{g}_i\}$ to player $i$ with at least $ 1 - \me^{-\frac k2}$ probability, where
\begin{equation*}
 \hat{g}_i \in \argmax\{v_i(\{g\}):g \in \bigcup \limits_{S\in \cal S^t_i }S\}
\end{equation*}
\end{restatable}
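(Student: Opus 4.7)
My plan is to reduce $B_i^t=\{\hat g_i\}$ to a pairwise ``agreement'' condition on the samples in $\cal S_i^t$, bound that condition using the product structure of $\cal D$, and close with a union bound over pairs of goods.

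First I would give a clean structural description of $B_i^t$. Because the $v_i$-values are strict and $\hat g_i$ is the max-valued good in $\bigcup\cal S_i^t$, a sample $T\in\cal S_i^t$ satisfies $v_i(T)=v_i(\hat g_i)$ iff $\hat g_i\in T$, so $\cal L_i^t=\{T\in\cal S_i^t:\hat g_i\in T\}$. Moreover, any $T\in\cal S\setminus\cal S_i^t$ must have been removed in some earlier iteration $s<t$, so it contains the then-max good $\hat g_i^{(s)}$, which has $v_i(\hat g_i^{(s)})>v_i(\hat g_i)$; hence $v_i(T)>v_i(\hat g_i)$ and $T$ cannot enter the subtracted union. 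Consequently $B_i^t=\{\hat g_i\}$ iff for every $g'\in\bigcup\cal S_i^t\setminus\{\hat g_i\}$ some sample in $\cal S_i^t$ contains exactly one of $\{\hat g_i,g'\}$.

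Next I would bound, for a fixed pair $(g_a,g_b)$ of distinct goods, the probability that every sample in $\cal S_i^t$ contains both or neither of $\{g_a,g_b\}$. Conditioning on the earlier iterations (in particular on $\hat g_i^{(1)},\ldots,\hat g_i^{(t-1)}$), the samples in $\cal S_i^t$ are i.i.d.\ draws from $\cal D$ restricted to avoid those goods (and implicitly any good valued above $v_i(\hat g_i^{(t-1)})$); because $\cal D$ is a product distribution, this restriction preserves both the marginals $p_g$ and independence across coordinates. Each sample therefore agrees on the pair with probability $f(p_{g_a},p_{g_b})=p_{g_a}p_{g_b}+(1-p_{g_a})(1-p_{g_b})$. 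Since $f$ is bilinear with no interior critical point, it attains its maximum on the admissible box $[1-s,(1+s)/2]^2$ (with $s=\sqrt{2\me^{-1/k}-1}$) at a corner; a direct comparison of the three essentially distinct corner values shows the maximum is $f(\frac{1+s}{2},\frac{1+s}{2})=\frac{1+s^2}{2}=\me^{-1/k}$. Independence across samples together with $|\cal S_i^t|\ge k^2$ then gives a per-pair agreement probability of at most $(\me^{-1/k})^{k^2}=\me^{-k}$.

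A union bound over the at most $\binom{k}{2}$ pairs of distinct goods yields $\Pr[B_i^t\neq\{\hat g_i\}\mid |\cal S_i^t|\ge k^2]\le\binom{k}{2}\me^{-k}\le\me^{-k/2}$, where the last step uses the elementary inequality $\me^{k/2}\ge k(k-1)/2$ for every integer $k\ge 1$ (direct check on small $k$ plus monotonicity of the gap). The main hurdle is the corner analysis of $f$: since $f$ is saddle-shaped rather than monotone or concave on the box, one must compare $f(1-s,1-s)=1-2s+2s^2$, $f(\frac{1+s}{2},\frac{1+s}{2})=\frac{1+s^2}{2}$, and $f(1-s,\frac{1+s}{2})=\frac{(1-s)(1+2s)}{2}$ to pinpoint the maximum. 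The asymmetric hypothesis interval $(1-\sqrt{2\me^{-1/k}-1},\frac{1+\sqrt{2\me^{-1/k}-1}}{2})$ is calibrated precisely so this maximum equals $\me^{-1/k}$, the rate exactly needed for $k^2$ samples to drive the per-pair probability to $\me^{-k}$ and leave just enough slack to absorb the quadratic union-bound factor.
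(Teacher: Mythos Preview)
Your proposal is correct and follows essentially the same approach as the paper: reduce $B_i^t\neq\{\hat g_i\}$ to a pairwise agreement event, bound the per-sample agreement probability via the product structure, raise to the $k^2$ power, and union-bound. The minor differences are that you union-bound over all $\binom{k}{2}$ pairs (rather than the paper's $k-1$ choices of $g'$ with $\hat g_i$ fixed), and you do an explicit corner analysis of the bilinear form $f$ where the paper simply asserts the bound $\gamma^2+(1-\gamma)^2$ with $\gamma=\min(\alpha,1-\beta)$; your treatment of the conditioning on earlier iterations is also more careful than the paper's, which essentially leaves it implicit.
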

(Proof in Appendix~\ref{apdx:unit-demand})\\

We are now ready to prove Theorem~\ref{thm:prod_improved_bound}. We show that when we assume agent preferences sufficiently differ -- the good any agent gets in the optimal equilibrium allocation is in one of their top $\mathcal{O}(\log(\max\{n,k\}))$) goods -- Algorithm \ref{algo:unit-demand-market} is optimal with high probability.

\begin{restatable}{theorem}{thmprodimprovedbound}\label{thm:prod_improved_bound}
Suppose that $\cal D$ is a product distribution, such that $\Pr_{S \sim \cal D}[g\in S] \in [\alpha,\beta]$. Assume that for every agent $i$, $|\{g\in G: v_i(g) > v_i(g_i^*) \}|< \frac{\max\{\log n,\log k\}}{\log (\frac{1}{1-\beta})}$ \footnote{$g^*_i$ is defined as in Lemma~\ref{lem:best_eqm}: $\{g_i^*\}= \argmax_{g\in \cal G_i} v_i(g)$ ($\cal G_1 = G$ and for $i > 1$, $\cal G_i = G \setminus \{\bigcup_{l=1}^{i-1}\cal g_l^*\}$.}.

If $k>3$, $1 - \sqrt{2\me^{-1/k}-1}\leq \alpha$ and $\beta\leq\frac{1}{2} + \frac{\sqrt{2\me^{-1/k}-1}}{2}$, the output of Algorithm \ref{algo:unit-demand-market}, $(\cal A,\vec p)$, satisfies
\begin{equation*}
    \Pr[\ER_v(\cal A) = 1] \geq 1 - \frac{2n \max\{\log n,\log k\}}{\log
    \big(\frac{1}{ (1-\beta)}\big)}\me^{-\frac{k}{4}}
\end{equation*}
when $|\cal S| \ge \max \{k^2 n^2, k^4\}$
\end{restatable}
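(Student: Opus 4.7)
The plan is to reduce the efficiency claim to a combinatorial event about the algorithm's peeling process, and then control its probability using Lemma~\ref{lem:prodD-expbound} together with standard concentration. By Lemma~\ref{lem:best_eqm}, in a unit-demand market with distinct item values and strictly decreasing budgets every equilibrium allocates the greedy-by-budget good $g_i^*$ to player $i$, so $\ER_v(\cal A)=1$ is exactly the event that Algorithm~\ref{algo:unit-demand-market} assigns $g_i^*$ to each $i$. I would prove this by induction on the outer for-loop index $i$, with inductive hypothesis $\Alloc=\{g_1^*,\ldots,g_{i-1}^*\}$ when the loop reaches $i$.

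For the inductive step, let $h_1,h_2,\ldots$ enumerate the goods of $G$ in decreasing order of $v_i$ and set $T:=\lceil\max\{\log n,\log k\}/\log(1/(1-\beta))\rceil$. I would show that, with high probability, the $t$-th iteration of the inner while loop produces $B_i^t=\{h_t\}$ for every $t$ up to the first index at which $h_t\notin \Alloc$; since by hypothesis at most $T-1$ goods beat $g_i^*$, this first unallocated $h_t$ must be $g_i^*$ and must occur by iteration $T$, so the algorithm sets $A_i=\{g_i^*\}$, closing the induction. The mechanics rely on two facts about a ``good'' iteration where Lemma~\ref{lem:prodD-expbound} applies: strict preferences together with $B_i^{t-1}=\{h_{t-1}\}$ force $\cal L_i^{t-1}$ to be exactly the samples of $\cal S_i^{t-1}$ that contain $h_{t-1}$, so $\cal S_i^t$ is the sub-collection of $\cal S$ avoiding $\{h_1,\ldots,h_{t-1}\}$; and the product structure of $\cal D$ survives this conditioning, so the samples in $\cal S_i^t$ look like i.i.d.\ draws from a product distribution on the surviving goods whose per-good marginals still lie in $[\alpha,\beta]$, exactly as required by Lemma~\ref{lem:prodD-expbound}.

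To upgrade this to a probability bound I would verify (i) $|\cal S_i^t|\ge k^2$ and (ii) every remaining good appears in some sample of $\cal S_i^t$, at every iteration $t\le T$. A single sample survives into $\cal S_i^t$ with probability at least $(1-\beta)^{t-1}\ge(1-\beta)^T\ge 1/\max\{n,k\}$, so $\E|\cal S_i^t|\ge|\cal S|/\max\{n,k\}\ge k^2\cdot\min\{n,k\}$ under $|\cal S|\ge\max\{k^2n^2,k^4\}$, and a Chernoff bound gives (i) with failure probability $\me^{-\Omega(k)}$; claim (ii) then follows by a union bound over the at most $k$ surviving goods, each absent from all of $\cal S_i^t$ with probability at most $(1-\alpha)^{k^2}=\me^{-\Omega(k^2)}$. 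Conditional on (i) and (ii), Lemma~\ref{lem:prodD-expbound} fails on iteration $t$ with probability at most $\me^{-k/2}$. A final union bound over the $n$ players and the at most $T$ iterations per player, absorbing the Chernoff and coverage tails into the dominant $\me^{-k/4}$ factor (the slack from $\me^{-k/2}$ to $\me^{-k/4}$, valid once $k>3$, is precisely what swallows these lower-order terms), yields the stated bound.

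The main obstacle I foresee is this bookkeeping: propagating the per-iteration concentration through the dependencies created by prior peeling, and verifying that the hypotheses of Lemma~\ref{lem:prodD-expbound} really do transfer to the conditional product distribution on the shrinking good set (in particular that the probability window $[1-\sqrt{2\me^{-1/k}-1},\,\tfrac12+\tfrac12\sqrt{2\me^{-1/k}-1}]$ still encloses every surviving marginal as the effective number of goods decreases). Everything beyond that verification and the size-tracking Chernoff bound is routine union-bounding.
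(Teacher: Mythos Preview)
Your plan is essentially the paper's proof: reduce $\ER_v(\cal A)=1$ to $A_i=\{g_i^*\}$ for all $i$ via Lemma~\ref{lem:best_eqm}, run induction on $i$ with $\Alloc=\{g_1^*,\dots,g_{i-1}^*\}$, show the while loop peels off $h_1,\dots,h_{t-1}$ until it reaches $g_i^*$ (which happens within $T=\Psi$ steps by the preference-diversity hypothesis), and union-bound over $n$ players and at most $\Psi$ iterations each. The only cosmetic difference is that where you invoke Chernoff to get $|\cal S_i^t|\ge k^2$, the paper uses an elementary batching argument: with $|\cal S|\ge \max\{k^4,n^2k^2\}$, view the sample set as $k^2$ (resp.\ $n^2$) blocks of size $k^2$, each surviving the conditioning with probability at least $1-(1-(1-\beta)^{t'})^{k^2}\ge 1-\me^{-k}$, so fewer than $k^2$ blocks contribute is an event of probability $\le k^2\me^{-k}\le \me^{-k/4}$ for $k>3$.

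Your flagged obstacle is not actually an obstacle. Lemma~\ref{lem:prodD-expbound} is \emph{stated} for $\cal S_i^t$ at an arbitrary iteration $t$, not for fresh samples from $\cal D$; its proof only uses that, for any two surviving goods $g',\hat g_i$, the event $\{g'\in S\}$ is independent of $\{\hat g_i\in S\}$ with marginals in the stated window. Since $\cal D$ is a product distribution, conditioning on $h_1,\dots,h_{t-1}\notin S$ leaves the marginals of every remaining good unchanged and preserves independence, so the window (which is fixed in terms of the original $k$, not a shrinking $k'$) continues to apply verbatim. No extra bookkeeping is needed beyond what the lemma already encapsulates.
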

\begin{proof}
Let $\alpha = \min_{g \in G} \Pr_{S \sim \cal D}(g \in S)$ and $\beta = \max_{g \in G} \Pr_{S \sim \cal D}(g \in S)$. 
For simplicity, let $\Psi = \max\{\log n,\log k\}/\log (\frac{1}{1-\beta})$

We claim that with $\max \{k^4,n^2k^2\}$ samples, for all $i \in N$,
\begin{equation*}
   \Pr[\forall i' \leq i\; A_{i'} = \{g_{i'}^*\}]\geq 1- 2i \cdot \Psi \me^{-\frac{k}{4}} 
\end{equation*}

We prove our claim by induction on $i$. For player 1, the probability of not observing $g_1^*$ in $k^2$ samples is upper bounded by
\begin{align}
& \le (1-\alpha)^{k^2} \le (\sqrt{2\me^{-1/k}-1})^{k^2} = ({2^{}\me^{-1/k}-1})^{{k^2}/2} \label{eq:good-present} \\
& = 2^{k^2/2}\me^{-k/2} \bigg ( 1 - \frac{1}{2\me^{-1/k}}\bigg )^{{k^2}/2} \notag \\ 
& \leq  2^{k^2/2}\me^{-k/2} \bigg ( \frac{1}{2}\bigg )^{{k^2}/2} =  \me^{-k/2} \quad  (\text{ when } k \geq 3 ) \notag
\end{align}
Lemma~\ref{lem:prodD-expbound} shows that, once one of the samples contains $g_1^*$, we require $\geq k^2$ samples in $\cal S$ for Algorithm~\ref{algo:unit-demand-market} to allocate $\{ g_1^* \}$ to player 1 with a probability of at least $1-e^{-\frac{k}{2}}$. Taking a union bound, Algorithm \ref{algo:unit-demand-market} allocates $\{g_1^*\}$ to player 1 with a probability of at least $1 - 2e^{-k/2} \ge 1 - 2e^{-k/4}$.

By the inductive hypothesis, we assume that our claim is true for the first $i-1$ players. In other words, when we compute $A_i$, the set of already allocated items ($\Alloc$) is $\{g_1^*,\dots , g_{i-1}^*\}$ with high probability. 
For player $i$, let $g_i^*$ be her $t-$th preferred good. By Lemma~\ref{lem:best_eqm}, we know that $g_i^* \in \argmax_{g\in G\setminus \Alloc} v_i(g)$, and hence that player $i$'s $t-1$ most favorite goods are in $\Alloc$. 

Let $g_i^{t'}$ be the $t'$-th most preferred good for player $i$. Let $B_i^{t'}$ correspond to $B_i^{t'}$ at the $t'$ iteration of the \textbf{\textit{while loop}} of Algorithm~\ref{algo:unit-demand-market} (line \ref{line7}). Now the probability that Algorithm~\ref{algo:unit-demand-market} assigns good $\{g_i^*\}$ to player $i$ is at least

\begin{align}
    \Pr[A_i=\{g^*_i\}] &\ge \Pr\Big [\text{The set } B_i^{t'} = \{ g_i^{t'} \} \text{ for all } 1\leq t'\leq t\} \Big ] \label{eqn:rank_prob} \\
    &= \prod_{t' = 1}^{t} \Pr\big[B_i^{t'} = \{ g_i^{t'} \} \mid \forall l < t'\; B_i^{l} = \{ g_i^{l}  \}  \big]\notag    
\end{align}

The right hand side of Equation \eqref{eqn:rank_prob} refers to the event where the Algorithm first tries to allocate $i$'s favourite good to her and then seeing as the good is allocated, tries to allocate $i$'s second favourite good to her and so on till it tries to allocate the $t$'th favourite good to her. Seeing as this good is unallocated, the Algorithm allocates this good to $i$ resulting in $A_i = \{g_i^*\}$.

Consider the case when $k>n$.  
The probability that a sample set of size $k^2$ has at least one sample which does not contain $g_i^1,g_1^2,\ldots,g_i^{t'-1}$  (that will therefore remain in $\cal{S}^{t'}_i$) is at least $1 - ( 1 - {(1 - \beta)}^{t'-1} )^{k^2}$. Since we assume $t' \leq t \leq \max\{\log n,\log k\}/\log (\frac{1}{1-\beta}) $, the probability that no sample will remain in $\cal S_i^{t'}$ from a sample set of size at least $k^2$ is:

\begin{equation*}
  \leq ( 1 - {(1 - \beta)}^{t'}  )^{k^2} \leq  \bigg ( 1 - \frac{1}{k} \bigg )^{k^2} < \me^{-k} 
\end{equation*}

Therefore, with $|\cal S| \geq  k^4$ samples (which can be viewed as $k^2$ different sets of samples, each of size $k^2$), using the union bound, the probability that there are less than $k^2$ samples in $\cal S^{t'}_i$ is $\leq k^2 \me^{-k}\leq \me^{-k/4}$ (for $k>3$). When there are $k^2$ samples in $|\cal S_i^{t'}|$ and each good is present in the sample with a probability of at least $\alpha$, then (using Equation \eqref{eq:good-present}), the good $g_i^{t'}$ is present in at least one sample with a probability of at least $1 - e^{-k/2}$. Combined with Lemma \ref{lem:prodD-expbound}, this implies that for each $t' \leq t$;

\begin{align*}
     \Pr\big[B_i^{t'} = \{ g_i^{t'} \} &\mid  \forall l < t'\; B_i^{l} = \{ g_i^{l}  \} \big]\\& \geq
     (1 - \me^{-k/4})(1 - \me^{-k/2})\geq 1-2\me^{-k/4}
\end{align*}

Similarly when $n>k$, with $n^2k^2$ many samples, $\Pr\big[B_i^{t'} = \{ g_i^{t'} \} \mid  \forall l < t'\; B_i^{l} = \{ g_i^{l}  \}  \big] \geq 1 - 2\me^{-k/4}$

Therefore the probability that Algorithm~\ref{algo:unit-demand-market} assigns good $\{g_i^*\}$ to player $i$ (using Equation \eqref{eqn:rank_prob}) is 
\begin{align*}
    &\geq \prod_{t' = 1}^{t-1} \Pr\big[B_i^{t'} = \{ g_i^{t'} \} \mid \forall l < t'\; B_i^{l} = \{ g_i^{l}  \}  \big]\notag\\
    & \geq (1 - 2\me^{-\frac{k}{4}})^t \geq 1 - 2\Psi \me^{-\frac{k}{4}}     
\end{align*}

Now, using the union bound for the first $(i-1)$ players and the guarantees for player~$i$ we get,

\begin{align*}
    &\Pr[\forall i' \leq i\; \;A_{i'}=\{g_{i'}^*\} ]\geq 1- 2i\Psi\me^{-\frac{k}{4}}   
\end{align*}

Setting $i=n$ concludes the proof.
\end{proof}

As $\beta $ decreases (provided $\beta > 1 - \sqrt{2\me^{-1/k}-1}$), the condition in Theorem~\ref{thm:prod_improved_bound} on the difference between  players' preferences becomes less stringent. Moreover, if $k$ is large, the  exponential term in the probability guarantee dominates, and Algorithm~\ref{algo:unit-demand-market} is highly likely to output an efficient outcome.
However, if $\beta$ is smaller, the efficiency guarantee is less likely to hold. 
 Note that when $\beta = 1$, i.e., there is a good $g$ that appears in all samples, the performance of Algorithm \ref{algo:unit-demand-market} depends on  which player gets $g$. 
If the most preferred good for all players is $g$, Algorithm~\ref{algo:unit-demand-market} allocates $g$ to player $1$ and will not be able to continue: it is impossible to identify the second preferred good (and beyond). Therefore, Algorithm~\ref{algo:unit-demand-market} has an efficiency $\ge \frac{1}{\rho n}$ (for $\rho = \max_{g \in G}\frac{\max_{i \in N} v_i(g)}{\min_{i \in N}v_i(g)}$, the maximal ratio between the valuation of a single item by different agents) since we can only guarantee that the highest budget player will receive their optimal equilibrium allocation. 

We can generalize the efficiency bound in Theorem~\ref{thm:prod_improved_bound} for any preference order over the items for all players. 
We observe that with at least $\max(k^4,n^2k^2)$ samples, the first $\sim \max (\log k,\log n)$ players will be assigned $g_i^*$ with high probability. We show the efficiency guarantee for Algorithm~\ref{algo:unit-demand-market} for any preference order in Proposition~\ref{prop:proddist-unitD}, and its connection to the disparity in valuation functions between agents. 
\begin{restatable}{prop}{propproddistunitD}\label{prop:proddist-unitD}
If $\cal D$ is a product distribution such that for all $g_j \in G$, $1 - \sqrt{2\me^{-1/k}-1} < \Pr_{S \in \cal D}(g \in S) < \frac{1}{2} + \frac{\sqrt{2\me^{-1/k}-1}}{2}$ and $k>3$. Then, with exponentially high probability, Algorithm \ref{algo:unit-demand-market} allocates goods with an efficiency ratio $\ER_v(\cal A) \ge \frac{\log n}{\rho n \log \big( \frac{1}{1 - \beta} \big )} $ using a polynomial number of samples where $\rho = \max_{g \in G}\frac{\max_{i \in N} v_i(g)}{\min_{i \in N}v_i(g)}$ and $\beta = \max_{g_j \in G} \Pr_{S \in \cal D}(g \in S)$.
\end{restatable}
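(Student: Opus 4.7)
The plan is to extend the inductive argument of Theorem~\ref{thm:prod_improved_bound} so that it no longer requires the top-$\Psi$ preference assumption for the first $\Psi$ players, and then use the valuation-disparity parameter $\rho$ to convert the welfare captured from these players into an overall efficiency bound. Throughout, set $\Psi = \max\{\log n,\log k\}/\log(1/(1-\beta))$, and assume WLOG $\Psi \le n$ (if $\Psi > n$ then all $n$ players receive $g_i^*$ and the efficiency is $1$).

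First, I would prove by induction on $i \le \Psi$ that Algorithm~\ref{algo:unit-demand-market} assigns $A_i = \{g_i^*\}$ with high probability. The key new observation, compared with the proof of Theorem~\ref{thm:prod_improved_bound}, is that once the first $i-1$ players have each received their $g_j^*$, exactly the $i-1$ items $\{g_1^*,\ldots,g_{i-1}^*\}$ sit in $\Alloc$. When the while loop processes player $i$, it traverses $i$'s preference list from the top, skipping goods in $\Alloc$; since at most $i-1 \le \Psi - 1$ goods are skipped, the target $g_i^* = \argmax_{g\in\cal G_i} v_i(g)$ is reached within $i \le \Psi$ iterations. Thus the top-$\Psi$ ranking assumption of Theorem~\ref{thm:prod_improved_bound} holds automatically for these players, and the per-iteration guarantee via Lemma~\ref{lem:prodD-expbound}, combined with the sample-complexity bound $|\cal S| \ge \max\{k^4, n^2k^2\}$, yields via a union bound a total success probability of at least $1 - 2\Psi^2 \me^{-k/4}$, which is exponentially close to $1$.

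Second, I would compare captured to optimal welfare. By Lemma~\ref{lem:best_eqm}, the optimum has welfare $\sum_{i=1}^{n} v_i(g_i^*)$. For any $j > \Psi$ and any $j' \le \Psi$, the good $g_j^*$ is not in $\{g_1^*,\ldots,g_{j'-1}^*\}$, so $g_j^* \in \cal G_{j'}$; hence $v_{j'}(g_{j'}^*) \ge v_{j'}(g_j^*)$, and by definition of $\rho$, $v_j(g_j^*) \le \rho\, v_{j'}(g_j^*) \le \rho\, v_{j'}(g_{j'}^*)$. Averaging over $j' = 1,\ldots,\Psi$ gives $v_j(g_j^*) \le (\rho/\Psi)\sum_{j'\le\Psi} v_{j'}(g_{j'}^*)$. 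Summing over $j > \Psi$ and adding $\sum_{j'\le\Psi} v_{j'}(g_{j'}^*)$ yields $\sum_{j=1}^{n} v_j(g_j^*) \le \bigl(1 + (n-\Psi)\rho/\Psi\bigr)\sum_{j'\le\Psi}v_{j'}(g_{j'}^*) \le (n\rho/\Psi)\sum_{j'\le\Psi}v_{j'}(g_{j'}^*)$, where the last inequality uses $\rho \ge 1$. Since Algorithm~\ref{algo:unit-demand-market} captures at least $\sum_{j'\le\Psi} v_{j'}(g_{j'}^*)$ in the high-probability event, the efficiency ratio is at least $\Psi/(n\rho) \ge \log n /(n\rho \log(1/(1-\beta)))$.

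The main obstacle is the first step: one has to verify that the proof of Theorem~\ref{thm:prod_improved_bound} uses the top-$\Psi$ preference assumption only to bound the number of while-loop iterations per player, and that this bound follows automatically for the first $\Psi$ players from the inductive hypothesis and the unit-demand structure. Once this is in place, the efficiency estimate is a short averaging argument using $\rho$.
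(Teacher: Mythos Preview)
Your proposal is correct and follows essentially the same approach as the paper's proof: both first argue (by reusing the machinery of Theorem~\ref{thm:prod_improved_bound}) that the first $\Psi$ players receive $g_i^*$ with high probability because $g_i^*$ is automatically among player $i$'s top $i\le\Psi$ goods, and then both derive the efficiency bound from the inequality $v_j(g_j^*)\le\rho\,v_{j'}(g_{j'}^*)$ for $j>\Psi$, $j'\le\Psi$. The only cosmetic difference is in the second step, where the paper lower-bounds the numerator by $\Psi c$ with $c=\min_{j'\le\Psi}v_{j'}(g_{j'}^*)$ and upper-bounds each remaining term by $\rho c$, whereas you average over $j'\le\Psi$; both computations yield the same final bound.
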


Furthermore, in Corollary \ref{cor:uniform-unitD}, we show the efficiency bound when each good is sampled i.i.d. w.p. $\frac12$.
\begin{restatable}{corollary}{coruniformunitD}\label{cor:uniform-unitD}
If the distribution $\cal D$ is uniform over the set $2^G$ and \(k > 3\), with exponentially high probability. Algorithm~\ref{algo:unit-demand-market} allocates goods with an efficiency $\ER_v(\cal A) > \frac{\log n}{\rho n}$ where $\rho = \max_{g \in G}\frac{\max_{i \in N} v_i(g)}{\min_{i \in N}v_i(g)}$ 
using a polynomial number of samples.
\end{restatable}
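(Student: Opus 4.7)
The plan is to derive this corollary as an immediate specialization of Proposition~\ref{prop:proddist-unitD}, since the uniform distribution over $2^G$ is the canonical product distribution. First I would note that under the uniform distribution on $2^G$, the indicator variables $\I[g \in S]$ are mutually independent Bernoulli$(1/2)$, so $\cal D$ is a product distribution with $\Pr_{S \sim \cal D}[g \in S] = 1/2$ for every $g \in G$. In particular, in the notation of Proposition~\ref{prop:proddist-unitD}, we have $\alpha = \beta = 1/2$.

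Next, I would verify that the hypotheses of Proposition~\ref{prop:proddist-unitD} are met for $k > 3$. The hypothesis requires $1 - \sqrt{2\me^{-1/k}-1} < 1/2 < \tfrac{1}{2} + \tfrac{1}{2}\sqrt{2\me^{-1/k}-1}$, and both inequalities collapse to the single condition $\sqrt{2\me^{-1/k}-1} > 1/2$, equivalently $\me^{-1/k} > 5/8$, equivalently $k > 1/\ln(8/5) \approx 2.13$. Since we assume $k > 3$, this is comfortably satisfied, so Proposition~\ref{prop:proddist-unitD} applies.

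Finally, I would substitute $\beta = 1/2$ into the efficiency bound supplied by Proposition~\ref{prop:proddist-unitD}. The denominator factor $\log\bigl(\frac{1}{1-\beta}\bigr)$ becomes $\log 2 < 1$ (natural log), and hence
\[
\ER_v(\cal A) \;\ge\; \frac{\log n}{\rho\, n\, \log 2} \;>\; \frac{\log n}{\rho\, n},
\]
with the same exponentially-high probability and the same polynomial sample complexity that Proposition~\ref{prop:proddist-unitD} already guarantees. Since the derivation is a direct substitution, there is no genuine obstacle; the only thing to be careful about is the bookkeeping that confirms the parameter condition on $k$ and the elementary inequality $\log 2 < 1$ which flips the denominator into the claimed strict bound.
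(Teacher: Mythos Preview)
Your proposal is correct and essentially identical to the paper's proof: both observe that the uniform distribution is the product distribution with $\alpha=\beta=1/2$, verify that this satisfies the interval constraint of Proposition~\ref{prop:proddist-unitD} once $k$ exceeds $1/\ln(8/5)\approx 2.13$ (the paper writes this threshold as $1/\log 1.6$), and then substitute $\beta=1/2$ to obtain $\ER_v(\cal A)\ge \frac{\log n}{\rho n\log 2}>\frac{\log n}{\rho n}$.
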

\begin{proof}
This extends directly from Proposition \ref{prop:proddist-unitD}. When $k \ge \frac{1}{\log 1.6}$, the lower bound and upper bound constraints on the probability of sampling each good in Proposition~\ref{prop:proddist-unitD} improve such that the product distribution where $\Pr_{S \in \cal D}({g \in S}) = \frac{1}{2} \quad \forall g \in G$ satisfies the constraints.
This means we can directly apply the results of Proposition~\ref{prop:proddist-unitD}: the uniform distribution is a product distribution with $ \alpha = \beta = \frac{1}{2}$ (which satisfy the constraints specified in Proposition~\ref{prop:proddist-unitD}). 
Thus, the efficiency ratio of the uniform distribution is at least $\frac{\log n}{\rho n \log 2} > \frac{\log n}{\rho n}$.
\end{proof}

\section{Single Minded Markets}\label{sec:singleminded}
In single minded markets, each player has a particular bundle of goods, $\Des_i \subseteq G$ they desire; every bundle that does not contain $\Des_i$ has no value, i.e.,
\begin{equation}
v_i(S) = 
\begin{cases}
    1 & \Des_i \subseteq S \\
    0 & \text{otherwise.}
\end{cases} \notag
\end{equation}
We show that a PAC underestimate for single-minded valuations can be efficiently learned, and an equilibrium for single-minded valuations can be efficiently computed. 
Therefore, using Proposition \ref{prop:lowerbound-equilibrium}, a PAC Equilibrium is computable in polynomial time.

\begin{restatable}{prop}{SMPacLearnable}\label{prop:SM-PAC-learnable}
The class of single minded valuation functions can be efficiently PAC learned, such that the learned valuation function weakly underestimates players' true valuations.
\end{restatable}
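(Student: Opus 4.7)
The plan is to adapt the classical intersection-of-positives learner for monotone conjunctions. For each player $i$, given samples $S_1,\dots,S_m$ with labels $v_i(S_j)\in\{0,1\}$, I would set
\begin{align*}
\bar{D}_i \;=\; \bigcap_{j\in[m]:\, v_i(S_j)=1} S_j
\end{align*}
(taking $\bar{D}_i = G$ if no positive example has been observed) and define $\bar{v}_i(S)=1$ iff $\bar{D}_i\subseteq S$. The first step is to check the underestimate property: every positive sample $S_j$ contains $\Des_i$, so $\Des_i\subseteq\bar{D}_i$, and therefore $\bar{D}_i\subseteq S$ forces $\Des_i\subseteq S$. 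Thus $\bar{v}_i(S)=1 \Rightarrow v_i(S)=1$, giving $\bar v_i \le v_i$ pointwise. Both the training step (intersecting $m$ sets of size $\le k$) and evaluation of $\bar v_i$ run in polynomial time, so the learner is efficient in the computational sense as well.

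Next I would bound the error probability. A disagreement $\bar v_i(S)\neq v_i(S)$ requires $v_i(S)=1$ but $\bar{D}_i\not\subseteq S$, which in turn requires some element $g\in \bar{D}_i\setminus \Des_i$ missing from $S$. Call $g\notin \Des_i$ \emph{heavy} if $\Pr_{S\sim\cal D}[\Des_i\subseteq S\land g\notin S]\ge \varepsilon/k$; at most $k$ such elements exist. If every heavy element has been removed from $\bar{D}_i$, a union bound over the at most $k$ remaining candidate witnesses $g\in\bar D_i\setminus\Des_i$ bounds the total error by $\varepsilon$. For a fixed heavy $g$, each i.i.d.\ sample from $\cal D$ is a positive example omitting $g$ with probability at least $\varepsilon/k$, and any such sample removes $g$ from $\bar D_i$; so after $m=O\!\left(\frac{k}{\varepsilon}\log\frac{k}{\delta}\right)$ samples, $g$ survives in $\bar D_i$ with probability at most $\delta/k$, and a union bound over the $k$ heavy elements shows that all heavy elements are eliminated with probability $\ge 1-\delta$.

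Running this learner independently for each of the $n$ players with parameters $(\varepsilon/n,\delta/n)$ and applying a union bound across players yields an $(\varepsilon,\delta)$-PAC guarantee with sample complexity polynomial in $n$, $k$, $1/\varepsilon$, and $\log(1/\delta)$; combined with Proposition~\ref{prop:lowerbound-equilibrium}, this suffices for computing PAC equilibria in the single-minded setting. The main subtlety I would be careful about is the degenerate case in which positives are themselves rare: if $\Pr_{S\sim\cal D}[\Des_i\subseteq S]<\varepsilon/n$, then no positive example need ever appear, but the learner's default output $\bar v_i\equiv 0$ (via $\bar{D}_i=G$) already satisfies the required PAC bound vacuously, so the analysis splits cleanly at this threshold.
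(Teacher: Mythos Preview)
Your learner and underestimate argument are identical to the paper's: take the intersection of all positive samples, default to $G$ when none exist, and observe that $\Des_i\subseteq \bar D_i$ forces $\bar v_i\le v_i$ pointwise. The difference lies in the sample-complexity step. The paper simply notes that the hypothesis class has size $2^k$ and invokes the standard finite-class PAC bound from \shortciteA{anthony1999learning}, so that a number of samples polynomial in $k$, $1/\varepsilon$, and $\log(1/\delta)$ suffices, in one sentence. You instead carry out the explicit ``heavy element'' analysis for monotone conjunctions, bounding the error contribution of each surviving $g\in\bar D_i\setminus\Des_i$ and showing that heavy elements are eliminated with high probability after $O\!\left(\frac{k}{\varepsilon}\log\frac{k}{\delta}\right)$ samples. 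Both arguments are correct and yield comparable polynomial bounds; the paper's route is shorter because it outsources the work to a textbook lemma, while yours is self-contained and makes the pruning mechanism transparent. Your additional remarks on aggregating across $n$ players via $(\varepsilon/n,\delta/n)$ and on the rare-positives degenerate case go slightly beyond what the proposition itself asserts, but they are correct and anticipate exactly how the result is used downstream with Proposition~\ref{prop:lowerbound-equilibrium}.
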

\begin{proof}
From a given set of samples $\cal S$, set 
$\bar{\Des}_i = \bigcap_{S \in \cal S : v_i(S)>0} S $. If, for a player $i \in N$, no sample has $v_i(S) > 0$, then set $\bar{\Des}_i = G$. This learned valuation is consistent and weakly lower than the actual valuations since $\Des_i \subseteq \bar{\Des}_i$ (i.e., a sample containing a set of items that is in $\Des_i$ but not all of $\bar{\Des}_i$ will be given a value 0 instead of 1).

The total number of possible valuation functions, i.e., size of the hypothesis class (denoted by $\cal H$) is $2^k$ (the number of possible choices for $\Des_i$). 
Thus, in order to PAC-learn $\Des_i$, we need a number of samples polynomial in $\frac{1}{\varepsilon}$, $\log \frac{1}{\delta}$ and $\log |\cal H| \in \mathcal{O}(k)$ (a classic learning result for finite hypothesis classes, see \shortciteA{anthony1999learning}).
\end{proof}

\shortciteA{branzeisingleminded} present an Algorithm to compute equilibria under equal budgets. We extend this Algorithm to settings where each player has a unique budget.

\begin{theorem}\label{thm:SM-equilibrium}
Algorithm \ref{algo:singleminded} outputs a market equilibrium for single minded players with all different budgets.
\end{theorem}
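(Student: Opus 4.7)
The plan is to verify that the output $(\mathcal{A}, \vec{p})$ of Algorithm~\ref{algo:singleminded} satisfies the Walrasian equilibrium conditions directly. Since single-minded valuations take only the values $0$ and $1$, equilibrium reduces to three combinatorial conditions: the allocated bundles are pairwise disjoint, every ``winner'' (player $i$ with $A_i \supseteq \Des_i$) can afford their bundle under $\vec{p}$, and every ``loser'' (player $j$ with $A_j \not\supseteq \Des_j$) finds $\Des_j$ unaffordable under $\vec{p}$.

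I expect the algorithm to process players in decreasing order of budget, greedily assigning $\Des_i$ to player $i$ whenever no good in $\Des_i$ has already been committed to an earlier winner, and otherwise marking $i$ as a loser. Under this structure, disjointness of the $A_i$'s is immediate from the greedy rule, and any residual goods can be handed to player $n$ (or any fixed player) at price zero without disturbing incentives. Affordability for each winner then reduces to checking that the price-setting step ensures $\sum_{g \in A_i} p_g \leq b_i$, which should follow directly from the algorithm's construction (for instance, by spreading the charge $b_i$ across the goods of $\Des_i$ so that the total equals $b_i$).

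The main obstacle is the loser condition: for every loser $j$ I must show $\sum_{g \in \Des_j} p_g > b_j$. Since $j$ lost, at least one good of $\Des_j$ lies in an earlier winner $i$'s bundle with $b_i > b_j$, and the algorithm must leverage this strict budget gap to price enough of $\Des_j$ above $b_j$. I would carry out the argument by induction on the processing order, maintaining the invariant that ``for every already-processed winner $i$, $\sum_{g \in A_i} p_g \leq b_i$, and for every already-processed loser $j$, $\sum_{g \in \Des_j} p_g > b_j$.'' The delicate point is that several losers may hit the same winner $i$'s bundle on different goods, so no single ``key'' good in $A_i$ can simultaneously block all of them; the algorithm must therefore distribute the charge $b_i$ across multiple goods in $A_i$ in a way that is compatible with blocking every later loser whose desired set intersects $A_i$. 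I expect the proof to exploit the strict ordering $b_1 > b_2 > \cdots > b_n$ by assigning each good $g \in A_i$ a price tied to the highest-budget loser whose desired bundle contains $g$, and then showing that the per-winner totals stay within $b_i$ precisely because each losing player's budget is strictly smaller than $b_i$. Verifying that this accounting closes---winners remain affordable while every new loser is blocked---is where the real work lies.
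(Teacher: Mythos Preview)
Your plan rests on a misreading of Algorithm~\ref{algo:singleminded}. The algorithm does \emph{not} iterate over players in decreasing budget order and assign whole desired sets; it iterates over \emph{goods}. For each good $g_j$, it checks which players still demand $g_j$ (via the sets $B_i$, which are initialised to $\Des_i$ and later possibly emptied). If exactly one player demands $g_j$, it is allocated at price $0$. If several players demand $g_j$, the \texttt{SetPrice} routine sets $p_j$ just above the second-highest \emph{remaining} budget among demanders, so that only the richest demander can afford $g_j$; that player receives $g_j$ and has her remaining budget decremented. Then \texttt{UpdateDemand} empties $B_i$ for every player whose desired set has become unaffordable under the current prices. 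Thus the ``blocking'' of losers happens incrementally, good by good, and is driven by remaining budgets rather than by spreading a winner's full budget across $\Des_i$ as you propose.

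Because of this, the inductive invariant you outline and the ``delicate point'' about multiple losers sharing a winner's bundle are both aimed at the wrong target. The paper's actual argument is much shorter: by construction, a player either receives every good in $\Des_i$ (possibly some at price $0$, the rest at the \texttt{SetPrice} values, all within her remaining budget at each step) or at some point $\sum_{g\in \Des_i} p_g$ exceeds $b_i$, at which moment $B_i$ is set to $\emptyset$ and she stops competing. In the latter case $\Des_i$ is already unaffordable, and prices never decrease, so it remains unaffordable at termination. The role of the small increment $(b_s^*-b_t^*)/n^2$ in \texttt{SetPrice} is only to keep remaining budgets pairwise distinct, guaranteeing a unique richest demander at every step. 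To fix your write-up, you should restructure it around the per-good loop and this ``remaining budget'' invariant rather than a per-player greedy assignment.
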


\begin{algorithm}[!htb]
\SetAlgoLined
\DontPrintSemicolon
\LinesNumbered
\SetNlSty{}{}{:}
\SetKwInOut{Notation}{Notation}
\SetKwFunction{SP}{SetPrice}
\SetKwFunction{UD}{UpdateDemand}
\SetKwProg{Fn}{Function}{:}{}
\KwIn{Player valuations $v$ and budgets $b_1 > \dots>b_n$}
\Notation{$b_i^*$ is the remaining budget for player $i$; prices are represented by $\vec p$.}
$\vec{p} \gets \vec{0}$; $\vec b^* \gets \{b_1, b_2, \dots, b_n\}$\; 
$B_i \gets \Des_i \quad \forall i \in N$\;
\For{each $g_j \in G$}{
    \If{$g_j$ is only demanded by one player}{
        Allocate $g_j$ to that player at $p_j \gets 0$\; \label{line:oneplayerdemand}
    }
    \ElseIf{$g_j$ is demanded by multiple players}{
        $p_j \gets \SP(g_j, \vec b^*, B)$ \; \label{line:multiplayerdemandbegin}
        Allocate $g_j$ to the player that can afford it at price $p_j$\;
        \UD$(B,\vec{p}, \vec b^*)$\; \label{line:multiplayerdemandend}
    }
}
Allocate all unallocated goods to player $n$ at price $0$\;
\Fn{\SP($g_j, \vec b^*, \cal B$)}{
$s \gets \argmax_{i \in N \land g_j \in B_i} b_i^*$\;
$t \gets \argmax_{i \in N \setminus {s} \land g_j \in D_i} b_i^*$\;
$p_j \gets b^*_t + \frac{b_s^* - b_t^*}{n^2}$\;
$b^*_s \gets b^*_s - p_j$ \;
\While{$\exists i \ne s: b^*_i = b^*_s$}{
    $b^*_s \gets b^*_s - \frac{b_s^* - b_t^*}{n^2} $, $p_j \gets p_j + \frac{b_s^* - b_t^*}{n^2}$\;
} 
\Return{$p_j$} \;
}
\Fn{\UD($\cal B, \vec{p}, \vec b^*$)}{
\For{$i \in N$}{
    \If{$(B_i \ne \emptyset) \land (\sum_{g \in B_i}p_g > b_i^*)$}{
        $B_i \gets \emptyset$ \;
    }
}
}
\caption{Competitive Equilibrium for Single Minded Valuations and Different Budgets}
\label{algo:singleminded}
\end{algorithm}

\begin{proof}
Algorithm \ref{algo:singleminded} iteratively allocates goods while keeping track of players' remaining budgets. 
If a good is demanded by multiple players, it is priced such that only one player can afford it, and allocated to that player. The \texttt{SetPrice} function ensures that no two players have the same remaining budget, by slightly increasing the price; this ensures that there are no ties when selecting the next player to allocate a good to. 

All players either get their desired set or a subset of their desired set if it is unaffordable. 
Thus the resulting allocation is an equilibrium: players who do not receive their desired set are not able to afford it. 
\end{proof}

The key difference between our approach and that of \shortciteA{branzeisingleminded} is how over-demanded goods are priced. \shortciteA{branzeisingleminded} assign the good to the player with the smallest desired set at a price equal to their budget. 
In our case, player budgets differ and therefore, ties cannot be broken by desired set size; rather, we instead break ties by remaining budgets.

Computing an equilibrium with total welfare at least $K$ has been shown to be NP-Complete by \shortciteA{branzei2015unitdemand} when players have equal budgets. In Theorem~\ref{thm:SM-Efficiency-NP}, we show this for our setting as well. 
\begin{restatable}{theorem}{thmSMEfficiencyNP}\label{thm:SM-Efficiency-NP}
It is NP-Complete to decide if a single minded market has an equilibrium with total welfare at least $K$
\end{restatable}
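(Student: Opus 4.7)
The plan is to establish NP-completeness in two parts: membership in NP, then NP-hardness via reduction from Set Packing.

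NP membership is immediate: a certificate is an allocation $(\cal A, \vec p)$, and in polynomial time we verify that the bundles in $\cal A$ are pairwise disjoint, that each allocated bundle $A_i$ has total price at most $b_i$, that each unsatisfied player's desired set $\Des_i$ has total price strictly exceeding $b_i$ at $\vec p$ (so the empty allocation is indeed optimal for them), and that at least $K$ players receive their full desired set.

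For NP-hardness I will reduce from Set Packing: given sets $S_1, \dots, S_n$ over a universe $U$ and an integer $K$, decide whether $K$ pairwise disjoint sets exist among the $S_i$. The construction introduces a fresh ``dummy'' good $d_i$ per player and builds a single-minded market with goods $G = U \cup \{d_1, \dots, d_n\}$, players $1, \dots, n$ with desired bundles $\Des_i = S_i \cup \{d_i\}$, distinct budgets $b_1 > \cdots > b_n$ chosen arbitrarily, and welfare threshold $K$. For the forward direction, any equilibrium of welfare at least $K$ fully satisfies at least $K$ players, whose allocated bundles $\Des_{i_j}$ are pairwise disjoint, so the $S_{i_j}$ themselves form a $K$-packing. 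For the reverse direction, given a $K$-packing $S_{i_1}, \dots, S_{i_K}$, I allocate $\Des_{i_j}$ to player $i_j$, price each $g \in \Des_{i_j}$ at $b_{i_j}/|\Des_{i_j}|$ (so player $i_j$ exactly exhausts their budget), and price every remaining good at $b_1 + 1$. Each unsatisfied player $i$ has their unallocated dummy $d_i$ inside $\Des_i$ priced above $b_i$, so $\Des_i$ is unaffordable and the empty bundle is optimal for $i$, yielding a valid equilibrium of welfare exactly $K$.

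The main obstacle in any direct reduction is ensuring that unsatisfied players cannot afford their desired sets at the chosen prices -- a genuine difficulty when some non-selected $S_i$ lies entirely within $\bigcup_j S_{i_j}$ and must nonetheless be priced out of player $i$'s budget without disturbing the tight pricing for satisfied players. The dummy-good trick neutralises this cleanly: because each $d_i$ appears in exactly one desired set, it stays unallocated whenever player $i$ is unsatisfied, so we can always price it prohibitively without affecting any other player's equilibrium condition. The construction is clearly polynomial in the size of the Set Packing instance, completing the reduction.
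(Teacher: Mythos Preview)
Your proof is correct and uses essentially the same reduction as the paper: Set Packing, with a private dummy good adjoined to each player's desired set so that unsatisfied players can always be priced out via their own dummy. The only real difference is in the reverse-direction pricing: the paper first extends the $K$-packing to a \emph{maximal} one and prices each unselected player's dummy at exactly $b_i$ (relying on maximality to guarantee that the remainder of $\Des_i$ already carries some positive price from an intersecting selected set), whereas you simply price every unallocated good at $b_1+1$, which sidesteps the maximality argument entirely and is a bit cleaner.
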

(Proof in Appendix~\ref{apdx:single-minded})\\

Theorem \ref{thm:SM-PAC-efficiency} shows that despite this, it is possible to compute a PAC equilibrium with an efficiency ratio $\ge \frac{1}{\min\{n, k\}}$. We now turn to establishing the efficeincy bounds of the algorithm. 

\begin{restatable}{lemma}{lemSMfullinfoefficiency}\label{lem:SM-fullinfo-efficiency}
Algorithm~\ref{algo:singleminded} assigns at least one player its desired set.
\end{restatable}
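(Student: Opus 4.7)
The plan is to argue that at least one player remains ``alive'' throughout Algorithm~\ref{algo:singleminded} (where player $i$ is alive until the first time $B_i$ is reset to $\emptyset$), and that any such player ends up with $A_i=\Des_i$. I read the pseudocode with the natural convention that when player $s$ wins $g_j$ in a contested step, $g_j$ is simultaneously removed from $B_s$; this is implicit in making the UpdateDemand check meaningful, and I would state it at the outset.

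The first key step is an invariant: at the start of processing $g_j$, every alive player $i$ satisfies $B_i\subseteq\{g_j,g_{j+1},\dots,g_k\}$. This follows by induction on $j$: for any already-processed $g_{j'}\in B_i$, either $i$ won $g_{j'}$ (and $g_{j'}$ was dropped from $B_i$) or $i$ lost the contest for $g_{j'}$ and is not alive, contradiction. An immediate corollary is that $\sum_{g\in B_i}p_g=0$ at the start of any step for any alive $i$. Next I would show that at a contested step for $g_j$ won by $s$ at price $p_j=b_t^*+(b_s^*-b_t^*)/n^2$, every alive loser $i$ has $g_j\in B_i$ and $b_i^*\le b_t^*<p_j$, so $\sum_{g\in B_i}p_g\ge p_j>b_i^*$ and $i$ drops out. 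The winner $s$ instead pays $p_j\le b_s^*$ and removes $g_j$ from $B_s$; by the invariant the remaining $B_s$ still lies in unprocessed goods, so $\sum_{g\in B_s}p_g=0\le b_s^*-p_j$ and $s$ stays alive. Players not demanding $g_j$ are untouched by the step.

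These observations pinpoint a survivor. If any good is contested during the run, let $g_c$ be the last contested good and $s$ its winner: $s$ is alive at step $c$ (otherwise she could not participate), survives step $c$, and every subsequent step is uncontested, so it either hands $s$ a demanded good for free or leaves her untouched. Hence $s$ ends with $A_s=\Des_s$. If no good is ever contested, then every demanded good is allocated free to its unique demander, so any player with $\Des_i\ne\emptyset$ acquires it in full (and the case $\Des_i=\emptyset$ for all $i$ is trivial, since $\Des_i\subseteq A_i$ holds vacuously).

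The hard part will be the interpretive subtlety around UpdateDemand: read literally, the check $\sum_{g\in B_i}p_g>b_i^*$ double-counts the cost of goods already won by $i$ and would spuriously eliminate the winner, so I need to be explicit that $B_i$ tracks only the as-yet-unacquired demanded goods. A secondary routine verification is that the tie-breaking while-loop inside SetPrice never drives $b_s^*$ below $0$, which follows because each increment $(b_s^*-b_t^*)/n^2$ is an arbitrarily small fraction of the gap between the top two remaining budgets.
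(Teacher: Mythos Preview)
Your argument is correct and follows essentially the same route as the paper's: both maintain the invariant that at least one player remains able to afford their desired set after each good is processed, the key step being that the winner of any contested good survives that step while all other demanders are priced out. Your version is more explicit about the pseudocode interpretation (treating $B_i$ as the not-yet-acquired demanded goods, which the paper's proof tacitly assumes) and more constructive in naming the survivor as the winner of the last contested good, but the underlying idea is the same.
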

\begin{proof}
Let us first define a few terms which will help us with the proof. 
At any point in the algorithm, a player is {\em in the running} if they can afford their desired set, and is {\em eliminated} otherwise.
In Algorithm \ref{algo:singleminded}, all players start out in the running and get eliminated as the prices increase. 
Once a player gets eliminated, they will stay that way till the end of the algorithm since prices of goods never decrease and therefore will never be able to afford their bundle again; indeed, Algorithm \ref{algo:singleminded} sets players' demands to $\emptyset$ once they are eliminated.

We prove inductively that before and after any good is allocated, at least one player is still in the running. We assume w.l.o.g. that goods are considered in the order $g_1,\dots,g_k$.

For the first good $g_1$, all players are in the running before the allocation. If at most one player demands $g_1$ then prices remain $0$, and all players are still in the running; if multiple players demand $g_1$, then one player is allocated $g_1$ and the remaining players are eliminated (lines \ref{line:multiplayerdemandbegin}-\ref{line:multiplayerdemandend}); however, the player who received $g_1$ remains in the running.
In both cases, at least one player remains in the running after the good is allocated. 
Now, let us assume this is true for goods $g_1,\dots,g_{i-1}$. For $g_i$, there exists at least one player who is in the running before $g_i$ is allocated by the inductive hypothesis. If no more than one player demands $g_i$ then the price of $g_i$ is 0, and no player is eliminated; otherwise, all players who demand $g_i$ get eliminated, except for the player who receives $g_i$, who is still in the running.
Thus, there is at least one player in the running when we reach $g_k$. This player receives their desired set; otherwise, the allocation is not an equilibrium which contradicts Theorem \ref{thm:SM-equilibrium}. 
\end{proof}

\begin{theorem}\label{thm:SM-PAC-efficiency}
Let $(\cal A, \vec{p})$ be the output of Algorithm \ref{algo:singleminded} on valuations learned as in Proposition \ref{prop:SM-PAC-learnable}; then $\ER_v(\cal A) \ge \frac{1}{\min\{n, k\}}$. 
\end{theorem}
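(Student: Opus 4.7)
The plan is to combine Lemma~\ref{lem:SM-fullinfo-efficiency} (which guarantees that Algorithm~\ref{algo:singleminded} satisfies at least one player under whatever single-minded valuations it is fed) with the structural property established in Proposition~\ref{prop:SM-PAC-learnable} (that the learned desired sets $\bar{\Des}_i$ always \emph{contain} the true desired sets $\Des_i$). The key observation is that if a player is satisfied under the learned valuation $\bar{v}_i$, i.e., receives a bundle $A_i \supseteq \bar{\Des}_i$, then since $\Des_i \subseteq \bar{\Des}_i \subseteq A_i$ this same player is also satisfied under the true valuation $v_i$. So welfare transfers cleanly from the learned instance to the true instance.

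First, I would note that Algorithm~\ref{algo:singleminded} is being run on the learned valuation profile $\bar v$ from Proposition~\ref{prop:SM-PAC-learnable}, which is itself a single-minded profile with desired sets $\bar \Des_i \supseteq \Des_i$. Lemma~\ref{lem:SM-fullinfo-efficiency} then applies directly and guarantees that at least one player $i^*$ receives a bundle $A_{i^*}$ with $\bar \Des_{i^*} \subseteq A_{i^*}$. By containment $\Des_{i^*} \subseteq A_{i^*}$, so $v_{i^*}(A_{i^*}) = 1$. Hence $\sum_{i} v_i(A_i) \ge 1$.

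Next, I would upper-bound the optimal welfare. Because valuations are $0/1$ and the allocation bundles $A_1^*, \dots, A_n^*$ in any equilibrium are pairwise disjoint subsets of $G$, the number of satisfied players in $\cal A^*$ is at most $n$ (trivially) and at most $k$ (since each satisfied player needs a non-empty subset $\Des_i \subseteq A_i^*$ disjoint from all other $\Des_j \subseteq A_j^*$, and $G$ has only $k$ goods). Therefore $\sum_i v_i(A_i^*) \le \min\{n,k\}$.

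Putting these together yields
\begin{align*}
\ER_v(\cal A) = \frac{\sum_i v_i(A_i)}{\sum_i v_i(A_i^*)} \ge \frac{1}{\min\{n,k\}}.
\end{align*}
The whole argument is essentially bookkeeping once one recognizes that the underestimate property $\bar \Des_i \supseteq \Des_i$ transports satisfaction from the learned world to the true world; the only mildly subtle step is the upper bound on OPT, and even that follows immediately from the single-minded, $0/1$ structure of the valuations together with disjointness of the equilibrium allocation.
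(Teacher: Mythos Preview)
Your proof is correct and follows essentially the same approach as the paper: apply Lemma~\ref{lem:SM-fullinfo-efficiency} to the learned instance to get one satisfied player, use the containment $\Des_i \subseteq \bar{\Des}_i$ from Proposition~\ref{prop:SM-PAC-learnable} to transfer satisfaction to the true valuations, and bound the optimum by $\min\{n,k\}$. Your write-up is slightly more explicit about why the optimum is at most $k$ (disjoint nonempty desired sets), but the argument is the same.
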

\begin{proof}
From Lemma \ref{lem:SM-fullinfo-efficiency}, we get that at least one player will receive his desired set. This desired set is the learned desired set which is a superset of the actual desired set (see Proposition \ref{prop:SM-PAC-learnable}). Therefore, the player who receives his learned desired set also receives his actual desired set. This means that the total welfare obtained is at least $1$. The maximum welfare any allocation can obtain is $\min\{n,k\}$ since the total number of players getting their desired set is upper bounded by $k$ and $n$. Thus, the efficiency of the computed PAC Equilibrium is $\ge \frac{1}{\min\{n, k\}}$
\end{proof}
Similar to unit demand markets, we show that our result in Theorem \ref{thm:SM-PAC-efficiency} is tight and no algorithm can guarantee a better efficiency.
\begin{restatable}{theorem}{thmSMinfobounds}\label{thm:SM-info-bounds}
Let $\cal S, v(\cal S)$ be a set of samples along with its valuations, $\cal V$ be the set of single minded valuation function profiles which are consistent with the set of samples and $\cal B \subset \R^n_{+}$ be the set of all feasible budgets, i.e., the set of all budgets in $\mathbb{R}^n_{+}$ such that $b_1 > b_2 > \dots > b_n$. Then, we have 
\begin{align*}
    \min_{v \in \cal V} \max_{\cal A} \min_{\cal S \subseteq 2^G, \vec{b} \in \cal B} \ER_v(\cal A) \le \frac{1}{\min\{n, k\}}
\end{align*}
where $\cal A$ is a consistent allocation with respect to the samples.
\end{restatable}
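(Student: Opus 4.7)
I plan to mirror the structure of the proof of Theorem~\ref{thm:UD-info-bounds}: exhibit one particular pair $(\cal S, \vec b)$ and then show by case analysis that every consistent allocation admits some single-minded valuation in $\cal V$ witnessing efficiency at most $\frac{1}{\min\{n,k\}}$.

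First, I take $\cal S = \{G\}$ with $v_i(G) = 1$ for every player, together with an arbitrary strict budget profile $b_1 > b_2 > \cdots > b_n > 0$. Every single-minded profile with nonempty $\Des_i \subseteq G$ is consistent with this sample, so $\cal V$ contains the subfamily $\cal V'$ of profiles in which the top $\min\{n,k\}$ players have pairwise distinct singleton desired sets; for every $v \in \cal V'$ the optimal equilibrium welfare equals $\min\{n,k\}$, so it suffices to find some $v \in \cal V'$ under which the welfare of $\cal A$ is at most $1$. I then exploit the fact that the mechanism only observes $v_i(G) = 1$ (and not which $\Des_i \subseteq G$ is the true desired set) to argue that a consistent allocation must satisfy, for each player $i$, either $A_i = G$ (so that $\Des_i \subseteq A_i$ automatically) or $\Gamma := \sum_{g \in G} p_g > b_i$ (so that $G$ is unaffordable to $i$). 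Disjointness of the $A_i$'s together with the affordability constraint $\Gamma \le b_{i^*}$ for the player with $A_{i^*} = G$ collapses the space of consistent allocations to two cases: either (A) $A_1 = G$ with every other bundle empty and $\Gamma \in (b_2, b_1]$, or (B) no bundle equals $G$ and $\Gamma > b_1$.

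In Case (A), every $v \in \cal V'$ gives welfare exactly $1$---player $1$ receives a bundle containing $\Des_1$ while all other players get the empty set---so the efficiency is $\frac{1}{\min\{n,k\}}$. In Case (B), I build an adversarial $v \in \cal V'$ via Hall's theorem applied to the bipartite graph between the top $\min\{n,k\}$ players and $G$ with an edge $\{i,g\}$ whenever $g \notin A_i$. Since each $A_i \subsetneq G$ gives $|G \setminus A_i| \ge 1$, and for any subset $T$ with $|T| \ge 2$ pairwise disjointness of the $A_i$'s yields $\bigcap_{i \in T} A_i = \emptyset$ and hence $|\bigcup_{i \in T}(G \setminus A_i)| = k \ge |T|$, Hall's condition is satisfied. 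The resulting system of distinct representatives $(g_i)$ lets me set $\Des_i = \{g_i\}$ with $g_i \notin A_i$ for each of the top $\min\{n,k\}$ players (the remaining players' desired sets do not affect the optimum), so the welfare of $\cal A$ under $v$ is $0$ and the efficiency is $0$; consistency of $\cal A$ under this $v$ is preserved because $\Gamma > b_1 \ge b_i$ makes $G$ unaffordable to every player.

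The main obstacle is justifying the universal-consistency step that collapses the space of consistent allocations to Cases (A) and (B); this is where the informational asymmetry between the mechanism, which only sees $v_i(G) = 1$, and the adversary, who may choose any $\Des_i \subseteq G$, becomes essential. Once that reduction is in hand, Case (A) is immediate and Case (B) is a routine Hall's theorem argument, after which taking the minimum over $(\cal S, \vec b)$ yields the stated bound.
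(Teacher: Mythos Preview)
Your proof is correct and uses the same hard instance $\cal S=\{G\}$ with $v_i(G)=1$ and the same adversarial family $\cal V'$ as the paper, but the argument that follows is organized differently. The paper does \emph{not} pass through your ``universal-consistency'' reduction at all: it simply splits on whether all goods go to a single player (welfare $\le 1$ for every $v\in\cal V'$) or to at least two players, and in the second case it constructs the adversarial $v$ by a direct cyclic assignment---if $i_1,\dots,i_{n'}$ are the players with nonempty bundles, set $\Des_{i_j}$ to be any good in $A_{i_{j+1\bmod n'}}$---rather than invoking Hall's theorem. This is more elementary and in fact proves a bit more: the $\tfrac{1}{\min\{n,k\}}$ bound holds for \emph{every} allocation, not only for universally consistent ones, so your reduction step (and the affordability constraint $\Gamma\le b_{i^*}$ you use to pin down $i^*=1$) are not needed. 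Your route, on the other hand, makes the role of the consistency constraint explicit and gives a cleaner dichotomy (some $A_i=G$ versus none), which avoids the mild sloppiness in the paper's case split about partially unallocated goods.

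One small gap in your Case~(B): when $n>k$, the Hall matching only assigns desired goods to the top $k$ players, and you then assert total welfare $0$. But players $k+1,\dots,n$ could still receive their desired singleton if you leave their $\Des_i$ unspecified; you need to also set each remaining $\Des_i$ to some good in $G\setminus A_i$ (possible since $A_i\subsetneq G$) to force their contribution to zero. The parenthetical ``the remaining players' desired sets do not affect the optimum'' is true for the denominator but not for the numerator of $\ER_v(\cal A)$.
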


(Proof in Appendix~\ref{apdx:single-minded})\\

We now also show that our learned allocations are envy free. Note that this result about envy is stronger than the probabilistic result implied by the fact that the learned allocation is a PAC Equilibrium. It shows that, irrespective of what the samples are, the allocation output by Algorithm \ref{algo:singleminded} is guaranteed to be envy free.
\begin{restatable}{prop}{propSMenvyfree}\label{prop:SM-envyfree}
Let $(\cal A, \vec{p})$ be the output of Algorithm \ref{algo:singleminded} on valuations learned as in Proposition \ref{prop:SM-PAC-learnable}; then $(\cal A, \vec{p})$ is envy free.
\end{restatable}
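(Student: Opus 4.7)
The plan is to derive envy-freeness with respect to the true valuations $v$ from the equilibrium property for the learned valuations $\bar v$. By Theorem~\ref{thm:SM-equilibrium}, $(\cal A, \vec p)$ is a Walrasian equilibrium for $\bar v$, and as the paper notes right after equation~\eqref{eq:EF-allocation}, any Walrasian equilibrium is automatically envy-free with respect to the valuations it equilibrates; so $(\cal A, \vec p)$ is envy-free for $\bar v$: for every $i, j \in N$, either $A_j \notin D_i(\vec p, b_i)$ or $\bar v_i(A_i) \ge \bar v_i(A_j)$.

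The core of the proof is to lift this envy-freeness from $\bar v$ to $v$. I would argue by contradiction: suppose player $i$ envies player $j$ under $v$. Since $v$ is $\{0,1\}$-valued, this means $v_i(A_j) = 1 > 0 = v_i(A_i)$ with $A_j \in D_i(\vec p, b_i)$; equivalently, $\Des_i \subseteq A_j$, $\Des_i \not\subseteq A_i$, and $\sum_{g \in A_j} p_g \le b_i$. The pivotal step is to show $\bar v_i(A_j) = 1$, i.e., $\bar{\Des}_i \subseteq A_j$. Once we have this, envy-freeness for $\bar v$ forces $\bar v_i(A_i) \ge 1$, so $\bar{\Des}_i \subseteq A_i$; combined with $\Des_i \subseteq \bar{\Des}_i$ from Proposition~\ref{prop:SM-PAC-learnable}, we obtain $\Des_i \subseteq A_i$, contradicting our assumption.

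The hard part is precisely the implication $\Des_i \subseteq A_j \Rightarrow \bar{\Des}_i \subseteq A_j$, since $\bar{\Des}_i$ may be strictly larger than $\Des_i$. To handle it I would trace the fate of an arbitrary $g^* \in \bar{\Des}_i \setminus \Des_i$ inside Algorithm~\ref{algo:singleminded}. By the definition $\bar{\Des}_i = \bigcap_{S \in \cal S : v_i(S) > 0} S$, every sample $S$ that witnesses a positive value for $i$ contains $g^*$, which tightly constrains the set of players that can demand $g^*$ when it is processed. Combining this structural observation with the \texttt{SetPrice} rule (which only attaches positive prices to goods with multiple demanders) and the affordability bound $\sum_{g \in A_j} p_g \le b_i$, I would argue that $g^*$ must in fact lie in $A_j$. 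The subtle subcase I expect to wrestle with is when $j = n$ receives $g^*$ as a leftover good at price zero, which bypasses the demand-set accounting of the main loop and will require separate bookkeeping about the state of the $B_i$'s at the moment $g^*$ was last eligible for a positive-price allocation.
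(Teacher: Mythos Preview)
Your reduction to envy-freeness under $\bar v$ is a clean reframing, and it dispatches the case $\bar{\Des}_i = \Des_i$ immediately. The gap is in the ``hard part'': your sketch for proving $\bar{\Des}_i \subseteq A_j$ does not contain the idea that actually makes it work. The observation you single out---that every sample $S$ with $v_i(S)>0$ contains $g^*$---is just the definition of $g^* \in \bar{\Des}_i$; it says nothing about who else demands $g^*$ in Algorithm~\ref{algo:singleminded}, since demand there is governed by the learned sets $\bar{\Des}_\ell$, not by $i$'s positive samples. Without a link between $g^*$ and player $j$'s demand, there is no reason $g^*$ should land in $A_j$, and the \texttt{SetPrice}/affordability constraints give you no traction.

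The missing structural fact is the one the paper exploits: since (for $j$ not the leftover recipient) $A_j \subseteq \bar{\Des}_j$, we get $\Des_i \subseteq \bar{\Des}_j$; hence every sample $S$ with $v_j(S)>0$ satisfies $S \supseteq \bar{\Des}_j \supseteq \Des_i$, so $v_i(S)>0$ too, and therefore $\bar{\Des}_i \subseteq \bar{\Des}_j$. This is the right direction to read the sample structure---from $j$'s positive samples to $i$'s, not the other way around---and it is what lets you talk about $j$'s demand for $g^*$. From here the paper does not try to establish $\bar{\Des}_i \subseteq A_j$; it instead looks at the first goods of $\bar{\Des}_i$ processed (the set $G'$), argues they must all go to $j$ (else $j$'s demand is zeroed before $\Des_i$ is allocated), and observes that since $i$ also demands them at that moment, \texttt{SetPrice} prices them above $b_i$, contradicting affordability of $A_j$. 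Your route via $\bar v$-envy-freeness could in principle be completed, but only after inserting exactly this $\bar{\Des}_i \subseteq \bar{\Des}_j$ step and the pricing-order analysis; at that point the contradiction already fires and the detour through $\bar v_i(A_j)=1$ becomes unnecessary.
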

\begin{proof}
Assume for contradiction that there exists a player $i$ who prefers the bundle $A_j$ to their own and can afford it. Also assume for now that $j \ne 1$. 

This means that $A_j$ contains the desired set $D_i$. Note that the learned desired set of player $i$ (say $\bar{D}_i$) cannot be equal to $D_i$. If it was, then player $i$ would be able to afford goods in $D_i$ when they are allocated, causing the pricing mechanism to make goods in $D_i$ unaffordable to player $i$ -- a contradiction.

In addition to this, if all the goods in $D_i$ were given to player $j$ and $j$ is only allocated goods in their learned desired set (since $j\ne1$), we have $D_i \subseteq \bar{D}_j$. This means that $v_i(\bar{D}_j) = 1$ and since $\bar{D}_i$ and $\bar{D}_j$ are formed by the intersection of samples in $\cal S$, all the samples in $\cal S$ which are used to compute $\bar{D}_j$ will be used when computing $\bar{D}_i$. This means $\bar{D}_i \subseteq \bar{D}_j$. 

Now, let $G'$ denote the set of goods in $\bar{D}_i \setminus D_i$ that have been allocated before any good in $D_i$ is allocated. $G'$ cannot be empty since if it was, the first good in $D_i$ which gets allocated will either be unaffordable to player $i$ or allocated to player $i$; both alternatives create a contradiction. If any good in $G'$ was allocated to any player other than $j$, $j$'s desired set would have been set to $\emptyset$ and no goods in $D_i$ would have been allocated to $j$ resulting in a contradiction. If, on the other hand, all the goods in  $G'$ were allocated to player $j$, it would be allocated at a price higher than $b_i$ since player $i$ would desire these goods as well resulting in the entire bundle being unaffordable to player $i$ which is also a contradiction.

When $j = 1$, if player $j$ was given any leftovers, the bundle would not be affordable to player $i$ since $b_1 > b_i$. If player $j$ was not given any leftovers, we use an analogous argument to show that $\bar{D}_i \subseteq \bar{D}_j$, leading to a conclusion similar to the one reached above.
\end{proof}

\section{Additive Markets}\label{sec:additive}
In additive markets, each player has additive valuations. The valuation of a bundle is equal to the sum of the valuations of every good in that bundle: $v_i(S) = \sum_{g \in S}v_i(\{g\})$.
While additive valuations are PAC-Learnable, we cannot use Proposition \ref{prop:lowerbound-equilibrium} to learn a PAC-Equilibrium since in a lot of cases, we cannot learn an underestimate of the valuations. This can be seen using Example \ref{ex:unitdemandindirect}. 

Although additive Fisher markets with indivisible goods have recently received a lot of attention, there are still many open questions regarding the efficient computation of a market clearing equilibrium. \shortciteA{babaioff2017additive} examine the specific case where there are only two players and \shortciteA{branzei2015unitdemand} show that it is computationally intractable to decide if a market has a competitive equilibrium when budgets are equal. This dearth of positive algorithmic results means that even if we could accurately learn the valuation of each good (which is not guaranteed and depends on the samples), we may not be able to compute an equilibrium in polynomial time. In this paper, we take a different approach and attempt to learn an equilibrium directly (using Theorem \ref{thm:consistency}); however, our outcome is not necessarily market clearing.

Our approach is described in Algorithm \ref{algo:additive-valuations}. The algorithm has three steps. First, we pre-process the samples to ensure that there are no proper subsets in the samples. This is done to ensure that no sample which is a superset of another sample is allocated. We can remove the supersets and replace them by the set difference between the superset and the subset: we can derive the value of this bundle under additive valuations, as executed in the function PreProcess. 

The second step allocates samples to players. To each player, the algorithm allocates the favourite sample among all the unallocated samples. Here, a sample is unallocated if no good in the sample has been allocated. It then prices each good equally such that the total price is equal to the budget of the player. 

The last step ensures consistency, it checks each of the original samples to see if a player prefers it over their own sample and can afford it. If there exists such a player, the algorithm proceeds to set the price of one of the goods in the sample to infinity to ensure that no player can afford it. This good is chosen as follows: if the sample has an unallocated good, then the unallocated good is chosen. If the sample does not have an unallocated good, the algorithm takes away a good from the sample which belonged to the player with the least budget and then sets its price to infinity. We refer to the act of setting the price of a good to infinity as burning a good. 

It is easy to see because of the third step that the algorithm is always consistent. It also worth noting that as long as we can underestimate the valuation in Line \ref{line:preprocess} in Algorithm \ref{algo:additive-valuations}, we will always end up with a consistent outcome. This means that this algorithm could be modified for any class of valuations to output a consistent outcome. 

\begin{algorithm}
\LinesNumbered
\DontPrintSemicolon
\KwIn{A set of samples $\cal S$; player valuations over samples $v(\cal S)$ and budgets $b_1 > \dots>b_n$}
\SetKwFunction{PP}{PreProcess}
\SetKwProg{Fn}{Function}{:}{}
$\cal S', \tilde{v}(\cal S') \gets $\PP{$\cal S, v(\cal S)$}; $\vec p \gets \vec 0$\;
\For{$i \leftarrow 1$ to $n$}{
    $B_i \gets \text{ some set in }\argmax_{T \in \cal S'}\tilde{v}_i(T)$\;
    Allocate $B_i$ to player $i$, i.e., $A_i \gets B_i, \tilde{v}_i(A_i) \gets \tilde{v}_i(B_i)$\;
    $p_g \gets \frac{b_i}{|B_i|} \; \forall g \in B_i $\;
    $\cal S' \gets \cal S' \setminus \bigcup_{S \in \cal S':S \cap B_i \ne \emptyset}S$\;
}

\While{$\exists i \in N, S \in \cal S$  s.t $\tilde{v}_i(A_i) < v_i(S) \land \sum_{g \in S} p_g \le b_i$}{
    \If{$\exists g \in S$ s.t. $g \notin \bigcup_{i \in N} A_i$}{
        $p_g \gets \infty$\;
    }
    \Else{
        $j \gets \argmin_{i \in N: S \cap A_i \ne \emptyset} b_i$\;
        $g \gets $ any good in $A_j \cap S$\; 
        $p_g \gets \infty$\;
        $A_j \gets A_j \setminus \{g\}$\;
        $p_g \gets \frac{b_j}{|A_j|} \; \forall g \in A_j $\;
        $\tilde{v}_j(A_j) \gets 0$\;
    }
}
If there is an unallocated sample, allocate it to the player with highest value for that sample at price $0$\;
Allocate all leftover goods to player 1 at price $0$\;
\Fn{\PP{$\cal S, v(\cal S)$}}{
    $\cal S' \gets \cal S$\;
    $\tilde{v}(\cal S') \gets v(\cal S)$\;
    \While{$\exists S', S'' \in \cal S'$ s.t. $S' \subsetneq S''$}{
        $\tilde{v}_i(S'') \gets \tilde{v}_i(S'') - \tilde{v}_i(S') \quad \forall i \in N$\; \label{line:preprocess}
        $S'' \gets S'' \setminus S'$\;
    }
    \Return $\cal S', \tilde{v}(\cal S')$\;
}
\caption{Consistent Allocation For Additive Markets}
\label{algo:additive-valuations}
\end{algorithm}

We now prove two efficiency bounds for our algorithm. These bounds hold only for additive valuations. To start with, we show that when the valuations are budget normalised, then the efficiency is inversely related to the number of goods. Before that, we show that no good in player 1's initially allocated sample gets taken away in Lemma \ref{lem:additive-player1}.
\begin{restatable}{lemma}{lemadditiveplayer}\label{lem:additive-player1}
In Algorithm \ref{algo:additive-valuations}, no good in player 1's initially allocated sample gets taken away.
\end{restatable}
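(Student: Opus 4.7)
The plan is to argue by contradiction. Assume some good in $B_1$ (player~$1$'s initially allocated sample) is removed from $A_1$ during the while loop, and consider the first such iteration. Since goods are only removed in the else branch, that iteration enters the else branch, and the selection $j\gets \argmin_{i: A_i \cap S\ne\emptyset}b_i$ must return $j=1$. Because $b_1$ is strictly the largest budget, this means player~$1$ is the only player whose current allocation intersects the triggering sample $S$. Combined with the else-branch guarantee that every good in $S$ is already allocated, this forces $S\subseteq A_1$. Since no good has been removed from $A_1$ yet, $A_1 = B_1$ and every $g\in A_1$ is priced at $b_1/|B_1|$.

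Next I would pin down the triggering player $i$. A short induction on the while loop inside \texttt{PreProcess} shows that for additive valuations the invariant $\tilde v_i(T) = v_i(T)$ holds for every $T\in \cal S'$ after preprocessing, because the update $\tilde v_i(S'') \gets \tilde v_i(S'') - \tilde v_i(S')$ exactly matches the additive identity $v_i(S''\setminus S') = v_i(S'') - v_i(S')$ when $S'\subseteq S''$. Consequently $\tilde v_1(A_1) = v_1(A_1)$; together with $S\subseteq A_1$ and non-negativity of $v_1$, this yields $v_1(S)\le v_1(A_1) = \tilde v_1(A_1)$, ruling out $i=1$. Thus $i\ne 1$, and the affordability condition $\sum_{g\in S}p_g \le b_i$ becomes $|S|\cdot b_1/|B_1| \le b_i < b_1$, forcing $|S| < |B_1|$ and hence $S \subsetneq B_1$.

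To finish, I would invoke the termination condition of \texttt{PreProcess}: when it halts, no two elements of $\cal S'$ stand in a strict inclusion relation. Let $\hat S\in \cal S'$ denote the preprocessed image of $S\in \cal S$; since preprocessing only shrinks samples, $\hat S\subseteq S$. Combined with $S\subsetneq B_1$ this gives $\hat S\subsetneq B_1$, exhibiting two elements of $\cal S'$ in strict subset relation and contradicting the termination condition of \texttt{PreProcess}. The assumed first removal iteration therefore cannot exist, proving the lemma.

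The main obstacle is the careful bookkeeping between the original sample $S\in \cal S$ (consulted by the consistency check in the while loop) and its preprocessed image $\hat S \in \cal S'$ (used when choosing the initial allocation), together with verifying that the equality $\tilde v_1(B_1) = v_1(B_1)$ genuinely relies on additivity; this is exactly what prevents player~$1$ from being able to self-trigger a removal of a good from her own allocation, and it is what makes the antichain property of $\cal S'$ bite.
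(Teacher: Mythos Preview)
Your proof is correct and rests on the same two ingredients the paper uses: the antichain property of $\cal S'$ after \texttt{PreProcess} and the fact that the goods in $B_1$ are priced to total exactly $b_1$. Your contradiction framing on the first removal, together with the explicit verification (via additivity) that $\tilde v_1(B_1)=v_1(B_1)$ so player~$1$ cannot self-trigger, is in fact tighter than the paper's more informal direct argument, which asserts that any offending sample ``will have a good $g\notin A_1$'' without separately dispatching the $S\subseteq B_1$ case.
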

\begin{proof}
The bundle that player 1 is allocated is either a sample or a subset of a sample. Let's call this parent sample $S$. The price of $S$ is at least $b_1$ which is unaffordable to all other players and therefore, no other player can demand it. Any sample intersecting with this sample (say $S'$) may be affordable to other players and the algorithm may burn a good from this sample. However, $S'$ will have a good $g \notin A_1$ because the PreProcess step ensures that no samples are allocated which are proper supersets of other samples. This good either remains unallocated or is allocated to a player with lower budget. Either way, it gets burnt first to ensure consistency leaving the goods in $A_1$ untouched.
\end{proof}

\begin{theorem}
When $\forall i \in N, \max_{g \in G} v_i(\{g\}) = b_i $, then Algorithm \ref{algo:additive-valuations} outputs an allocation with $\ER_v(\cal A) \ge \frac{1}{k}$
\end{theorem}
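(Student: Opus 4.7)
The plan is to bound $\ER_v(\cal A) = \frac{\sum_i v_i(A_i)}{\sum_i v_i(A_i^*)}$ by separately bounding the numerator from below and the denominator from above. For the denominator: under the normalization $\max_{g \in G} v_i(\{g\}) = b_i$, we have $v_i(\{g\}) \le b_i \le b_1$ for every player $i$ and every good $g$. Since valuations are additive and each good is allocated to at most one player in $\cal A^*$, summing yields $\sum_i v_i(A_i^*) \le \sum_{g \in G} b_1 = k b_1$.

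The main content of the proof is showing $v_1(A_1) \ge b_1$, which immediately gives $\sum_i v_i(A_i) \ge v_1(A_1) \ge b_1$. Let $g^*$ be a good achieving $v_1(\{g^*\}) = b_1$. I would split into two cases. If $g^*$ belongs to no original sample, then $g^*$ never enters the pre-processed collection $\cal S'$, is never allocated during the main loop, and cannot be burnt (burning only affects goods in samples). Thus $g^*$ ends up as a leftover good and is assigned to player~1 at the final line of Algorithm~\ref{algo:additive-valuations}, giving $v_1(A_1) \ge v_1(\{g^*\}) = b_1$. If $g^*$ does belong to some original sample, I would argue by induction on the pre-processing steps that $g^*$ remains contained in some $T^* \in \cal S'$: each pre-processing step replaces a proper superset $S''$ by $S'' \setminus S'$ while retaining $S'$ itself, so $\bigcup_{T \in \cal S'} T$ is invariant under \texttt{PreProcess}. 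Additivity (together with the way \texttt{PreProcess} updates $\tilde v_i$) guarantees $\tilde v_1(T) = v_1(T)$ for every $T \in \cal S'$. Hence $\tilde v_1(T^*) \ge v_1(\{g^*\}) = b_1$. Because player~1 chooses first, $B_1 \in \argmax_{T \in \cal S'} \tilde v_1(T)$ satisfies $v_1(B_1) = \tilde v_1(B_1) \ge \tilde v_1(T^*) \ge b_1$. Lemma~\ref{lem:additive-player1} ensures that no good in $B_1$ is taken away during the consistency-enforcing loop, so $A_1 \supseteq B_1$ and $v_1(A_1) \ge b_1$.

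Combining the two bounds yields $\ER_v(\cal A) \ge b_1 / (k b_1) = 1/k$, as required. The principal technical obstacle is tracking player~1's bundle through \texttt{PreProcess} and through the burning loop, where goods can be removed from allocations. Lemma~\ref{lem:additive-player1} resolves the latter issue, while additivity plus the observation that \texttt{PreProcess} preserves the union of samples (and faithfully updates $\tilde v$) resolves the former; the rest is arithmetic.
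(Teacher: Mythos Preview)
Your proposal is correct and follows essentially the same approach as the paper: both split into the two cases depending on whether player~1's favorite good appears in a sample, invoke Lemma~\ref{lem:additive-player1} in the second case, and bound the denominator by $kb_1$ via the normalization $v_i(\{g\}) \le b_1$. Your treatment is in fact slightly more careful than the paper's in explicitly tracking $g^*$ through \texttt{PreProcess} and verifying that $\tilde v_1$ agrees with $v_1$ on $\cal S'$ under additivity, whereas the paper simply asserts that player~1 is ``allocated his favourite sample'' of value at least $b_1$.
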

\begin{proof}
Algorithm \ref{algo:additive-valuations} always ensures the first player has a bundle with valuation at least $ b_1$.
If the first player's favourite good is not present in any sample, he receives at price 0 resulting in a valuation of at least $b_1$. 

If the first player's favourite good is present in the samples, then there exists a sample (with the first player's favourite good in it) which is valued at at least $b_1$ by the first player. Since the first player is allocated his favourite sample, he is given a bundle whose value is at least $ b_1$. By Lemma \ref{lem:additive-player1}, none of these goods are taken away and their final utility is at least $b_1$.

Since the largest amount of value a good can give a player is $b_1$. The total utility of any allocation is upper bounded by $kb_1$. This gives us the following bound:
\begin{align*}
    \ER_v(\cal A) = \frac{\sum_{i \in N} v_i(A_i)}{\sum_{i \in N}v_i(A_i^*)} \ge \frac{ v_1(A_1)}{\sum_{i \in N}v_i(A_i^*)} \ge \frac{b_1}{kb_1} = \frac{1}{k}
\end{align*}
\end{proof}
We now show that this bound is tight for general distributions.
\begin{restatable}{theorem}{thmadditiveinfobounds}\label{thm:additive-info-bounds}
Let $\cal S, v(\cal S)$ be a set of samples along with its valuations, $\cal V$ be the set of additive valuation function profiles which are consistent with the set of samples and are budget normalised, i.e., $\max_{g \in G} v_i(\{g\}) = b_i$ for all the players $i \in N$ and $\cal B \subset \R^n_{+}$ be the set of all feasible budgets, i.e., the set of all budgets in $\mathbb{R}^n_{+}$ such that $b_1 > b_2 > \dots > b_n$. Then, we have 
\begin{align*}
    \min_{v \in \cal V} \max_{\cal A} \min_{\cal S \subseteq 2^G, \vec{b} \in \cal B} \ER_v(\cal A) \le \frac{1}{k - \delta}
\end{align*}
for any $\delta \in (0, k)$ where $\cal A$ is a consistent allocation with respect to the samples.
\end{restatable}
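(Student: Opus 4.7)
The plan is to adapt the proof of Theorem~\ref{thm:UD-info-bounds} to the additive setting, since the additive case admits essentially the same adversarial construction. I will define a family $\cal V'$ of additive, budget-normalised valuation profiles in which each player $i$ owns a single \emph{favourite good} $g_i^*$ with $v_i(g_i^*) = b_i$ and $v_i(g) = 0$ for every $g \ne g_i^*$; note that this $v_i$ is trivially additive and satisfies the budget-normalisation condition. I will further require that the top $\min\{n,k\}$ players (ranked by budget) have pairwise distinct favourites. The budget vector is chosen exactly as in the unit-demand proof: $\vec b' = (b_1 - \delta_i)_{i=1}^n$ with $\delta_1 = 0 < \delta_2 < \cdots < \delta_n$ and $\sum_{i=1}^{\min\{n,k\}} \delta_i = \delta b_1$. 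The sample set is the singleton $\cal S = \{G\}$ with reported values $v_i(G) = b_i$, which is consistent with every $v \in \cal V'$ since additivity plus $\max_g v_i(g) = b_i$ plus $\sum_g v_i(g) = b_i$ forces exactly one positively valued good per player.

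Under any $v \in \cal V'$, the welfare-maximising equilibrium allocates each top-$\min\{n,k\}$ player their distinct favourite, producing welfare at least $(k-\delta) b_1$ in the regime of interest. I then case-split any consistent allocation $\cal A$ in exact analogy with Theorem~\ref{thm:UD-info-bounds}. If $\cal A$ gives a non-empty bundle to at most one player, then under any $v \in \cal V'$ that single recipient holds at most one favourite good and thus contributes at most $b_1$ to total welfare, yielding an efficiency ratio at most $\frac{1}{k-\delta}$. If instead $\cal A$ gives non-empty bundles to players $i_1, \ldots, i_{n'}$ with $n' \ge 2$, I adversarially pick $v \in \cal V'$ by placing favourites cyclically, $g_{i_j}^* \in A_{i_{j+1 \bmod n'}}$, and fill in the favourites for the remaining top players from the still-unused goods so that distinctness is preserved; then no allocated player obtains their favourite, the total welfare collapses to $0$, and the ratio is trivially at most $\frac{1}{k-\delta}$.

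The main technical subtlety is ensuring that the cyclic construction is always compatible with the distinctness requirement on favourites for players outside $\{i_1, \ldots, i_{n'}\}$ who are still among the top $\min\{n,k\}$: I will handle this by a counting argument showing that, since the $n'$ cyclically placed favourites occupy at most $n'$ goods inside $\bigcup_j A_{i_j} \subseteq G$, there are always enough remaining goods in $G$ to accommodate distinct favourites for the other top players. A secondary point is reconciling the stated bound $\frac{1}{k-\delta}$ with the direct construction, which naturally produces $\frac{1}{\min\{n,k\}-\delta}$; I expect to handle the $n<k$ regime by enlarging player~$1$'s favourite set to $k-n+1$ goods (each valued $b_1$), which lifts the optimum to $(k-\delta)b_1$ while a refined single-player analysis still confines the algorithm's attainable welfare to the ratio upper bound.
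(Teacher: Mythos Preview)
Your construction for $n \ge k$ is essentially the paper's. The gap is in the $n < k$ regime.

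Your proposed fix there---giving player~$1$ a set of $k-n+1$ favourites each worth $b_1$---breaks the argument. If you keep $\cal S = \{G\}$, additivity forces the sample to reveal $v_1(G) = (k-n+1)b_1$ exactly. An allocation that hands all of $G$ to player~$1$ then \emph{guarantees} welfare $(k-n+1)b_1$ against every consistent profile, since $v_1(G)$ is pinned down regardless of which goods are the favourites or how the other players are configured. The resulting efficiency ratio is at least $\tfrac{k-n+1}{k-\delta}$, far above $\tfrac{1}{k-\delta}$; no ``refined single-player analysis'' can recover from this, because the obstacle is the sample itself.

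The paper's construction for $n<k$ is genuinely different. It picks $G' \subseteq G$ with $|G'| = n$ and sets $\cal S = \{G'\}$ rather than $\{G\}$. Each player has one favourite in $G'$ valued $b_i$, so $v_i(G') = b_i$ is consistent; in addition, every good in $G \setminus G'$ is valued $b_j$ by exactly one adversarially chosen player $j$. This lifts the optimal equilibrium welfare to at least $(k-\delta)b_1$: the $n$ favourites in $G'$ plus the $k-n$ goods outside each contribute roughly $b_1$. Because the algorithm sees no sample touching $G \setminus G'$, a separate cyclic assignment over those goods shows that no allocation can guarantee any welfare from them; the welfare from $G'$ is then bounded by $b_1$ via the same single-recipient-versus-cyclic dichotomy you already have. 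The missing idea is to hide the extra value \emph{outside} the sampled bundle instead of inflating a sampled player's total.
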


(Proof in Appendix~\ref{apdx:additive})\\

Our next bound does not require the valuations to be normalised but imposes conditions on the samples and depends on the disparity in the valuations of goods.

In Proposition \ref{prop:additive-distribution}, we show that when samples are disjoint, the efficiency varies inversely with the disparity in valuations
\begin{restatable}{prop}{propadditivedistribution}\label{prop:additive-distribution}
When all the samples in $\cal S$ are pairwise disjoint, then Algorithm \ref{algo:additive-valuations} outputs an allocation with $\ER_v(\cal A) \ge \frac{1}{\rho}$ where $\rho = \max_{g \in G}\frac{\max_{i \in N}v_i(\{g\})}{\min_{i \in N}v_i(\{g\})}$
\end{restatable}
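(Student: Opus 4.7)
The plan is to verify that under pairwise-disjointness, Algorithm~\ref{algo:additive-valuations} behaves especially cleanly: no good ever gets burned in the consistency while-loop, every good ends up allocated to some player, and a per-good dominance argument then yields the $1/\rho$ bound directly from additivity.

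The first step is to observe that pairwise-disjoint samples kill both the PreProcess step (no proper subsets exist) and the removal cascade $\cal S' \gets \cal S' \setminus \bigcup_{S \cap B_i \ne \emptyset} S$ in the main loop (only $B_i$ itself is removed). Thus the first $\min\{n, |\cal S|\}$ players each pick one whole sample at price $b_i$, and any remaining samples stay untouched (every good in them has price $0$).

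The key step is showing the while loop's condition $v_i(A_i) < v_i(S) \land \sum_{g \in S} p_g \le b_i$ can never hold for any $(i, S)$. I would split into two cases: (i) $S$ was allocated to some $j$; then its price is $b_j$, so affordability forces $b_j \le b_i$, i.e., $i \le j$, but at the moment $i$ picked, $S$ was still in $\cal S'$ by disjointness and $i$ preferred $A_i$ to everything in $\cal S'$, contradicting the strict preference; (ii) $S$ is still unallocated when the while loop fires, meaning $S$ was never removed from $\cal S'$ and hence was available when $i$ picked. Either way $v_i(A_i) \ge v_i(S)$, so no good is burned. After the post-processing lines, every good sits in exactly one player's bundle.

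Finally, by additivity,
\begin{align*}
\sum_i v_i(A_i) = \sum_{g \in G} v_{\pi(g)}(g) \quad \text{and} \quad \sum_i v_i(A_i^*) = \sum_{g \in G} v_{\pi^*(g)}(g),
\end{align*}
where $\pi$ and $\pi^*$ denote the allocation maps of Algorithm~\ref{algo:additive-valuations} and the optimal equilibrium $\cal A^*$ respectively. For each $g$ the definition of $\rho$ gives $v_{\pi(g)}(g) \ge \min_j v_j(g) \ge \rho^{-1} \max_j v_j(g) \ge \rho^{-1} v_{\pi^*(g)}(g)$, so summing over $g$ yields $\ER_v(\cal A) \ge 1/\rho$; if $\min_j v_j(g) = 0$ for some good then $\rho = \infty$ and the bound is vacuous. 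The only genuinely delicate piece is the consistency case analysis, because it is the one place where disjointness and the strict budget order jointly do the work; everything else is bookkeeping on top of additivity.
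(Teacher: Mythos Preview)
Your proposal is correct and follows essentially the same approach as the paper's proof: argue that under pairwise disjointness no good is burned and every good is allocated, then use the per-good inequality $v_{\pi^*(g)}(g)\le \rho\, v_{\pi(g)}(g)$ together with additivity to sum to the $1/\rho$ bound. Your case analysis for why the while-loop condition never fires is more explicit than the paper's (which simply says ``each player gets their favourite sample that has not been allocated yet'' and ``samples allocated to players with higher budgets are unaffordable''), but the underlying argument is identical.
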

\begin{proof}
When all the samples are pairwise disjoint, all the goods are allocated and none of them are burnt. This is because, each player gets their favourite sample that has not been allocated yet. All samples that have been allocated to players with higher budgets are unaffordable to this player. 
Therefore, if $v_g$ is the amount of utility gained by the player who is allocated good $g$ in $\cal A$ and $v_g^*$ refers to the same for allocation $A^*$, then
\begin{align*}
    \ER_v(\cal A) = \frac{\sum_{g \in G} v_g}{\sum_{g \in G} v_g^*} \ge \frac{\sum_{g \in G} v_g}{\sum_{g \in G} \rho v_g} = \frac{1}{\rho}
\end{align*}
\end{proof}

\section{Submodular Markets}\label{sec:submodular}
In submodular markets, each player has monotone submodular valuations, i.e., each player's valuation function $v_i: 2^G \mapsto \mathbb{R}^+ \cup \{0\}$ satisfies the following three conditions:
\begin{enumerate}[(a)]
    \item $v_i(\emptyset) = 0$
    \item For any two $S, T \subseteq G$ such that $S \subseteq T$, $v_i(S) \le v_i(T)$.
    \item For any two $S, T \subseteq G$,
    \begin{align}
        v_i(S) + v_i(T) \ge v_i(S \cup T) + v_i(S \cap T) \label{eqn:submodular}
    \end{align}
\end{enumerate}
The class of monotone submodular valuations contains the class of additive valuations, as well as many others. 
This increase in complexity comes with an even greater dearth of positive algorithmic results. 
In addition to this, monotone submodular valuations cannot be efficiently PAC learned \shortcite{balcan2011submod}. 
So, we cannot use Proposition \ref{prop:lowerbound-equilibrium} to learn a PAC Equilibrium. 

We, instead, use a direct learning approach similar to that of additive markets but modify our algorithm slightly due to two reasons. First, the pre-process step that worked for additive valuations will not work for submodular valuations since we cannot accurately determine the value of the bundle that results when you remove a subset from a set. However, we can underestimate it using equation \eqref{eqn:submodular} as follows: given two sets $A, B \subseteq G$ such that $A \subseteq B$, then by substituting $S = B\setminus A$ and $T = A$ in equation \eqref{eqn:submodular} we get
\begin{align*}
    v_i(B \setminus A) \ge v_i(B) - v_i(A)
\end{align*}
Therefore, $v_i(B) - v_i(A)$ gives us an underestimate of $v_i(B \setminus A)$. Furthermore, the inequality does not change if we replace $v_i(B)$ with an underestimate of $v_i(B)$. 

Second, because we have to underestimate valuations, our efficiency guarantee may not hold. In order to prevent this, we modify our algorithm so that it can use extra information about the valuations. This is done using an additional input parameter $c_i$ for all $i \in N$ which specifies an underestimate of the value of the highest valued good, i.e., for all $i \in N$:
$c_i \le \max_{g \in G} v_i(\{g\})$.
Note that when there is no available information about the value of $c_i$, we can set $c_i = 0$.

The algorithm has been described in Algorithm \ref{algo:submod-valuations}. The algorithm has the same three steps as that of Algorithm \ref{algo:additive-valuations} but the first two steps are modified to work for submodular valuations. 

The PreProcess step removes any supersets from the set $\cal S$ and replaces them with the set difference between the superset and the subset. It also computes the set of goods which could have a value $\ge c_i$ and stores it in the set $F_i$. Note that $F_i$ is never empty and has a value of at least $c_i$ to player $i$. The following lemma proves it.
\begin{restatable}{lemma}{lemsubmodfi}\label{lem:submod-fi}
In the set $\cal F$ output by the PreProcess function of Algorithm \ref{algo:submod-valuations}, $F_i \ne \emptyset$ and $v_i(F_i) \ge c_i \quad \forall i \in N$.
\end{restatable}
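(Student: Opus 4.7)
The plan is to establish the two claims — non-emptiness of $F_i$ and $v_i(F_i) \ge c_i$ — simultaneously, using the input guarantee $c_i \le \max_{g \in G} v_i(\{g\})$, the monotonicity of $v_i$, and the submodular underestimate $v_i(B \setminus A) \ge v_i(B) - v_i(A)$ when $A \subseteq B$.

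First, I would identify a canonical witness: by the assumption on $c_i$, there is some $g_i^\star \in G$ with $v_i(\{g_i^\star\}) \ge c_i$. The strategy is then to show that, for each player $i$, PreProcess always returns an $F_i$ whose true valuation is at least $v_i(\{g_i^\star\})$, so that monotonicity of $v_i$ immediately yields both non-emptiness and $v_i(F_i) \ge v_i(\{g_i^\star\}) \ge c_i$. In particular, when no processed sample supplies a subset whose underestimated value already meets the threshold, PreProcess falls back to a default candidate (such as $\{g_i^\star\}$ itself, or the ground set $G$), for which the invariants are immediate.

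Second, I would argue by induction over the superset-replacement iterations inside PreProcess, maintaining the invariant that the current candidate for $F_i$ has \emph{true} value at least $c_i$. At a replacement step, a sample $S''$ that strictly contains $S'$ is replaced by $S'' \setminus S'$ with recorded value $\tilde v_i(S'' \setminus S') \ge v_i(S'') - v_i(S')$. Combining this with monotonicity, one of $v_i(S')$ and $v_i(S'' \setminus S')$ must dominate $v_i(\{g_i^\star\})$, because $g_i^\star$ must lie on one side of the partition; PreProcess retains the appropriate survivor (or equivalently, keeps $g_i^\star$ registered in $F_i$) and the invariant is preserved.

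The principal obstacle is reconciling the algorithmic bookkeeping, which only inspects the underestimate $\tilde v_i$, with the lemma's assertion about the true valuation $v_i(F_i)$. Bridging this gap requires showing that no replacement step ever discards every candidate bundle whose true value is at least $c_i$. The cleanest route is to verify that $g_i^\star$ (or some good of at least as large singleton value) is always traceable through the sequence of replacements — either as an element of the retained piece of a split or as the default fallback — whereupon monotonicity of $v_i$ closes the argument.
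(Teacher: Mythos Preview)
Your proposal rests on a misreading of how $F_i$ is produced. In Algorithm~\ref{algo:submod-valuations}, the set $F_i$ is \emph{not} built incrementally through the superset-replacement loop, nor does it depend on the underestimates $\tilde v_i$. It is defined in a single closed-form line at the end of \texttt{PreProcess}, using the \emph{original} sample set $\cal S$ and the \emph{true} valuations $v_i$:
\[
F_i \;=\; \Bigl(\bigcup_{S \in \cal S:\, v_i(S) \ge c_i} S \;\setminus\; \bigcup_{S' \in \cal S:\, v_i(S') < c_i} S'\Bigr)\;\cup\;\Bigl(G \setminus \bigcup_{S \in \cal S} S\Bigr).
\]
Consequently, your inductive invariant ``over the superset-replacement iterations'' has nothing to latch onto, and the ``principal obstacle'' you identify --- reconciling $\tilde v_i$ bookkeeping with $v_i(F_i)$ --- is a non-issue, because $\tilde v_i$ never enters the definition of $F_i$.

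The paper's proof is correspondingly short and is essentially the ``witness'' idea you started with, carried out directly against the formula above. Pick any $g$ with $v_i(\{g\}) \ge c_i$ (which exists by the input assumption). If $g$ appears in no sample, it lies in $G \setminus \bigcup_{S \in \cal S} S \subseteq F_i$. If $g$ appears in some sample, then by monotonicity every sample containing $g$ has value $\ge c_i$, so $g$ is in the first union and not in the second; hence $g \in F_i$. Either way $g \in F_i$, so $F_i \ne \emptyset$ and, by monotonicity, $v_i(F_i) \ge v_i(\{g\}) \ge c_i$. No induction, no splitting of samples, and no submodularity inequality is needed.
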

\begin{proof}
There exists at least one good $g$ such that $v_i(\{g\}) \ge c_i$ by definition. Any sample with $g$ will have value at least $c_i$ by the monotone property.

Refer to the definition of $F_i$ in Line \ref{line:fi-def} in Algorithm \ref{algo:submod-valuations}. If the good $g$ is not present in any of the samples, then this good is included in $F_i$. If this good is present in any of the samples then this good will be present in $\bigcup_{S \in \cal S: v_i(S) \ge c_i}S$ and will not be present in $\bigcup_{S' \in \cal S: v_i(S') < c_i}S'$. Therefore, the good will be included in $F_i$.

Since all goods with $v_i(\{g\}) \ge c_i$ will be present in $F_i$ and there is at least one good such that $v_i(\{g\}) \ge c_i$, the lemma follows immediately.
\end{proof}

We then use this in the second step to give a player a bundle of value at least $c_i$ when no other sample guarantees a value of at least $c_i$. Of course, this is not applicable when an element of $F_i$ has been allocated to some other player.

The third step remains the same and ensures consistency since $\tilde{v}_i$ is an underestimate of $v_i$. So, if for any $S \in \cal S$, $v_i(S) > v_i(A_i)$, then, $ {v}_i(S) > \tilde{v}_i(A_i)$.

\begin{algorithm}
\LinesNumbered
\DontPrintSemicolon
\KwIn{A set of samples $\cal S$, player valuations for these samples $v(\cal S)$, budgets $b_1 > b_2 > \dots > b_n$ and $c_i \le \max_{g \in G}v_i(\{g\}) \forall i \in N$}
\SetKwFunction{PP}{PreProcess}
\SetKwProg{Fn}{Function}{:}{}
$\cal S', \tilde{v}(\cal S'), \cal F \gets$ \PP{$\cal S, v(\cal S)$}\;
\For{$i \leftarrow 1$ to $n$}{
    \If{$\tilde{v}_i(S') < c_i \; \forall S' \in \cal S' \land F_i \cap \bigcup_{j = 1}^{i-1}A_j = \emptyset$}{
        Allocate $F_i$ to player $i$, i.e., $A_i \gets F_i,  \tilde{v}_i({A_i}) \gets c_i$\;
    }
    \Else{
        $B_i \gets \text{ some set in }\argmax_{T \in \cal S'}\tilde{v}_i(T)$\;
        Allocate $B_i$ to player $i$, i.e., $A_i \gets B_i, \tilde{v}_i(A_i) \gets \tilde{v}_i(B_i)$\;
    }
    $p_g \gets \frac{b_i}{|A_i|} \; \forall g \in A_i $\;
    $\cal S' \gets \cal S' \setminus \bigcup_{S \in \cal S':S \cap A_i \ne \emptyset}S$\;
}

\While{$\exists i \in N, S \in \cal S$  s.t $\tilde{v}_i(A_i) < v_i(S) \land \sum_{g \in S} p_g \le b_i$}{
    \If{$\exists g \in S$ s.t. $g \notin \bigcup_{i \in N} A_i$}{
        $p_g \gets \infty$\;
    }
    \Else{
        $j \gets \argmin_{i \in N: S \cap A_i \ne \emptyset} b_i$\;
        $g \gets $ any good in $A_j \cap S$\; 
        $p_g \gets \infty$\;
        $A_j \gets A_j \setminus \{g\}$\;
        $\tilde{v}_j(A_j) \gets 0$\;
    }
}

Allocate all leftover goods to player 1 at price $0$\;
\Fn{\PP{$\cal S, v(\cal S)$}}{
    $\cal S' \gets \cal S$\;
    $\tilde{v}(\cal S') \gets v(\cal S)$\;
    \For{$S' \in \cal S'$}{
        \While{$\exists S \in \cal S$ s.t. $S \subsetneq S'$}{
            $\tilde{v}_i(S') \gets \tilde{v}_i(S') - v_i(S) \quad \forall i \in N$\;
            $S' \gets S' \setminus S$
        }
    }
    \If{$\exists S', S'' \in \cal S'$ s.t. $S' \subsetneq S''$}{
        Remove $S'$ from $\cal S'$
    }
    $\cal F = \{F_1, F_2, \dots, F_n\}$\;
    $F_i \gets \bigg(\bigcup_{S \in \cal S: v_i(S) \ge c_i}S \setminus \bigcup_{S' \in \cal S: v_i(S') < c_i}
    S'\bigg) \cup \bigg(G \setminus \bigcup_{S \in \cal S} S \bigg ) \quad \forall i \in N$\; \label{line:fi-def}
   
    \Return $\cal S', \tilde{v}(\cal S'), \cal F$\;
}
\caption{Submodular Markets Consistent Allocation}
\label{algo:submod-valuations}
\end{algorithm}

We now show that when valuations are budget normalised, then the algorithm has an efficiency of at least $\frac1k$. But before we do that, we show that even in this algorithm, none of player $1$'s goods get taken away. 

\begin{restatable}{lemma}{lemsubmodplayer}\label{lem:submod-player1}
In Algorithm \ref{algo:submod-valuations}, none of player 1's goods get taken away.
\end{restatable}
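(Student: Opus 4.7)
The plan is to adapt the argument of Lemma~\ref{lem:additive-player1} and handle the new wrinkle that player~$1$ may instead be allocated $F_1$. First I would observe that a good of $A_1$ is only ever burnt in the else-branch of the while loop, and only when the chosen index $j$ equals $1$. Since $b_1 > b_i$ for every $i > 1$, the condition $j = 1$ forces $\{i : A_i \cap S \ne \emptyset\} = \{1\}$, and combined with the else-branch precondition that every good of $S$ is already allocated, this forces $S \subseteq A_1$. Moreover, the total price of $A_1$ is exactly $b_1$, which no player other than $1$ can afford, so the only witnessing agent $i$ that could enter the while loop on such an $S$ is player~$1$ himself. Thus the task reduces to ruling out an $S \in \cal S$ with $\emptyset \ne S \subseteq A_1$ and $v_1(S) > \tilde v_1(A_1)$.

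\textbf{Case 1:} $A_1 = B_1$ for some preprocessed sample descended from $S^* \in \cal S$. Just as in Lemma~\ref{lem:additive-player1}, any non-empty $S \in \cal S$ with $S \subseteq B_1$ satisfies $S \subseteq S^*$. If $S \subsetneq S^*$ then PreProcess would have subtracted $S$ from $S^*$, contradicting $S \cap B_1 \ne \emptyset$; so $S = S^* = B_1$ with no subtractions and $\tilde v_1(A_1) = v_1(S^*) = v_1(S)$, and the condition $\tilde v_1(A_1) < v_1(S)$ fails.

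\textbf{Case 2:} $A_1 = F_1$, triggered by $\tilde v_1(S') < c_1$ for every $S' \in \cal S'$. I would argue by contradiction: suppose some non-empty $S \in \cal S$ is contained in $F_1$. By the definition of $F_1$, each good of $S$ must appear in some high-value sample (value $\ge c_1$) and in no low-value sample; as $S$ itself is a sample covering those goods, $S$ itself must be high-value, i.e.\ $v_1(S) \ge c_1$. The same reasoning applied to any proper subset $T \in \cal S$ of $S$ (which still lies in $F_1$ and therefore cannot be low-value) lets me descend to a minimal high-value $T^* \in \cal S$ with $T^* \subseteq S$ and no proper subset in $\cal S$. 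Then PreProcess performs no subtraction on $T^*$, so $\tilde v_1(T^*) = v_1(T^*) \ge c_1$. Moreover, $T^*$ cannot be strictly contained in any surviving $S'' \in \cal S'$, since if $T^* \subsetneq S''_{\mathrm{orig}}$ then PreProcess would have subtracted $T^*$ out of $S''_{\mathrm{orig}}$, leaving $T^* \cap S'' = \emptyset$. Hence $T^* \in \cal S'$ with $\tilde v_1(T^*) \ge c_1$, contradicting the case hypothesis.

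The main obstacle is the subtle interaction between the PreProcess step (which only underestimates values, via submodularity) and the coarse threshold $c_1$ baked into $F_1$: one has to rule out that a high-value sample hides entirely inside $F_1$ while its preprocessed copy has value dragged below $c_1$. The descent to a minimal high-value subsample $T^*$, and the observation that $T^*$ escapes every subtraction and every antichain pruning of PreProcess, is the technical heart of the argument.
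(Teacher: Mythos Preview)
Your decomposition into the two cases $A_1=B_1$ versus $A_1=F_1$, and your handling of the $F_1$ case via descent to a minimal high-value subsample $T^*$, match the paper's argument essentially step for step.

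The gap is in Case~1. From $S\subseteq B_1\subseteq S^*$ you infer that $S=S^*$ (``PreProcess would have subtracted $S$ from $S^*$'') and hence that no subtraction occurred, giving $\tilde v_1(A_1)=v_1(S^*)=v_1(S)$. This inference does not go through. Take $\cal S=\{\{a,b,c\},\{c\},\{a,b\}\}$: processing $S^*=\{a,b,c\}$ may subtract $\{c\}$ first and terminate at $\{a,b\}$, never subtracting $S=\{a,b\}$ at all; then $S=B_1\subsetneq S^*$, and the $\tilde v_1$ attached to this processed copy of $\{a,b\}$ is $v_1(\{a,b,c\})-v_1(\{c\})$, which for a genuinely submodular $v_1$ can be strictly below $v_1(S)$. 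All that the termination condition of the inner while loop gives you is $S=B_1$, not $S=S^*$.

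The repair is exactly the mechanism you already deploy for $T^*$ in Case~2. Since $S=B_1$ has no proper subset in $\cal S$, PreProcess leaves the \emph{copy of $S$ itself} untouched, with $\tilde v_1(S)=v_1(S)$; by your antichain argument that copy survives into $\cal S'$; and because $B_1$ is an $\argmax$ over $\cal S'$ you get $\tilde v_1(A_1)\ge v_1(S)$. Only then does your price step become valid: once you know $S=A_1$, the price of $S$ is $b_1$ and no other player can afford it (your earlier sentence that ``the total price of $A_1$ is exactly $b_1$'' does not by itself bound the price of a strict subset of $A_1$).
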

\begin{proof}
If the first player is not allocated $F_1$, then he is allocated a subset of a sample or a sample. Let's call the parent sample $S$. No player can afford this sample since it has a price of at least $ b_1$. However, some players may be able to afford and prefer a sample (say $S'$) which intersects with the allocated bundle. Since the pre-process step ensures that no supersets are allocated, there will be at least one good in $S'$ which is not allocated to the first player. This good either remains unallocated or is allocated to a player with lower budget. Either way, this good is burnt first to ensure consistency leaving the first player's allocated bundle intact.

If the first player is allocated $F_1$, this means that no set in $\cal S'$ can guarantee a value of at least $ c_1$. 

In such a scenario, any sample $S \in \cal S$ which contains a good $g \in F_1$ also contains a good $g' \notin F_1$. Assume for contradiction that this is not the case. Then there exists at least one sample which is a subset of $F_1$. Let $S$ be minimal such that $S \subseteq F_1$ and $S \in \cal S$. This means (from the way we define $F_i$(Line \ref{line:fi-def})), $v_1(S) \ge c_1$. 

$S$ must have a subset $S' \in \cal S$ such that $v_1(S') < c_1$. This is because if it does not have a subset, then $S$ will be in $\cal S'$ and $v_1(S) \ge c_1$ resulting in a contradiction (since player 1 will not need to be allocated $F_1$). Furthermore, since $S$ is minimal by our assumption, we have for any subset $T$ of $S$,$v_1(T) < c_1$. 

From the way we define $F_i$, since $v_1(S') < c_1$, we get $S' \cap F_1 = \emptyset$. Since $S' \subsetneq S$, there are certain elements in $S$ which are not present in $F_1$ which is a contradiction.

Now, since any sample $S \in \cal S$ which contains a good $g \in F_1$ also contains a good $g' \notin F_1$, the good $g'$ remains unallocated or belongs to a player with lower budget. Either way, it gets burnt first to ensure consistency leaving the goods in $F_1$ intact.
\end{proof}

This brings us to our final proof. When we have budget normalised valuations, then Algorithm \ref{algo:submod-valuations} gives us an allocation with efficiency at least $\frac1k$
\begin{restatable}{theorem}{thmsubmodefficiency}\label{thm:submod-efficiency}
When $\max_{g \in G}{v_i(\{g\})} = b_i$, then Algorithm \ref{algo:submod-valuations} outputs an allocation with efficiency $\ER_v(\cal A) \ge \frac1k$
\end{restatable}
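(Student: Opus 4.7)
The plan is to mirror the additive-market efficiency argument: show that player~1 always ends up with a bundle of value at least $b_1$, and separately argue that the socially optimal welfare is bounded above by $k b_1$, which together give the ratio $1/k$.

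For the lower bound on $v_1(A_1)$, the budget-normalization assumption $\max_{g \in G} v_1(\{g\}) = b_1$ allows the algorithm to be run with $c_1 = b_1$, since this satisfies the required $c_i \le \max_{g \in G} v_i(\{g\})$. Player~1 is considered first, so the side condition $F_1 \cap \bigcup_{j<1} A_j = \emptyset$ holds trivially. Two cases arise. If no set $S' \in \cal S'$ has $\tilde{v}_1(S') \ge c_1$, the algorithm allocates $A_1 = F_1$, and Lemma~\ref{lem:submod-fi} gives $v_1(F_1) \ge c_1 = b_1$. Otherwise the algorithm allocates the argmax sample $B_1 \in \cal S'$, which by the branch condition satisfies $\tilde{v}_1(B_1) \ge c_1 = b_1$. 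Because PreProcess only applies the submodular identity $v_i(B \setminus A) \ge v_i(B) - v_i(A)$ iteratively, $\tilde{v}_1$ is a bona fide underestimate of $v_1$, so $v_1(B_1) \ge \tilde{v}_1(B_1) \ge b_1$. Lemma~\ref{lem:submod-player1} then ensures that none of the goods in $A_1$ are burned during the consistency loop, so $v_1(A_1) \ge b_1$ in the final output.

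For the upper bound on the optimal equilibrium welfare, I would use subadditivity of monotone submodular functions (which follows from $v_i(\emptyset) = 0$ and \eqref{eqn:submodular} applied repeatedly). For any allocation $\cal A^* = (A_1^*, \dots, A_n^*)$, we have $v_i(A_i^*) \le \sum_{g \in A_i^*} v_i(\{g\})$. Summing over $i$, using that the $A_i^*$ are pairwise disjoint and that budget normalization gives $v_i(\{g\}) \le b_i \le b_1$ for every player and good, we obtain
\begin{equation*}
\sum_{i \in N} v_i(A_i^*) \;\le\; \sum_{i \in N} \sum_{g \in A_i^*} v_i(\{g\}) \;\le\; \sum_{g \in G} b_1 \;=\; k b_1.
\end{equation*}
Combining with $v_1(A_1) \ge b_1$ yields $\ER_v(\cal A) \ge v_1(A_1)/(k b_1) \ge 1/k$, as required.

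The main obstacle is the careful verification that $\tilde{v}_1$ remains an underestimate of $v_1$ throughout the iterative PreProcess loop, since each peel replaces a learned value by a difference of learned values and could in principle compound errors. The key observation is that applying the submodular inequality with the original $v_i(S)$ on the smaller subset being peeled off preserves the underestimate: if $\tilde v_i(S'') \le v_i(S'')$ and we replace $\tilde v_i(S'')$ by $\tilde v_i(S'') - v_i(S)$ while shrinking $S''$ to $S'' \setminus S$, then $v_i(S'' \setminus S) \ge v_i(S'') - v_i(S) \ge \tilde v_i(S'') - v_i(S)$. Once this invariant is established, the branch conditions in the main loop translate directly into guarantees on the true valuation, and the efficiency bound follows exactly as sketched.
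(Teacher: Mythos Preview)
Your proposal is correct and follows essentially the same argument as the paper: set $c_1=b_1$, show player~1 receives a bundle of value at least $b_1$ (via either $F_1$ and Lemma~\ref{lem:submod-fi} or a high-value processed sample), invoke Lemma~\ref{lem:submod-player1} to ensure this survives the burning phase, and bound the optimal welfare by $kb_1$. The only differences are that you spell out the subadditivity step for the upper bound and verify the underestimate invariant for $\tilde v$ through \textsc{PreProcess}; the paper leaves both of these implicit.
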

\begin{proof}
When valuations are budget normalised, we can set $c_i = b_i$ for every player. If there exists a sample in $\cal S'$ with utility at least $b_1$, then the first player will get allocated a sample with utility at least $b_1$. If not, then the first player will be allocated $F_1$ which has value at least $b_1$. Using Lemma \ref{lem:submod-player1}, none of these goods are taken away from the first player and so his final utility will be at least $b_1$.

Since the maximum utility a good can give a player is upper bounded by $b_1$, the utility of the optimal equilibrium is upper bounded by $kb_1$. This gives us the following efficiency bound:
\begin{align*}
    \ER_v(\cal A) = \frac{\sum_{i \in N} v_i(A_i)}{\sum_{i \in N}v_i(A_i^*)} \ge \frac{ v_1(A_1)}{\sum_{i \in N}v_i(A_i^*)} \ge \frac{b_1}{kb_1} = \frac{1}{k}
\end{align*}
\end{proof}

Since additive valuations are a subset of monotone submodular valuations, Theorem \ref{thm:additive-info-bounds} applies in this case as well. This means the bound in Theorem \ref{thm:submod-efficiency} is tight.

\section{Experimental Evaluation}\label{sec:expts}
Theorems \ref{thm:UD-info-bounds}, \ref{thm:SM-info-bounds} and \ref{thm:additive-info-bounds} show that it is impossible to prove strong efficiency guarantees for our algorithms (or any algorithms that solve this problem). This is mainly due to the possibility of ``bad" distributions which no algorithm can give good efficiency guarantees for. Therefore, to evaluate our algorithms, we examine realistic markets and datasets that our algorithms could be used on.

In order to test our approach on data we would, ideally, require a dataset consisting of bundles of goods, and users' valuations over these bundles. In addition, we would require a dataset that offers us access to agents' {\em true} valuations, so that we have a baseline for comparison. To our knowledge, there are no such publicly available datasets; that said, it is not unreasonable to assume that companies who collect market data (e.g. consumer analytics, or large-scale movie recommendation systems) have access to such datasets. In order to simulate a dataset that meets our specifications, we use the MovieLens dataset \shortcite{movielens} to simulate a market environment. The MovieLens dataset contains users' (ordinal) rankings over movies. In our setting, the movies serve as goods and the users serve as players. We use this dataset to model a setting where a fixed number of movie screenings is offered to a group via a personalized assignment algorithm. When we vary the number of rooms that can screen movies and the number of different (non-intersecting) time slots that we can use to screen the movies, we get different classes of valuation functions that we can evaluate our algorithms on. 

When there are infinitely many screens but only one time slot, player preferences follow unit demand valuations (Section \ref{subsec:unit-demand-expts}). This is because players can only watch one movie since all of them will be screened at the same time; so the value of a bundle of movies will be equal to the value of the best movie in the bundle. 

When there are infinitely many time slots and only one screen, player preferences follow additive valuations (Section \ref{subsec:additive-expts}). This is because players can watch all the movies in their allocated bundle; therefore, the value of a bundle is equal to the sum of the values of every good in the bundle.

When there are a finite number of screens and a finite number of time slots, player preferences follow submodular valuations (Section \ref{subsec:submodular-expts}). This is because, there is a decreasing marginal utility for every good that you add to a bundle since you cannot watch two movies in the same time slot. More specifically, in Section \ref{subsec:submodular-expts} we consider a setting when there are $10$ time slots and an equal number of movies screened in each time slot. The movies in each time slot are randomly chosen right at the beginning. To make the submodular valuations even more non-trivial, we assume that all the players have a threshold value $\Th$ indicating that after watching $\Th$ movies, they cannot gain any value from additional movies. 

We study three different markets based on different levels of supply and demand: 
\begin{inparaenum}[(a)]
\item a sellers' market where the number of players exceeds the number of goods,
\item a buyers' market where the number of goods exceeds the number of players and
\item a balanced market where the number of goods and the number of players are the same.
\end{inparaenum}
These markets have the sizes: 
\begin{inparaenum}[(a)]
\item $n = 50, k = 30$, \label{market50}
\item $n = 30, k = 50$ and \label{market30}
\item $n = 40, k = 40$ respectively.\label{market40}
\end{inparaenum}
The choice of size for these markets are based on two factors. First, we would like to evaluate as many different kinds of markets as possible in terms of demand and supply, i.e., settings where the number of players is greater, smaller and equal to the number of goods. Second, even though we would like to use larger markets, a lack of efficient baselines for additive and submodular markets makes it infeasible to use markets with a very large number of goods and players ($> 100$).

Following \shortciteA{budish2011}, we assume that players have almost equal budgets: for all $i \in N$, $5 < b_i < 6$. More precisely, we set $b_i = 5 + \texttt{Unif}(0,1)$. We slightly perturb player valuations such that no two valuations are equal (for tiebreaking); perturbed valuations always respect the original rankings, $v_i(j) = r_{ij} + \texttt{Unif}(0,0.1)$, where $r_{ij}$ is the rating given for the movie $j$ by player/user $i$. We also normalize valuations such that $b_i = \max_{g \in G} v_i(\{g\})$. In order to simulate a dataset of bundles, we sample bundles of movies from:
\begin{inparaenum}[(a)]
\item uniform product distributions, with goods sampled w.p. $\frac{1}{2}$;
\item uniform distributions over bundles of constant size $s$, $s \in \{1, 3, 5, 10\}$.
\end{inparaenum}
For each market, for each distribution, we run our algorithm on $5120$ randomly generated samples slowly increasing the number of samples our algorithm uses from $5$ to $5120$. We repeat this procedure a $100$ times and plot graphs on a semilog scale. For each allocation outputted by our algorithm, we also check loss by sampling $1000$ samples from the same distribution and computing {\em empirical loss} using Equation \eqref{eq:lossdefinition}. 
All the graphs we plot have error bars plotted along with the lines but due to the large number of iterations, in most graphs these error bars are not visible. 


\subsection{Unit Demand Markets}\label{subsec:unit-demand-expts}
We compare the two approaches we discuss: indirect learning (Algorithm \ref{algo:unit-demand-indirect}) and direct learning (Algorithm \ref{algo:unit-demand-market}). 
We also compute optimal market outcomes w.r.t. the true preferences, which serve as our baseline. We run both learning algorithms for the different markets and sampling distributions discussed above and evaluate the algorithms in terms of their welfare and their market inconsistency, as measured by their empirical loss.

\subsubsection{Sellers' Market}\label{subsub:UD-sellers}
In the sellers' market, the direct learning approach almost always outperforms the indirect learning approach either by converging to the optimal welfare faster or by doing strictly better than indirect learning when the number of samples is high (see Figure \ref{fig:unitdemand-case1}). 
When the number of samples is low, the indirectly learned outcome allocates one good to each player and therefore allocates goods to more players. This results in a higher utility as compared to the direct learning approach which allocates larger bundles to players when the number of samples are low. These large bundles arise due to the fact that the direct learning approach tries to find the smallest bundle which is sure to contain the highest valued unallocated good. 
As the number of samples increases, the direct learning algorithm learns more, causing the size of the bundles allocated by the direct learning algorithm to decrease; this results in a sharp improvement that allows the direct learning approach to outperform the indirect learning approach even when the size of the samples are high (greater than $5$). 
When each sample is large, as we iterate through the set of players and have only a few goods left to allocate, these goods will very likely not be the best good in any bundle. 
Take for example, a dataset where all samples have size $10$: when the data corresponds to unit demand valuations, we will not be able to make any judgements on the least preferred $9$ goods of any player regardless of what samples we have. 
When faced with this problem, indirect learning allocates these leftover goods to buyers arbitrarily even though it cannot ascertain the exact value of the good it allocates. 
The direct learning approach on the other hand, iterates through the players until it finds players who have one of the unallocated goods as the highest valued good in a sample and then allocates this good to them --- even though these players may have a low budget --- and subsequently increasing welfare. 
Note that while this allocation is, of course, a PAC equilibrium, in this setting it is not an actual equilibrium. Therefore, our allocations might cause envy between the players, since in most allocations higher budget players will not get a good while lower budget players do.
However, it will result in allocations which might exceed the welfare of the optimal equilibrium (see Figure \ref{fig:unitdemand-case1-dc5}).

\begin{figure}[ht]
\subfloat[Sample size 1\label{fig:unitdemand-case1-dc0}]
  {\includegraphics[width=.25\linewidth]{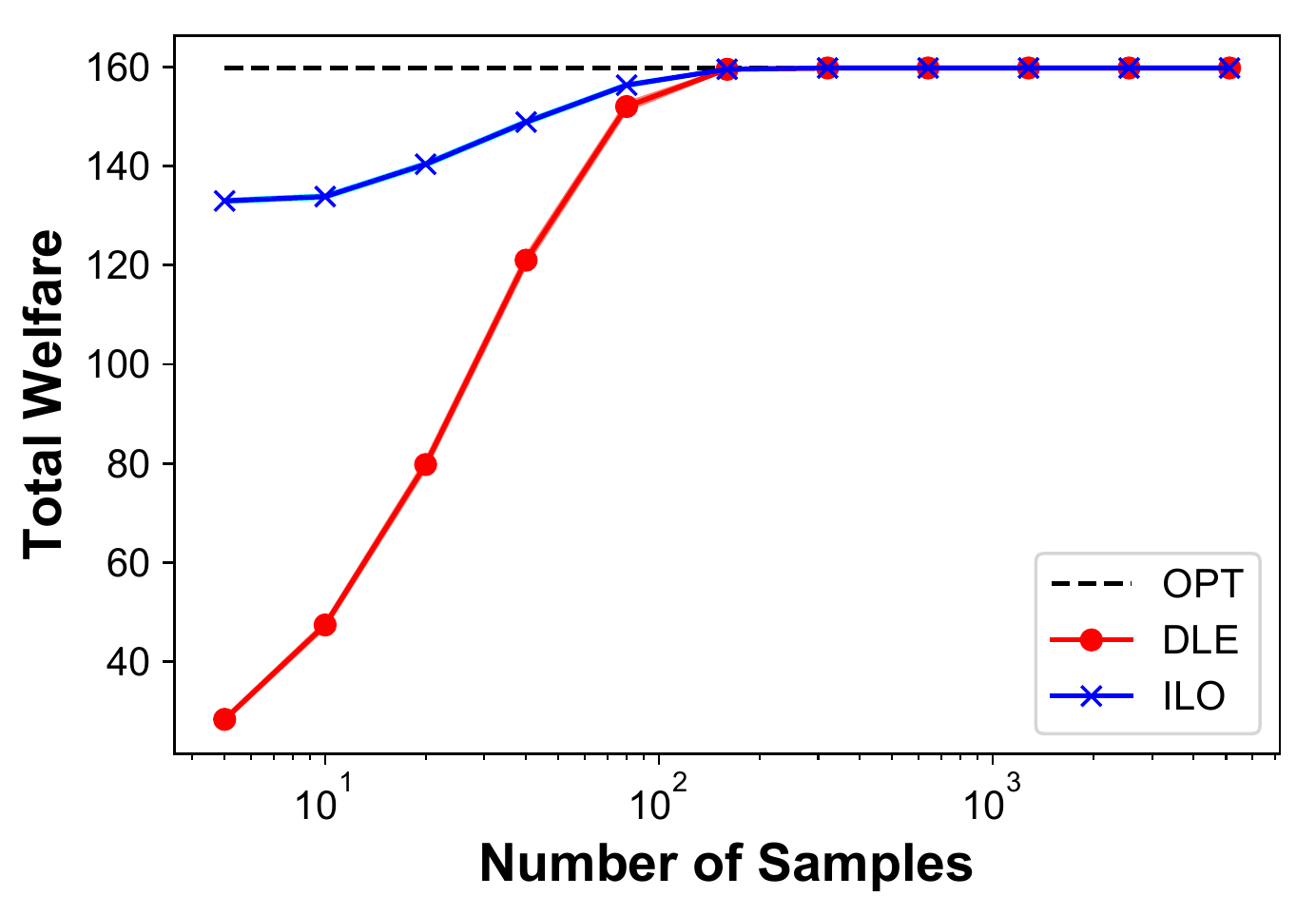}}\hfill
\subfloat[Sample size 5\label{fig:unitdemand-case1-dc2}]
  {\includegraphics[width=.25\linewidth]{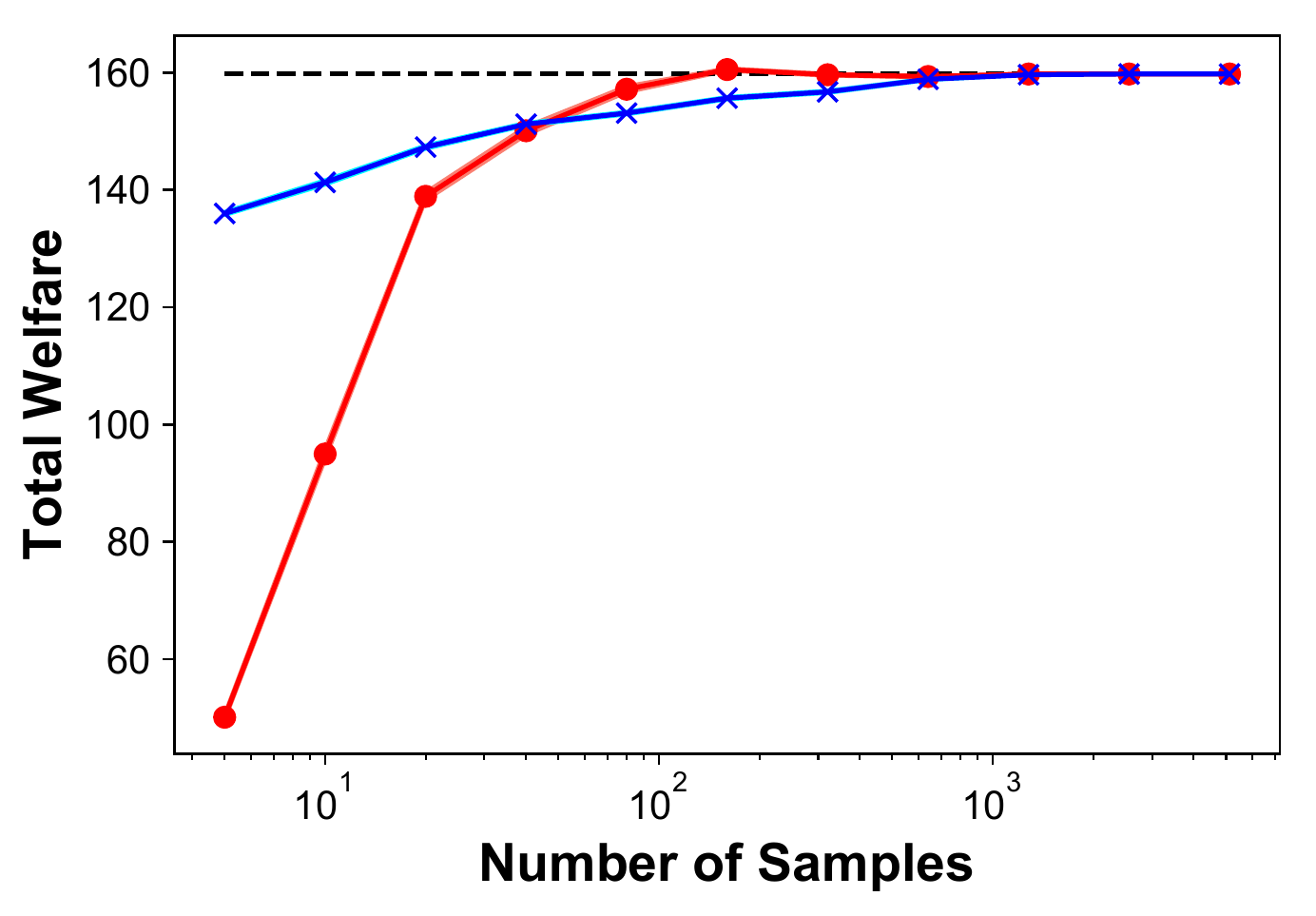}} \hfill
  \subfloat[Sample size 10\label{fig:unitdemand-case1-dc4}]
  {\includegraphics[width=.25\linewidth]{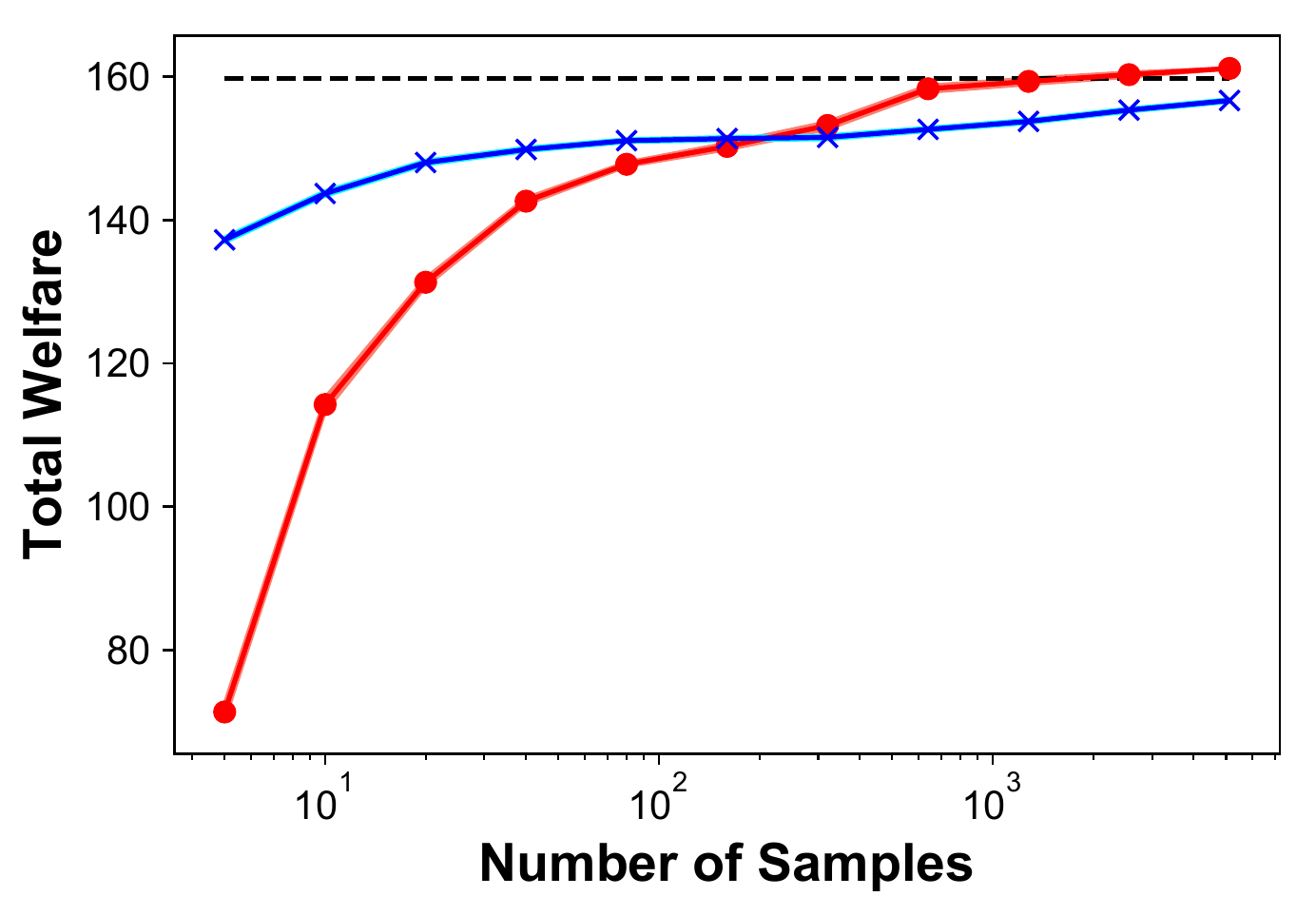}} \hfill
\subfloat[Uniform Product Distribution\label{fig:unitdemand-case1-dc5}]
  {\includegraphics[width=.25\linewidth]{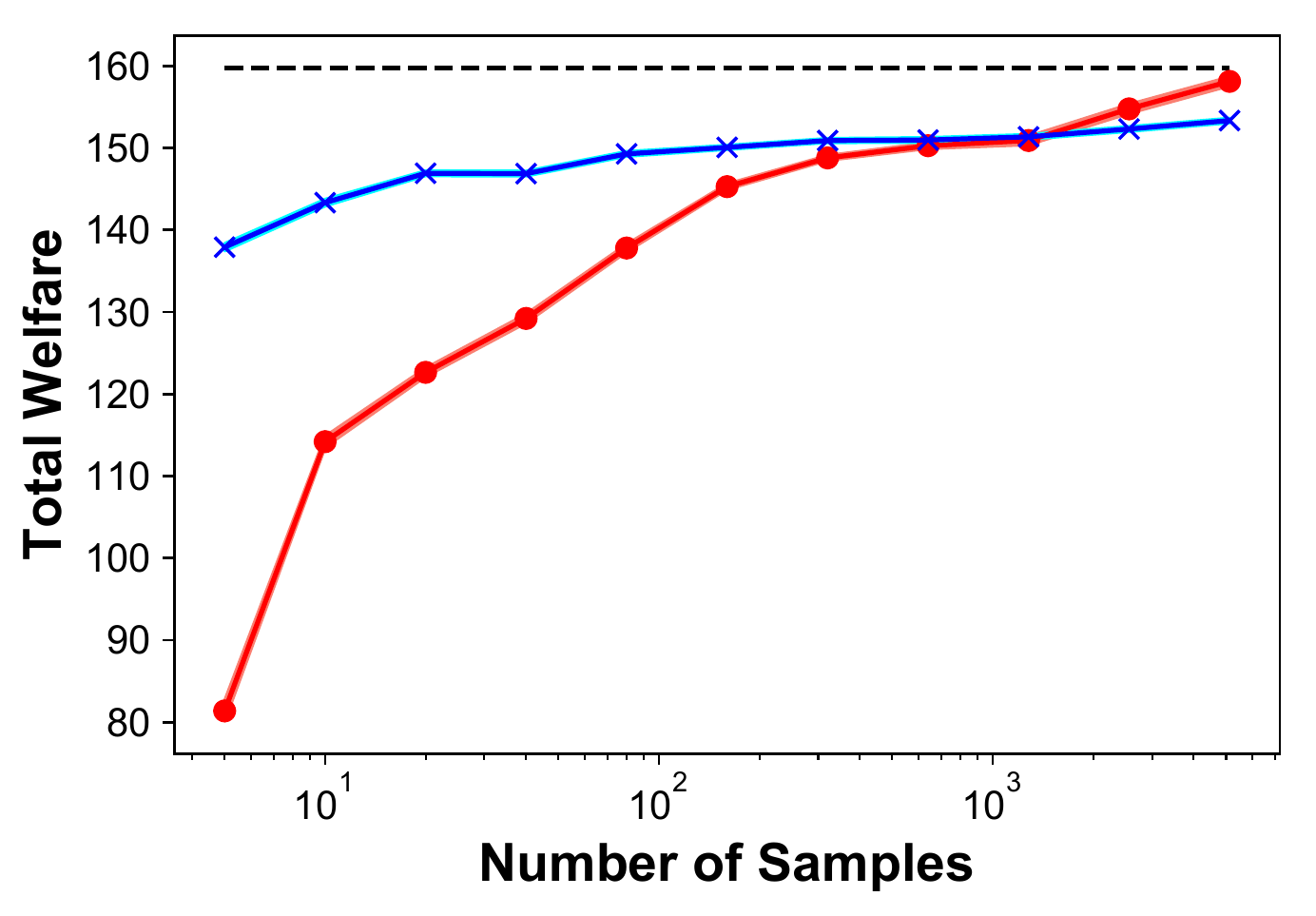}}\hfill
  
\caption{Unit Demand Markets (Sellers' Market): Total welfare vs. \# of samples for various distributions when number of players is greater than the number of goods. \textbf{DLE} represents  directly learned equilibria and \textbf{ILO} represents  indirectly learned outcomes. \textbf{OPT} represents the welfare-maximizing equilibrium allocation.}
\label{fig:unitdemand-case1}
\end{figure}

\subsubsection{Buyers' Market}\label{subsub:UD-buyers}
In the buyers' market, where the number of goods exceeds the number of players, we observe that both algorithms converge to the optimal equilibrium welfare in all the markets for all distributions (see Figure \ref{fig:unitdemand-case2}). When the number of goods is higher than the number of players, with enough samples, both algorithms learn the $n$-highest valued goods for each player and then allocate one of these goods to the player depending on their budget, i.e., depending on what good they would have received in the optimal equilibrium allocation. 
We also observe that in a lot of cases (see Figure \ref{fig:unitdemand-case2-dc2} and \ref{fig:unitdemand-case2-dc4}), the direct learning approach converges to the optimal equilibrium welfare slightly faster than the indirect learning approach. This is mainly because, when we do not have enough samples to learn the valuation function and we have two goods that could be player $i$'s favorite good but we do not know which one it is, indirect learning picks a good from these two goods arbitrarily and allocates it to $i$ whereas the direct learning approach allocates both goods to $i$, guaranteeing a higher utility for $i$. 
When the number of goods exceeds the number of buyers, we do not run out of goods by doing this and so, we obtain a higher welfare allocation.

\begin{figure}[ht]
\subfloat[Sample size 1\label{fig:unitdemand-case2-dc0}]
  {\includegraphics[width=.25\linewidth]{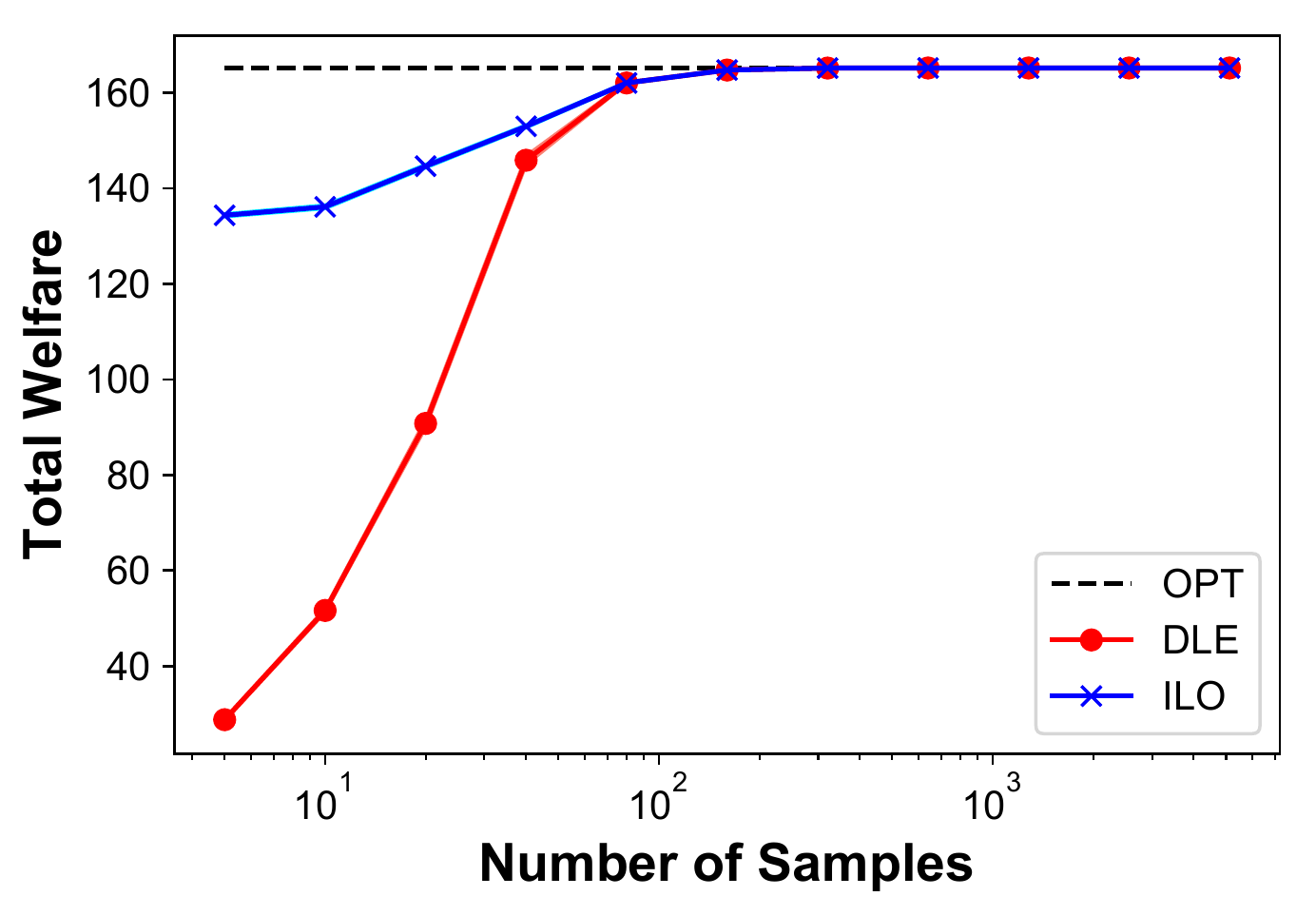}}\hfill
\subfloat[Sample size 5\label{fig:unitdemand-case2-dc2}]
  {\includegraphics[width=.25\linewidth]{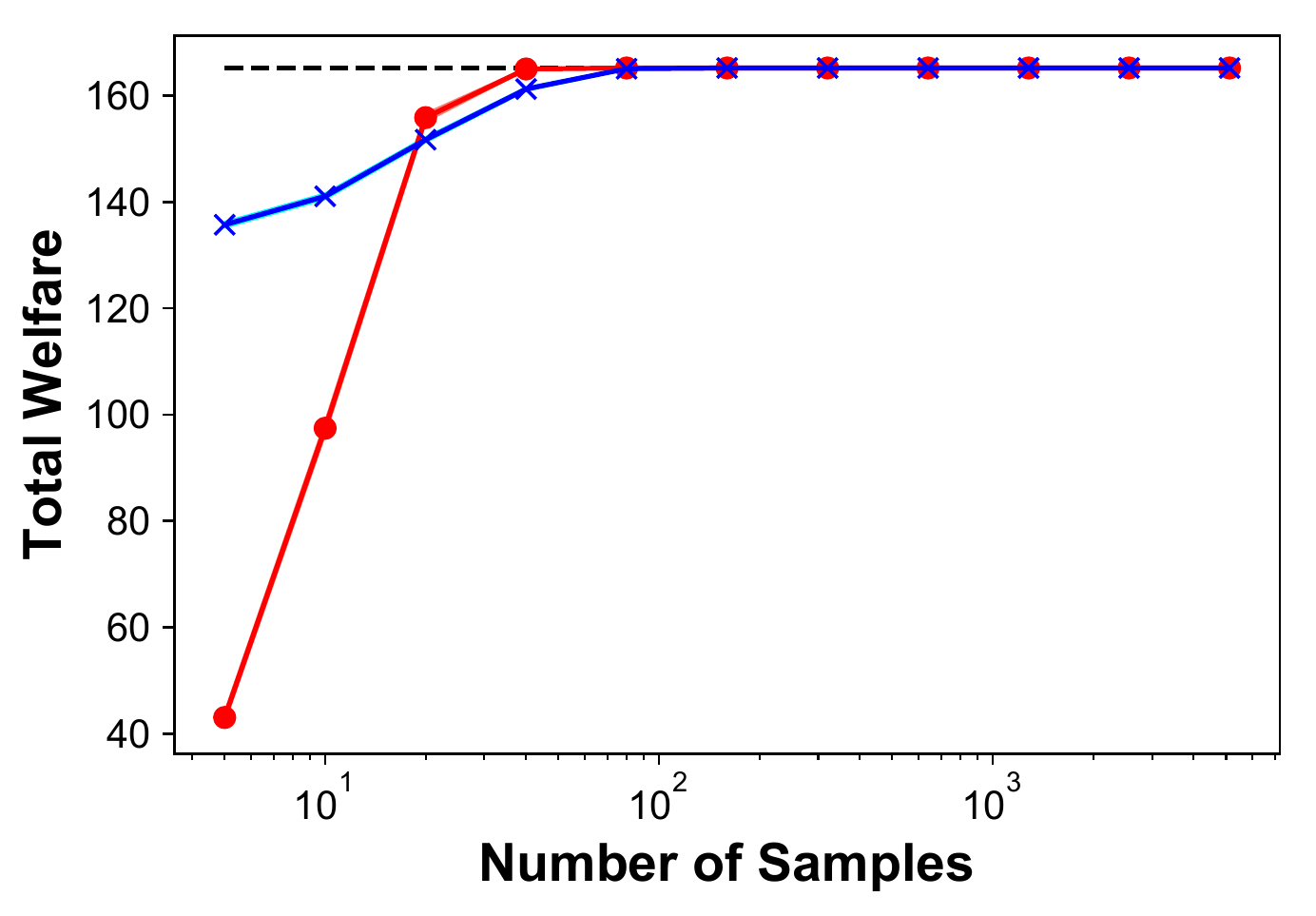}} \hfill
  \subfloat[Sample size 10\label{fig:unitdemand-case2-dc4}]
  {\includegraphics[width=.25\linewidth]{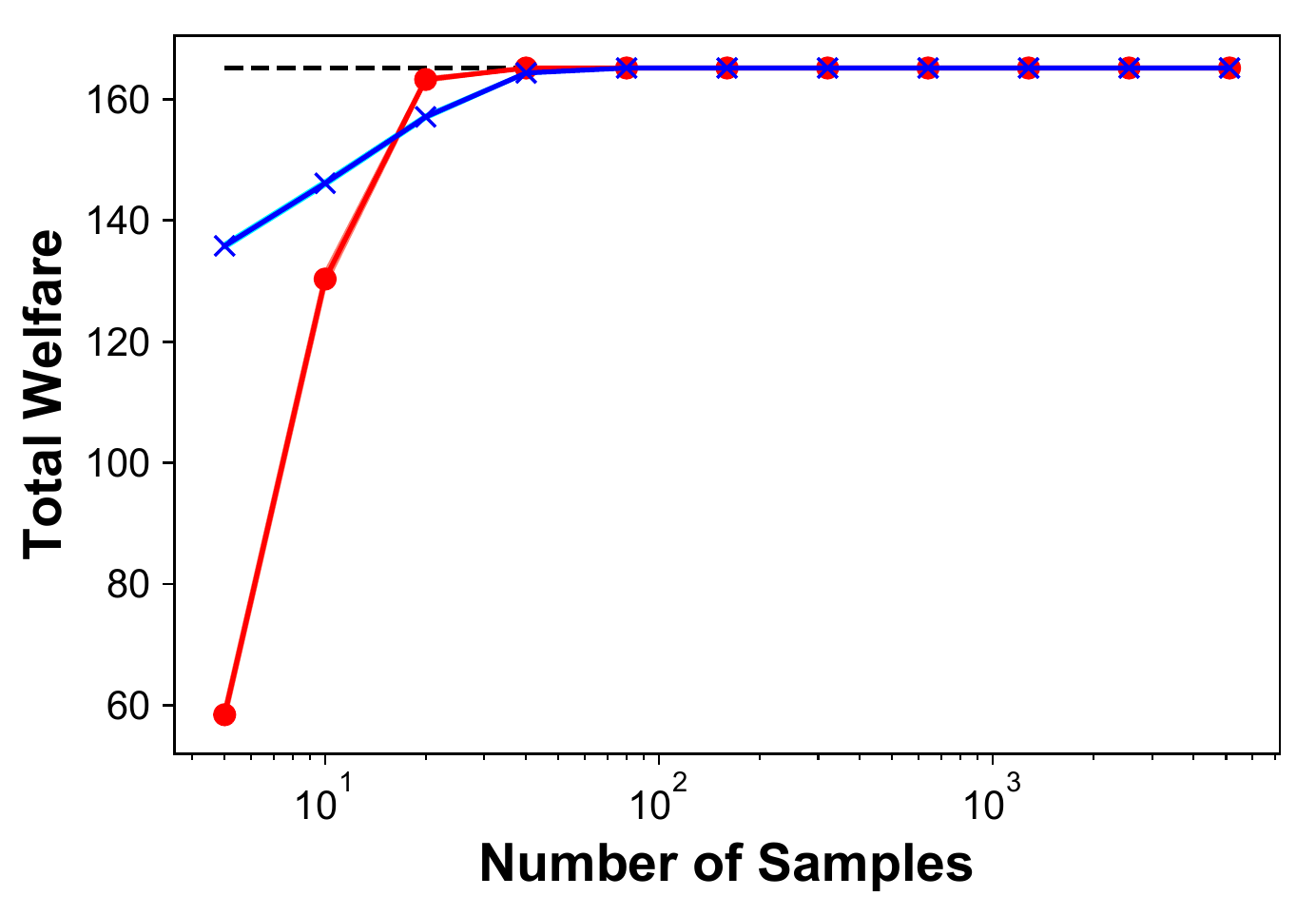}} \hfill
\subfloat[Uniform Product Distribution\label{fig:unitdemand-case2-dc5}]
  {\includegraphics[width=.25\linewidth]{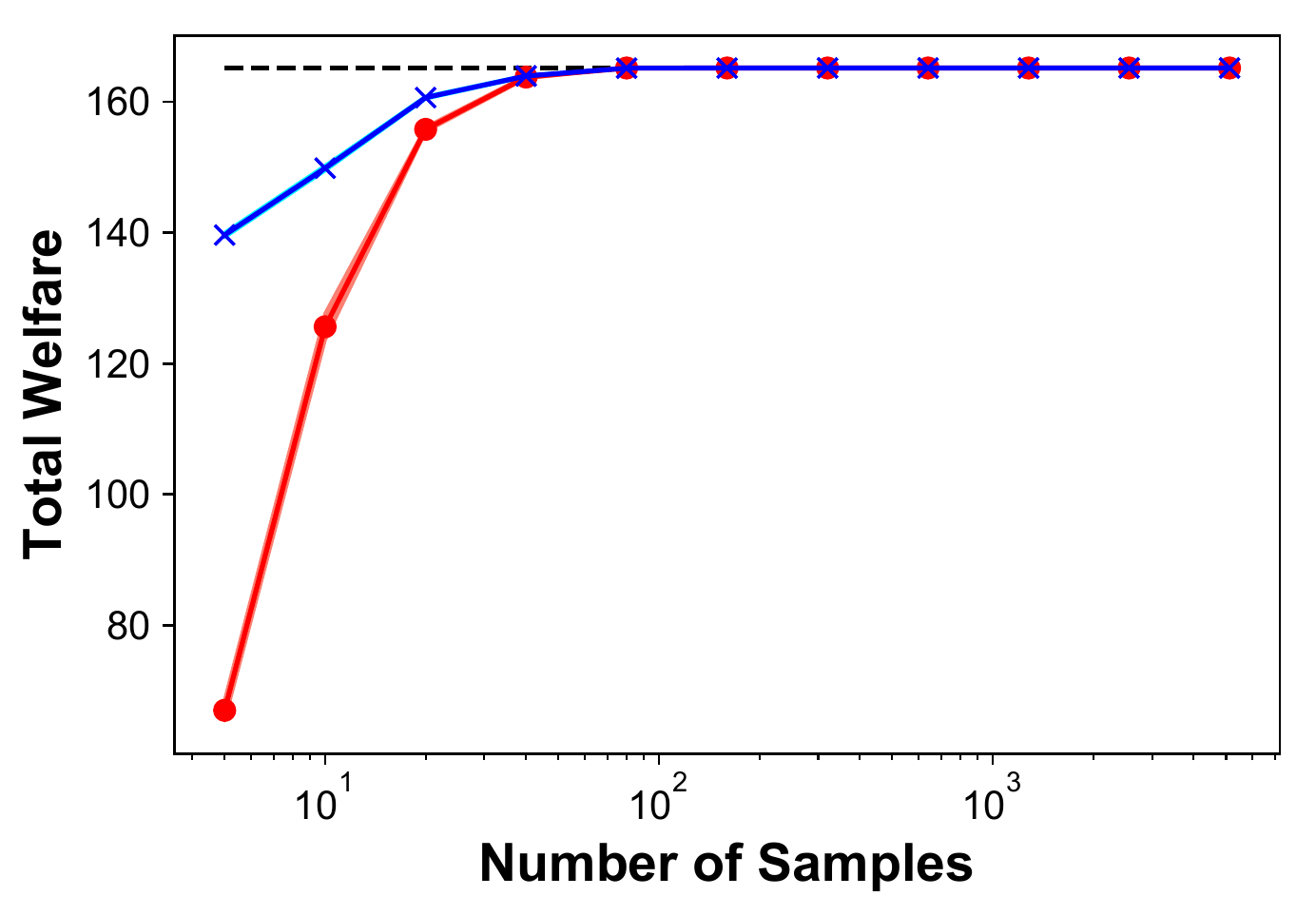}}\hfill
  
\caption{Unit Demand Markets (Buyers' Market): Total welfare vs. \# of samples for various distributions when number of players is less than the number of goods. \textbf{DLE} represents  directly learned equilibria and \textbf{ILO} represents  indirectly learned outcomes. \textbf{OPT} represents the welfare-maximizing equilibrium allocation.}
\label{fig:unitdemand-case2}
\end{figure}

\subsubsection{Balanced Market}\label{subsub:UD-balanced}
In the balanced market, where the number of goods equals the number of buyers, we find that the direct learning approach is outperformed by the indirect learning approach. This is mainly because in this case, the best strategy is to allocate one good to each player and the direct learning algorithm does not do this. It tries to allocate bundles to players to guarantee that they get a high valued bundle; when it cannot find the highest valued unallocated good of any bundle, the algorithm does not allocate anything. This particularly hurts the algorithm when there is not much information in each sample e.g. when the samples are large (see Figure \ref{fig:unitdemand-case3-dc5}).
Indirect learning does exactly what is required and tries to allocates the single best possible good to each buyer. If it cannot find the best good, it allocates an arbitrary good which may not be the best possible good but still results in a fairly high welfare allocation.

\begin{figure}[ht]
\subfloat[Sample size 1\label{fig:unitdemand-case3-dc0}]
  {\includegraphics[width=.25\linewidth]{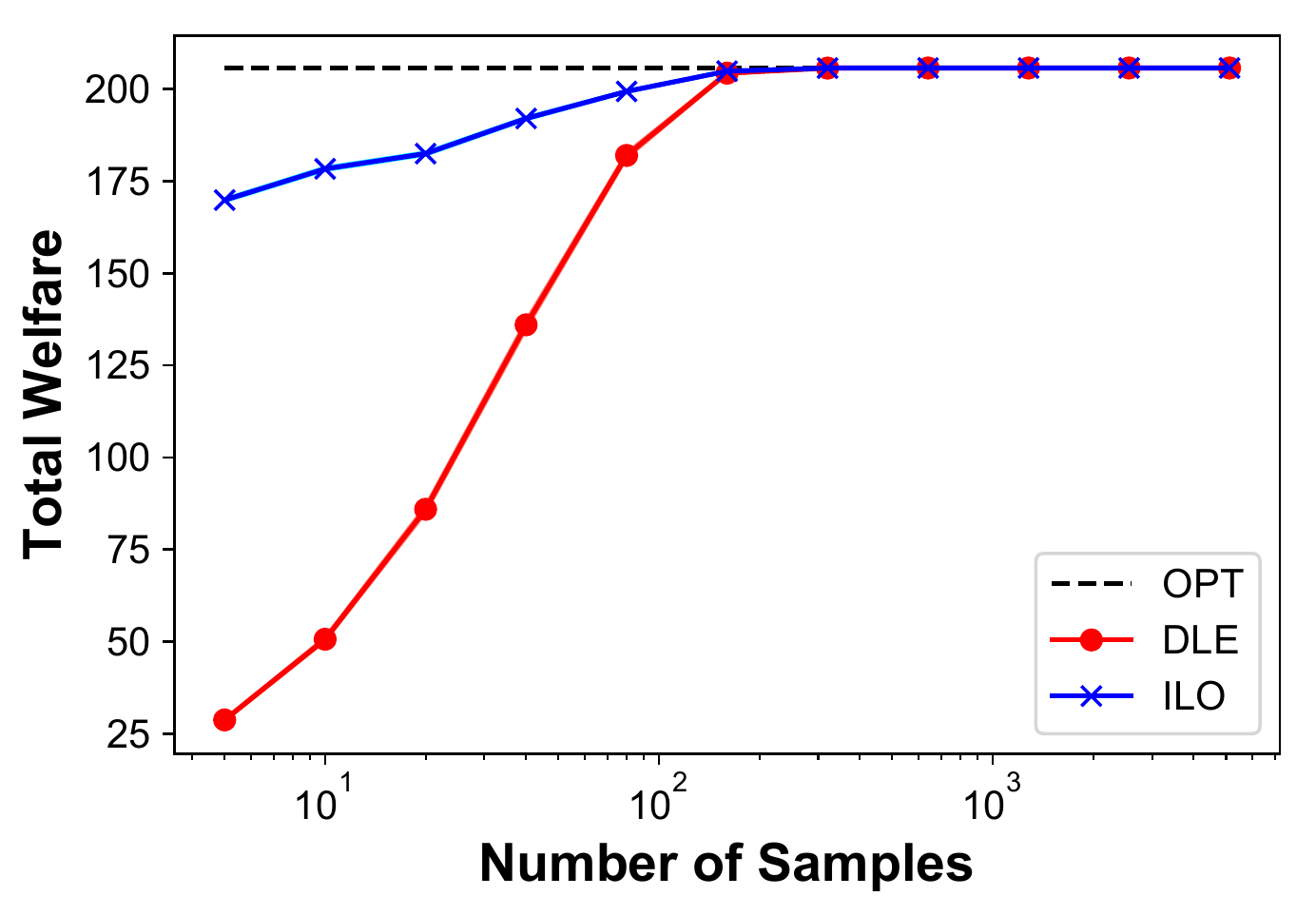}}\hfill
\subfloat[Sample size 5\label{fig:unitdemand-case3-dc2}]
  {\includegraphics[width=.25\linewidth]{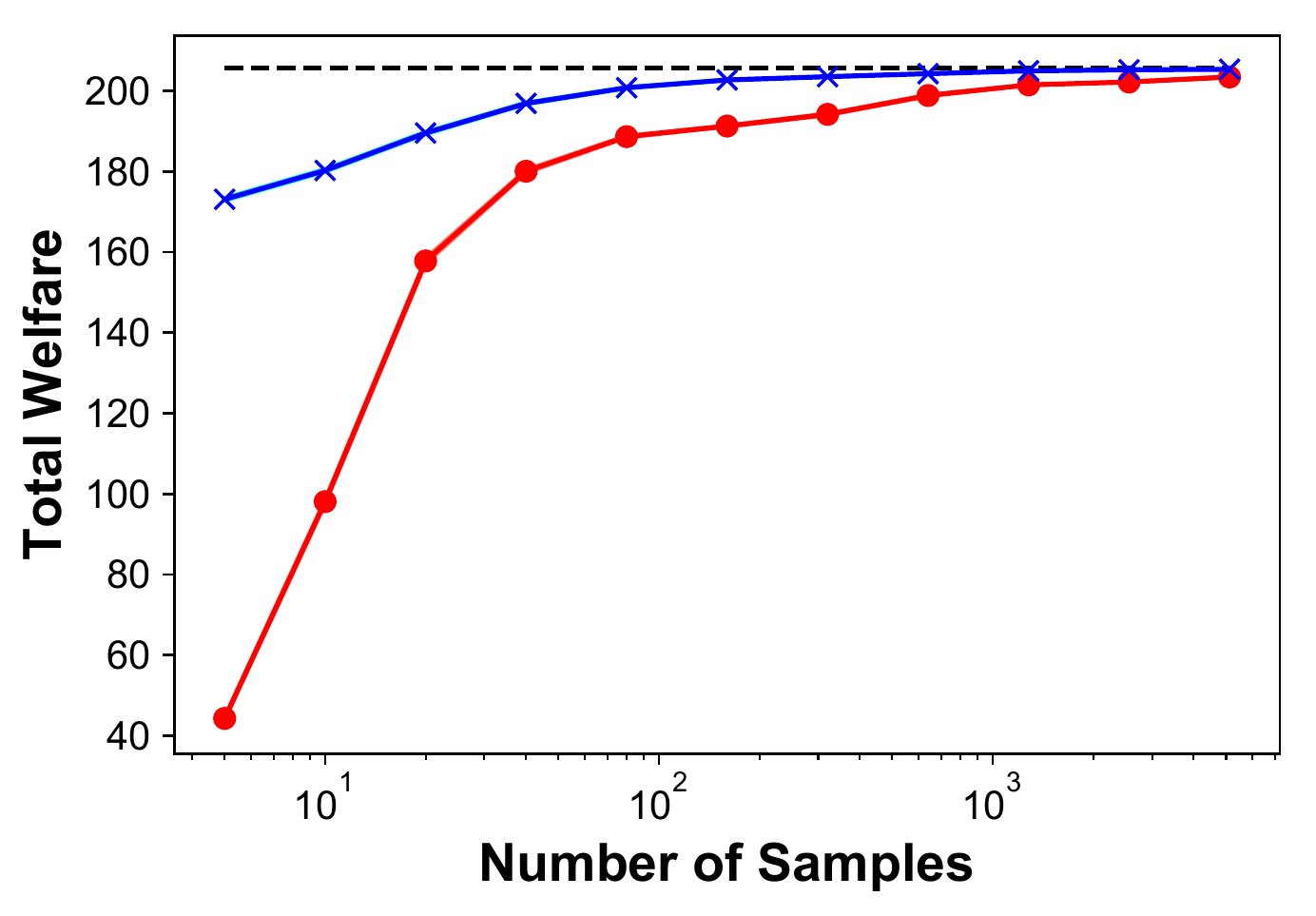}} \hfill
  \subfloat[Sample size 10\label{fig:unitdemand-case3-dc4}]
  {\includegraphics[width=.25\linewidth]{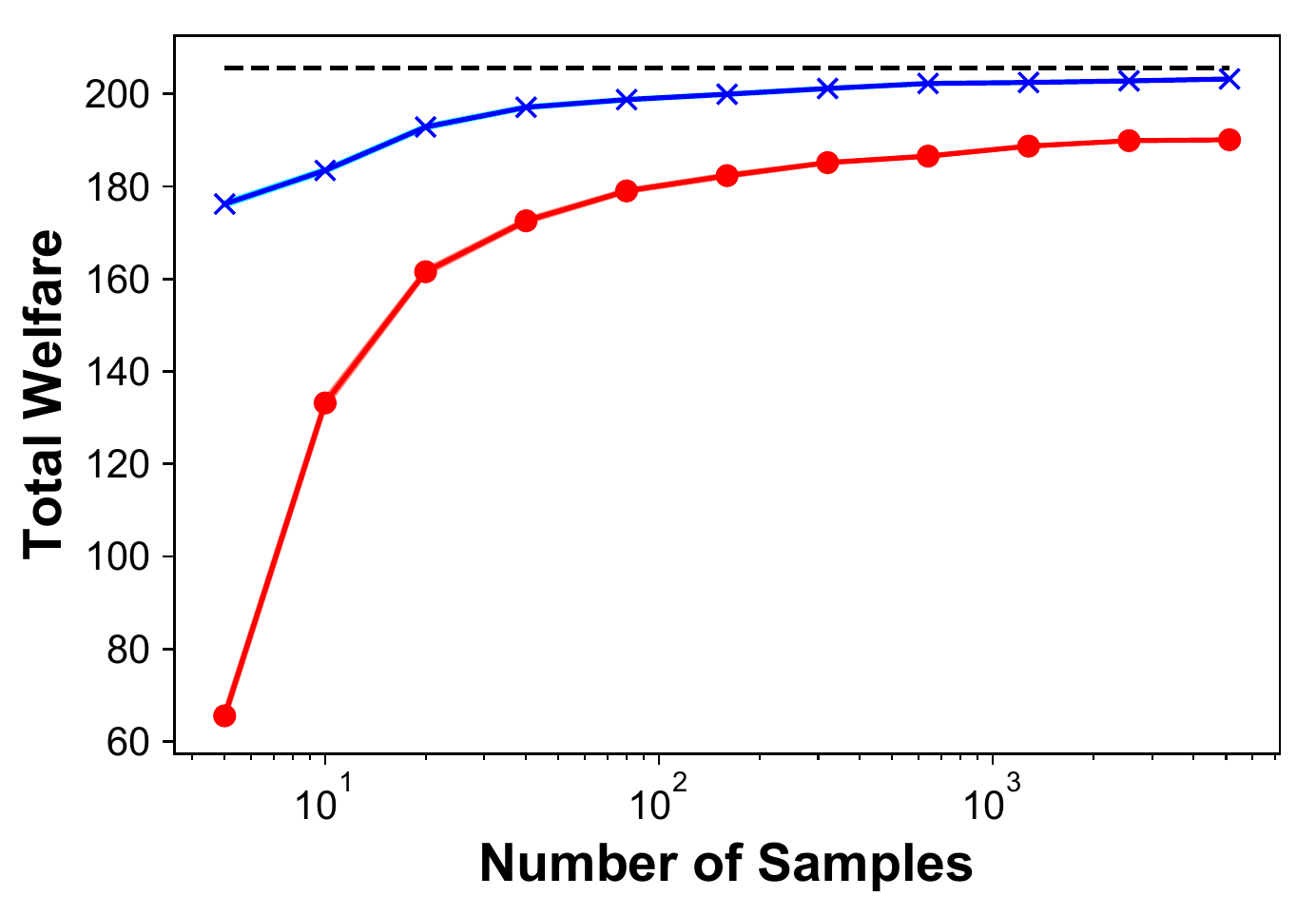}} \hfill
\subfloat[Uniform Product Distribution\label{fig:unitdemand-case3-dc5}]
  {\includegraphics[width=.25\linewidth]{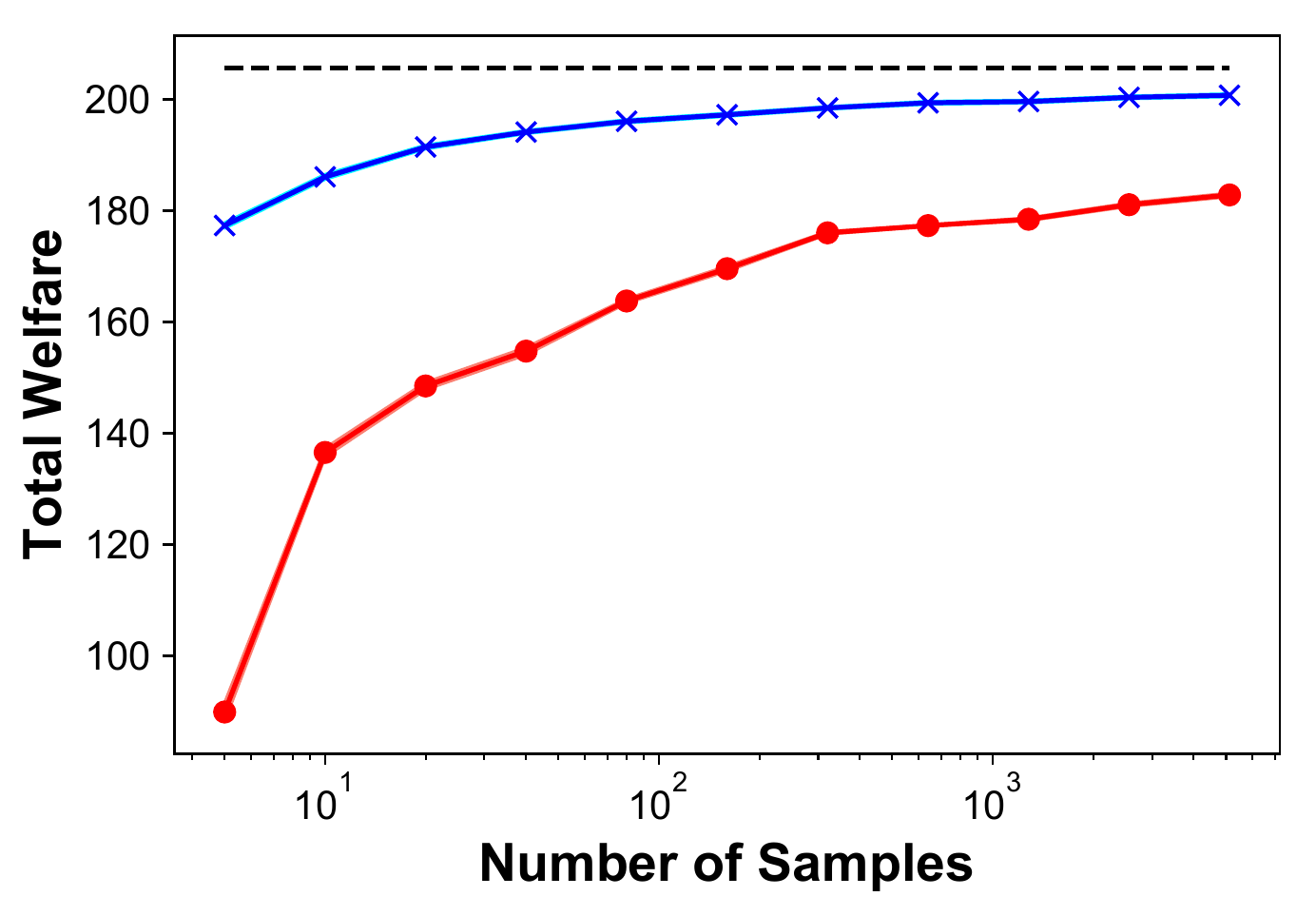}}\hfill
  
\caption{Unit Demand Markets (Balanced Market): Total welfare vs. \# of samples for various distributions when number of players is equal than the number of goods. \textbf{DLE} represents  directly learned equilibria and \textbf{ILO} represents  indirectly learned outcomes. \textbf{OPT} represents the welfare-maximizing equilibrium allocation.}
\label{fig:unitdemand-case3}
\end{figure}

\subsubsection{Empirical Loss Analysis}\label{subsub:UD-loss}
The empirical loss of both algorithms converge to $0$ in $\approx 100$ samples for all the market sizes and distributions we consider. In most cases, the indirect learning approach has a lower empirical loss than that of the direct learning approach. This is mainly because, when the budgets are almost equal, and all the samples have size greater than $1$, any allocation which allocates all the goods such that each player gets one good at a price equal to their budget will be consistent. This consistency arises from the fact that no bundle of size greater than $1$ will be affordable by any player due to budgets being almost equal. For similar reasons, the magnitude of loss reduces as the size of the samples increase for directly learned equilibria as well. As the size of each sample increases, it becomes likelier that this sample cannot be afforded by any player. At the same time, the variance in the empirical loss relative to the expected value also increases since the loss value becomes more sample specific; there are much fewer samples in the support of the distribution that can violate consistency. This can seen in Figure~\ref{fig:unitdemand-loss_case_2_main_paper}: while the magnitude of the empirical loss decreases from Figure \ref{fig:loss11} to \ref{fig:loss14}, the error bars increase in size.

The low loss of indirectly learned allocations, however, does not hold when the number of goods is greater than the number of players (buyers' market) (see Figure~\ref{fig:unitdemand-loss_case_2_main_paper}). In this case, there still are goods which are not allocated and therefore, when the samples have a relatively small size, it is likely to sample a bundle which only has one allocated good and therefore a low price; when the allocations do not allocate the best possible good to each buyer, this bundle will likely violate consistency for some player. This is exactly what happens in Figure \ref{fig:loss11} and \ref{fig:loss12}; when the valuations are only partially learned and not fully learned, direct learning has a much lower loss than indirect learning. When there are enough samples for the valuation functions to be learnt accurately, both direct learning and indirect learning converge to zero loss.
 
We further note that this general trend does not conclude that the empirical loss values are lower in indirect learning than in direct learning. In Example~\ref{ex:unitdemandindirect}, we can observe that the expected loss is $1/2$ for indirect learning, whereas the loss function value for direct learning converges to $0$ in a few samples (as soon as we observe both samples). The lack of a guarantee like Theorem~\ref{thm:consistency} creates uncertainty regarding  whether indirect learning will ever converge to an allocation with low loss whereas Theorem~\ref{thm:consistency} ensures that with enough samples, directly learned allocations always have low loss.

 \begin{figure}[ht]
\subfloat[Sample size 3\label{fig:loss11}]
 {\includegraphics[width=.25\linewidth]{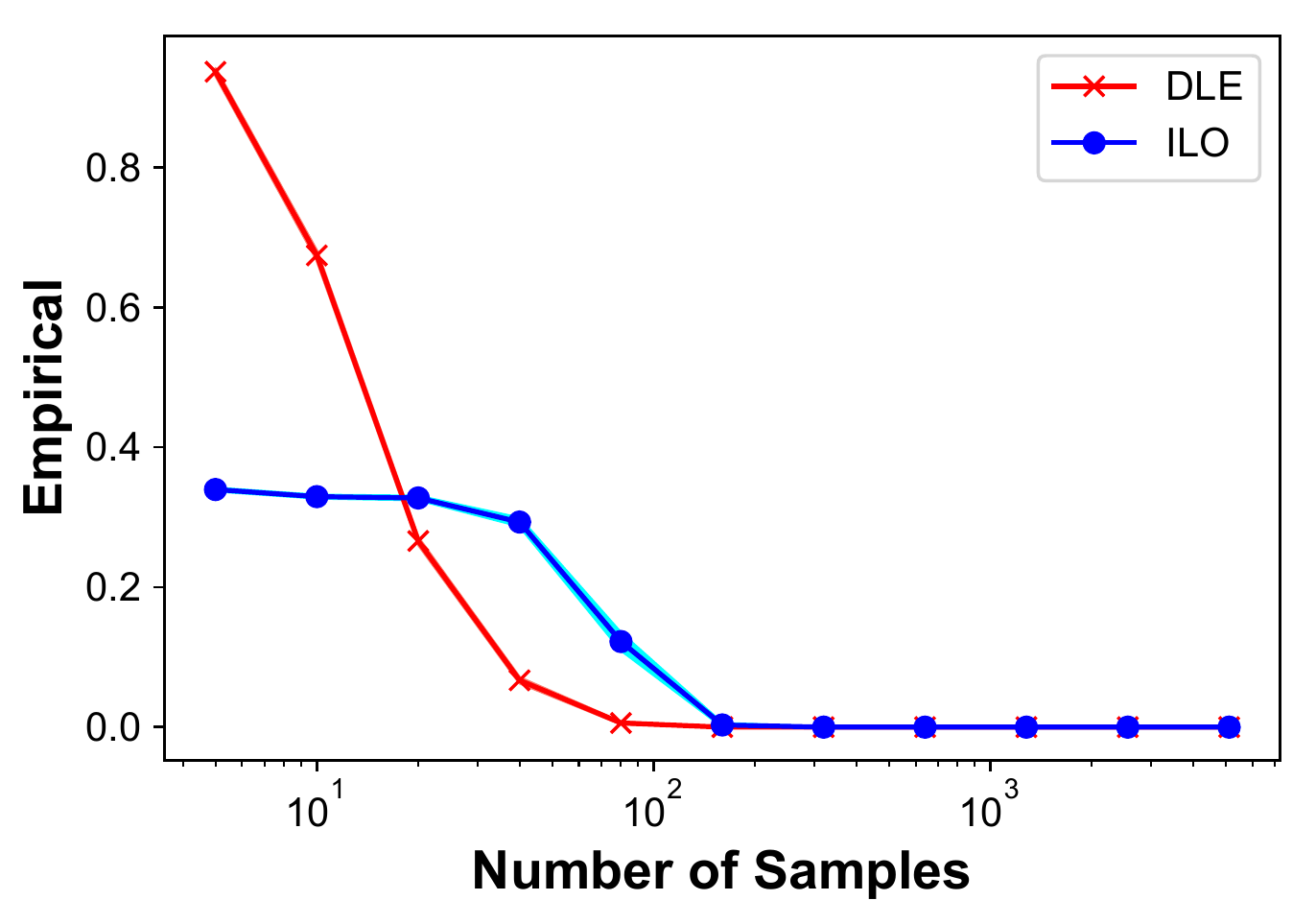}} \hfill
  \subfloat[Sample size 5\label{fig:loss12}]
  {\includegraphics[width=.25\linewidth]{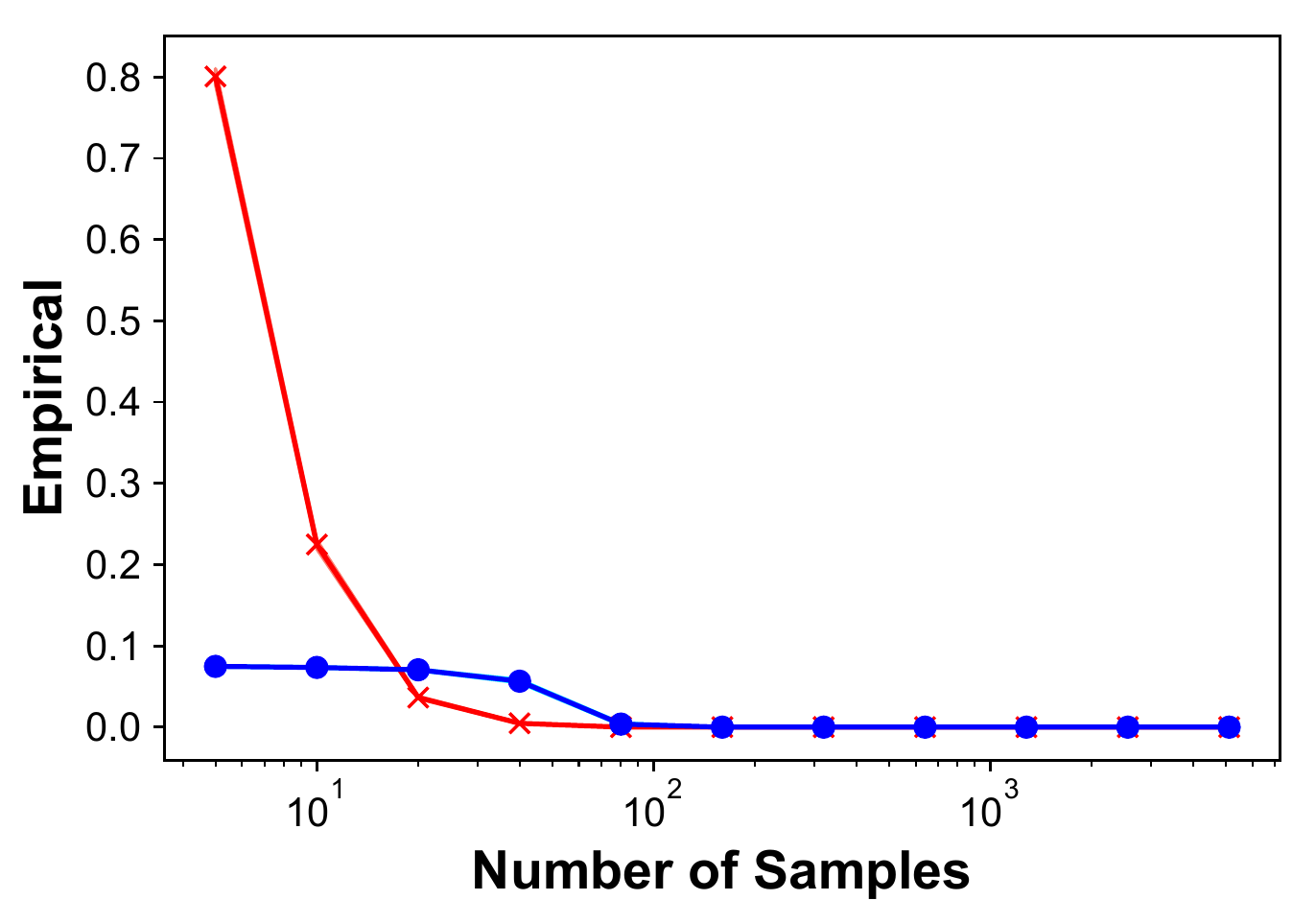}} \hfill
      \subfloat[Sample size 10\label{fig:loss13}]
  {\includegraphics[width=.25\linewidth]{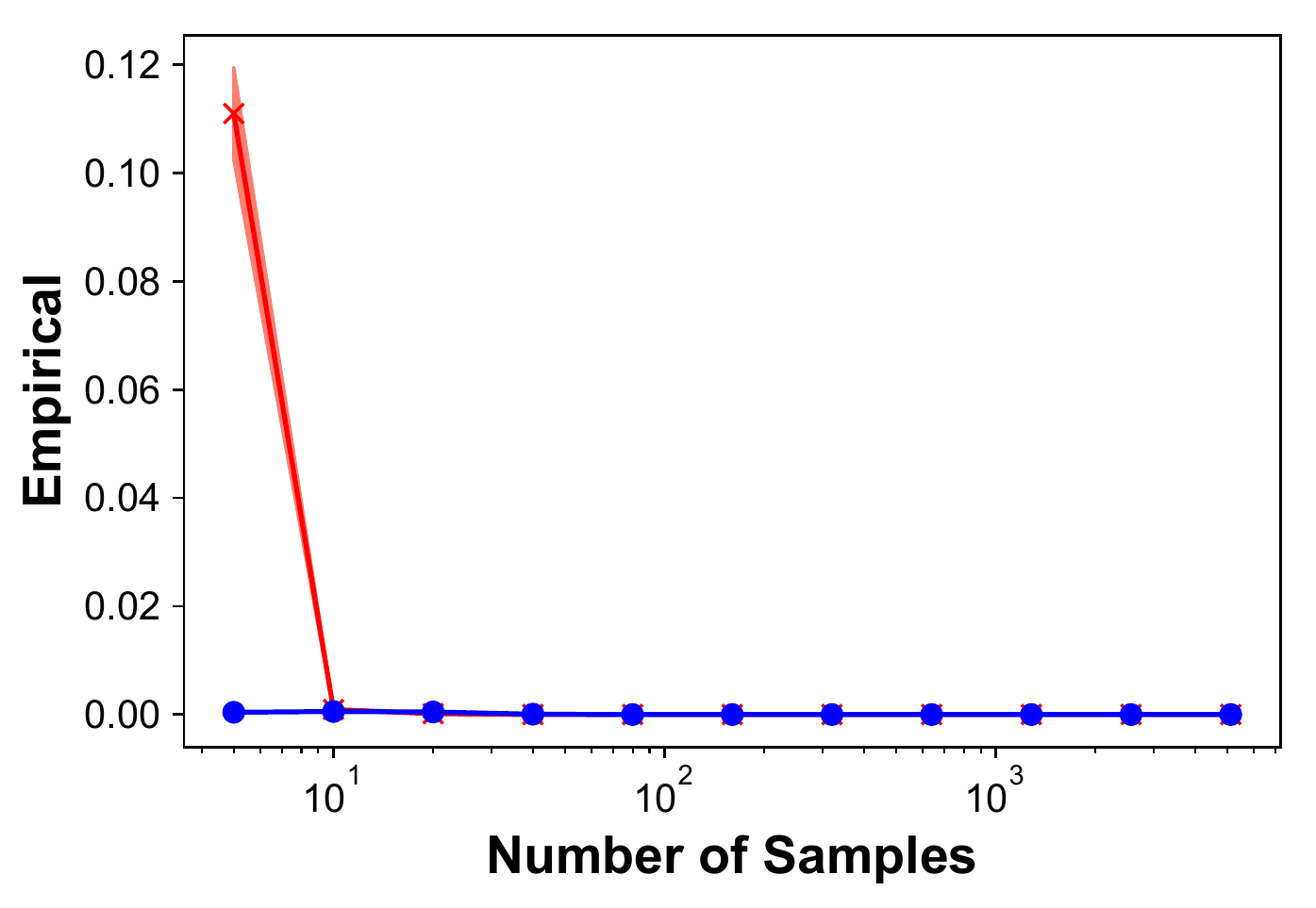}} \hfill
\subfloat[Uniform Product Distribution\label{fig:loss14}]
  {\includegraphics[width=.25\linewidth]{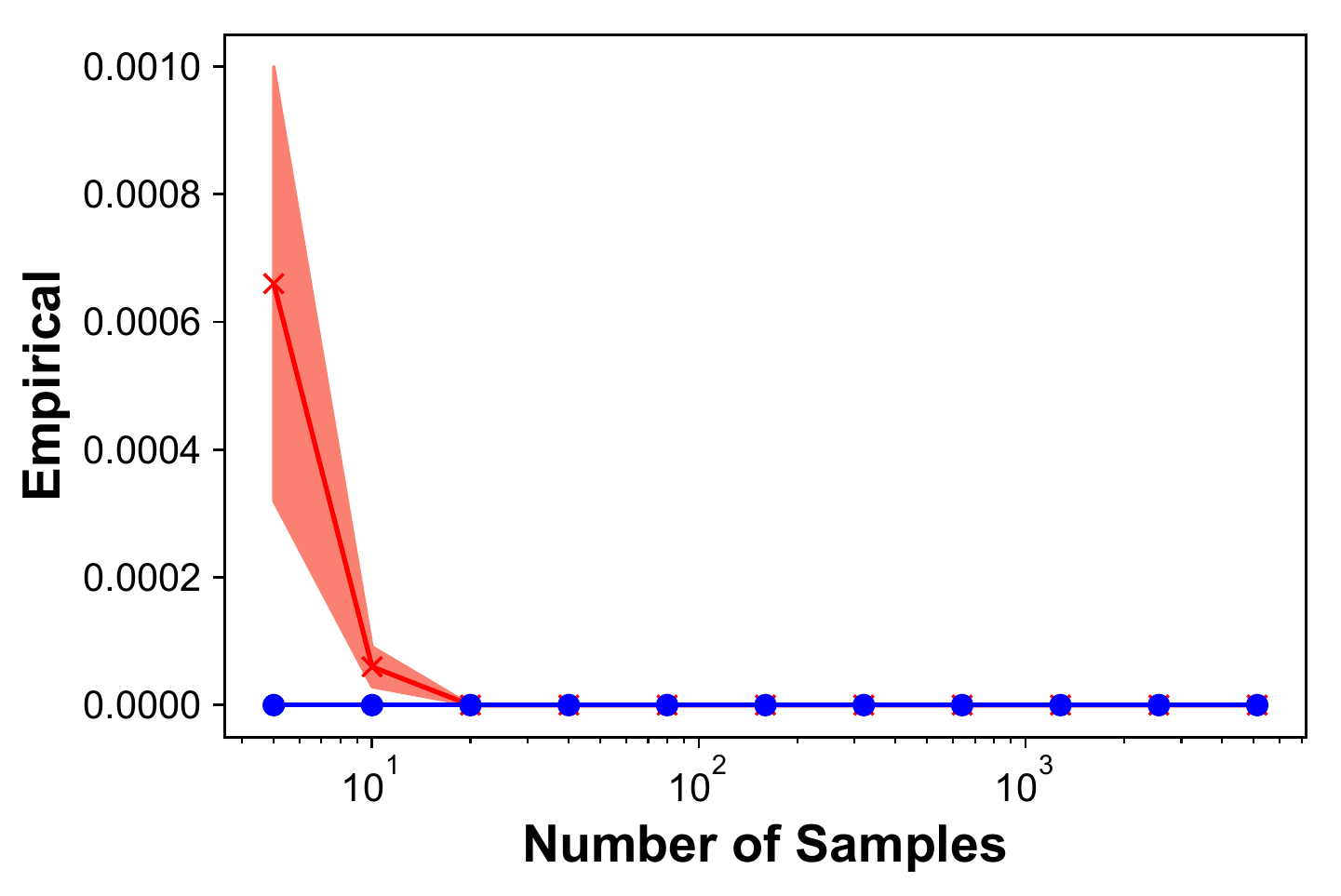}}\hfill
  
\caption{Unit Demand Markets (Buyers' Market): Empirical Loss vs. \# of samples for various distributions when number of goods is greater than the number of players. \textbf{DLE} represents  directly learned equilibria and \textbf{ILO} represents  indirectly learned outcomes.}
\label{fig:unitdemand-loss_case_2_main_paper}
\end{figure}

\subsection{Additive Markets}\label{subsec:additive-expts}
Assuming players have additive valuations, we generate PAC equilibria for the different market sizes and distributions described above using Algorithm \ref{algo:additive-valuations}. Since there is no other algorithm that computes an equilibrium for additive markets, we do not have a straightforward indirect learning approach to compare our algorithm to. Therefore, the indirect learning approach we use assumes that the goods are divisible since equilibria for divisible goods and additive valuations can be computed efficiently. We first learn valuations using regression and then compute an equilibrium assuming the goods are divisible using the algorithm presented in \shortciteA{birnbaum2010new}. Note that our notion of loss has no meaning for this algorithm, so we can only compare the welfare of both algorithms. We also compare the efficiency of our algorithm with the optimal welfare allocation when goods are indivisible (computed using an ILP) and the optimal equilibrium allocation assuming all goods are divisible. 

In addition to this, we evaluate our learning algorithm in terms of its market inconsistency, as measured by the empirical loss and inefficiency with respect to the number of goods burnt by the algorithm. 


\begin{figure}[ht]
\subfloat[Sample Size 1\label{subfig:additive-case1-dc0}]
  {\includegraphics[width=.25\linewidth]{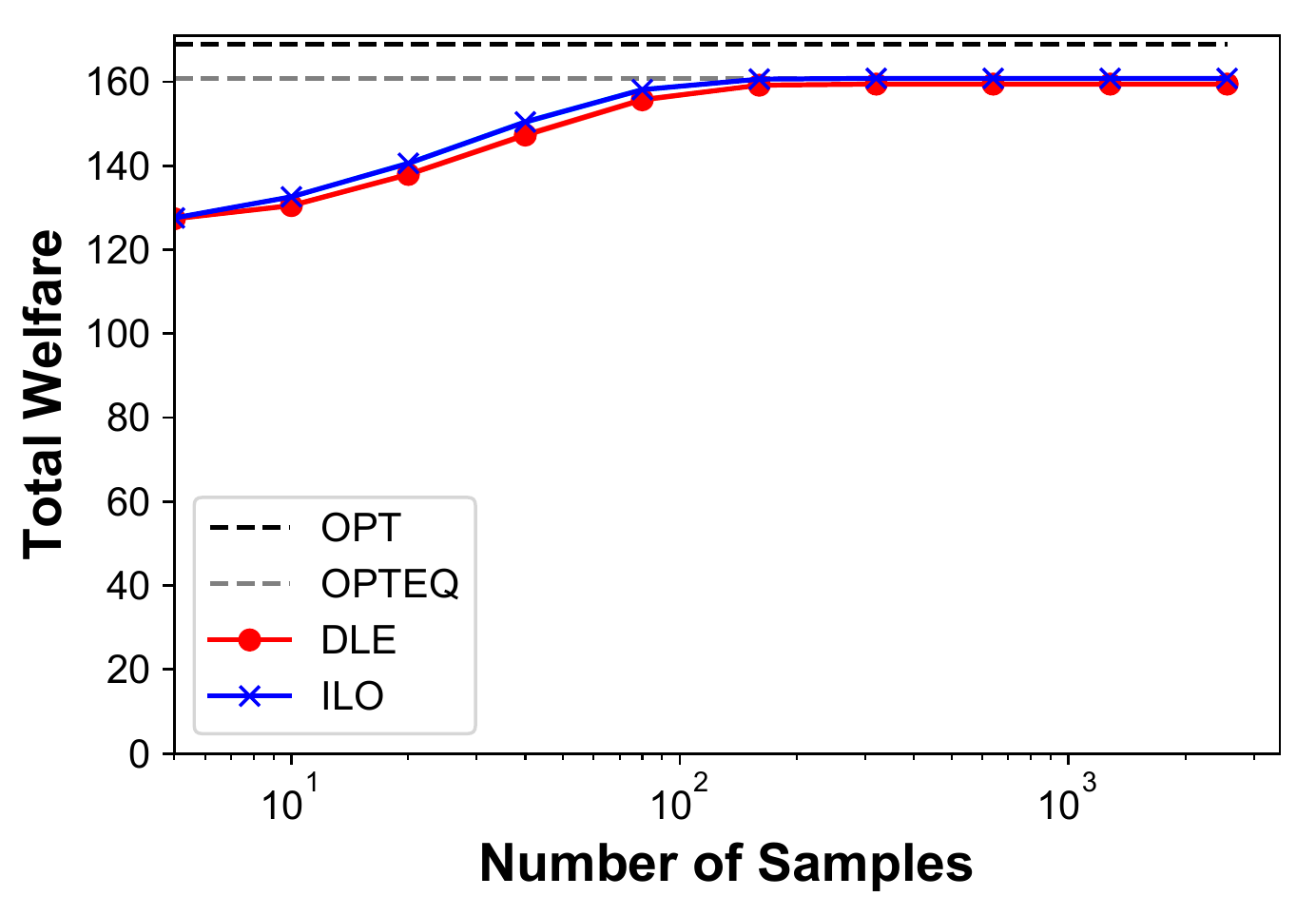}}\hfill
 \subfloat[Sample Size 5\label{fig:additive-case5-dc2}]
  {\includegraphics[width=.25\linewidth]{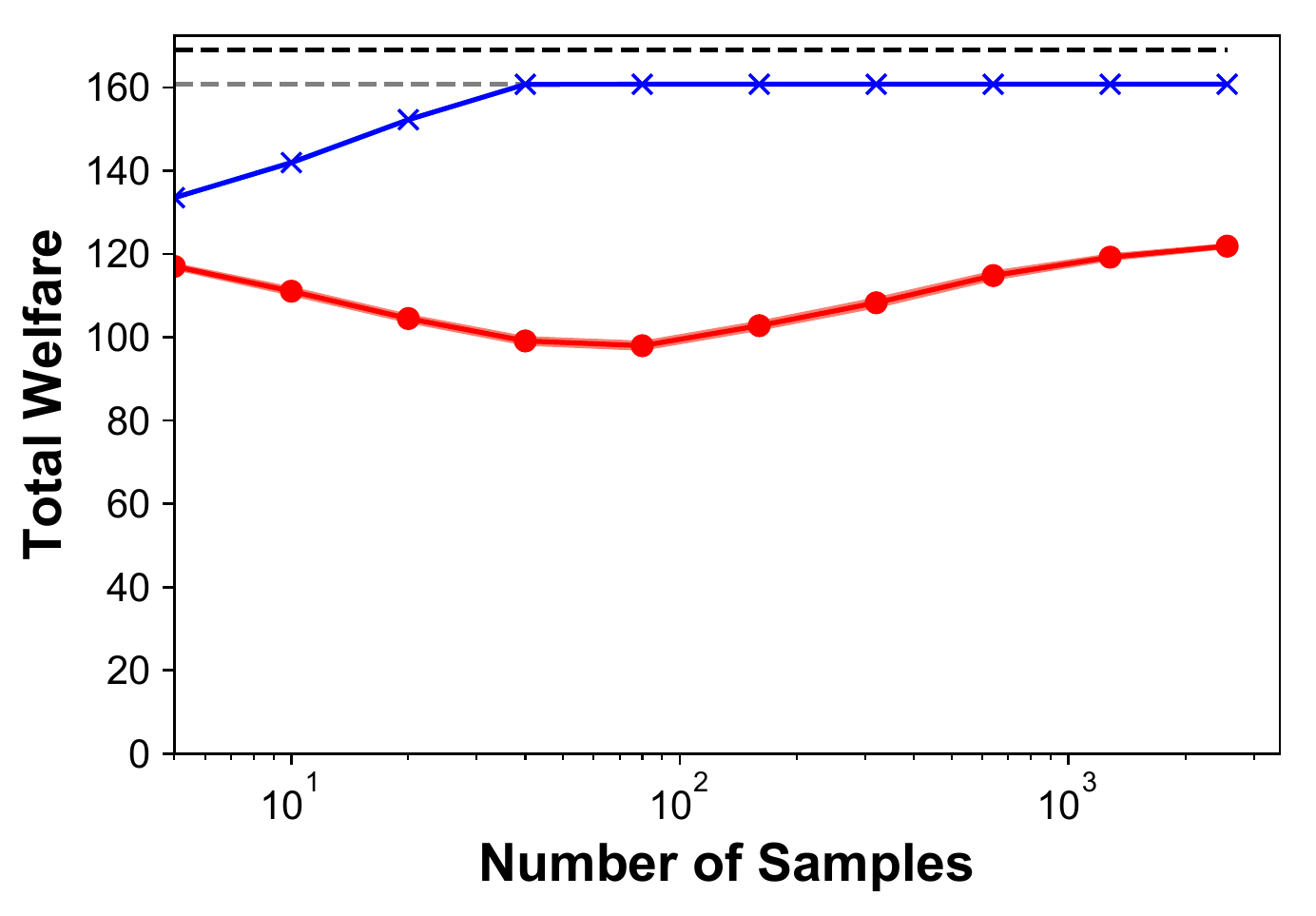}}\hfill
  \subfloat[Sample Size 10\label{fig:additive-case1-dc4}]
  {\includegraphics[width=.25\linewidth]{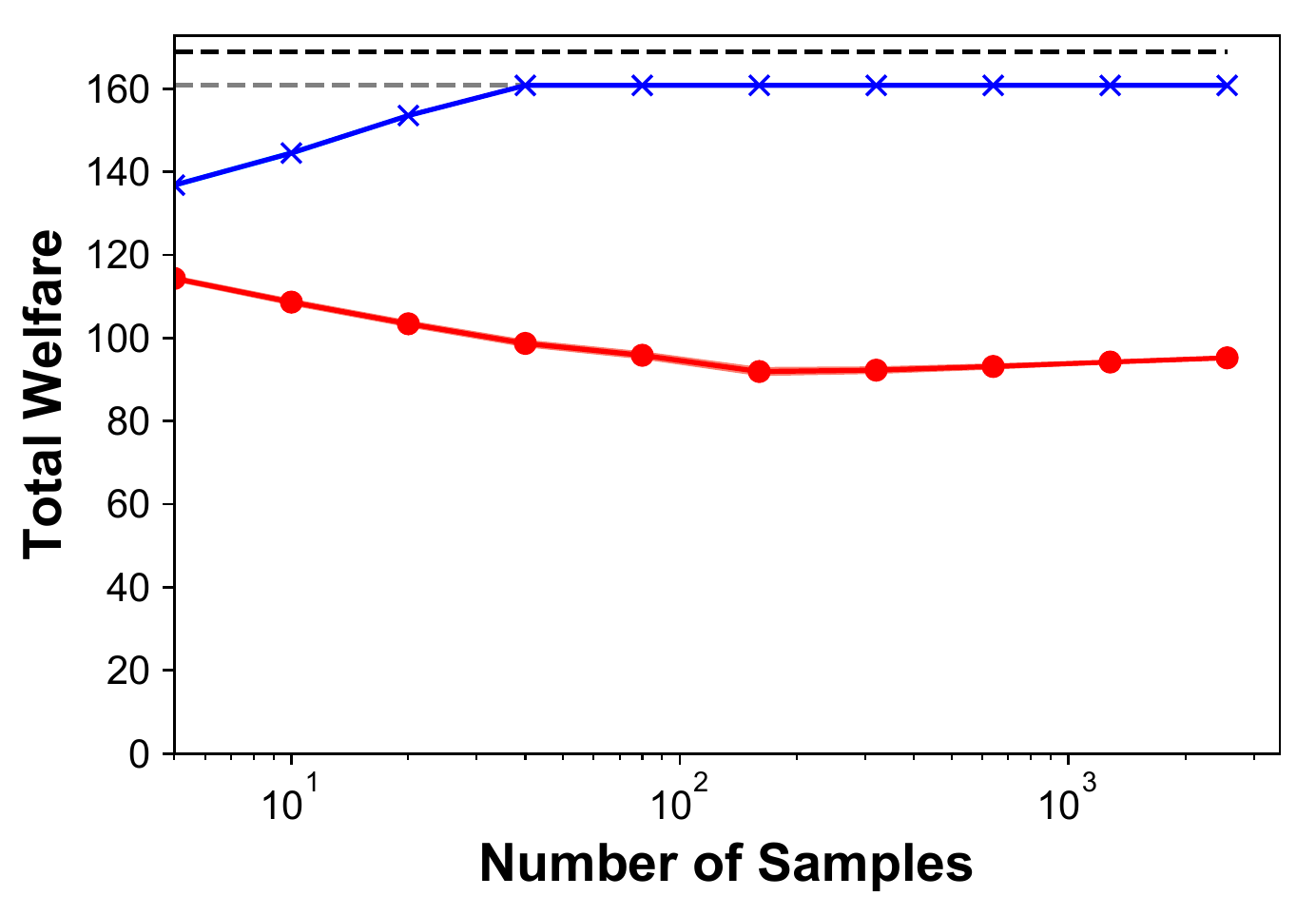}}\hfill
  \subfloat[Uniform Product Distribution\label{fig:additive-case1-dc5}]
  {\includegraphics[width=.25\linewidth]{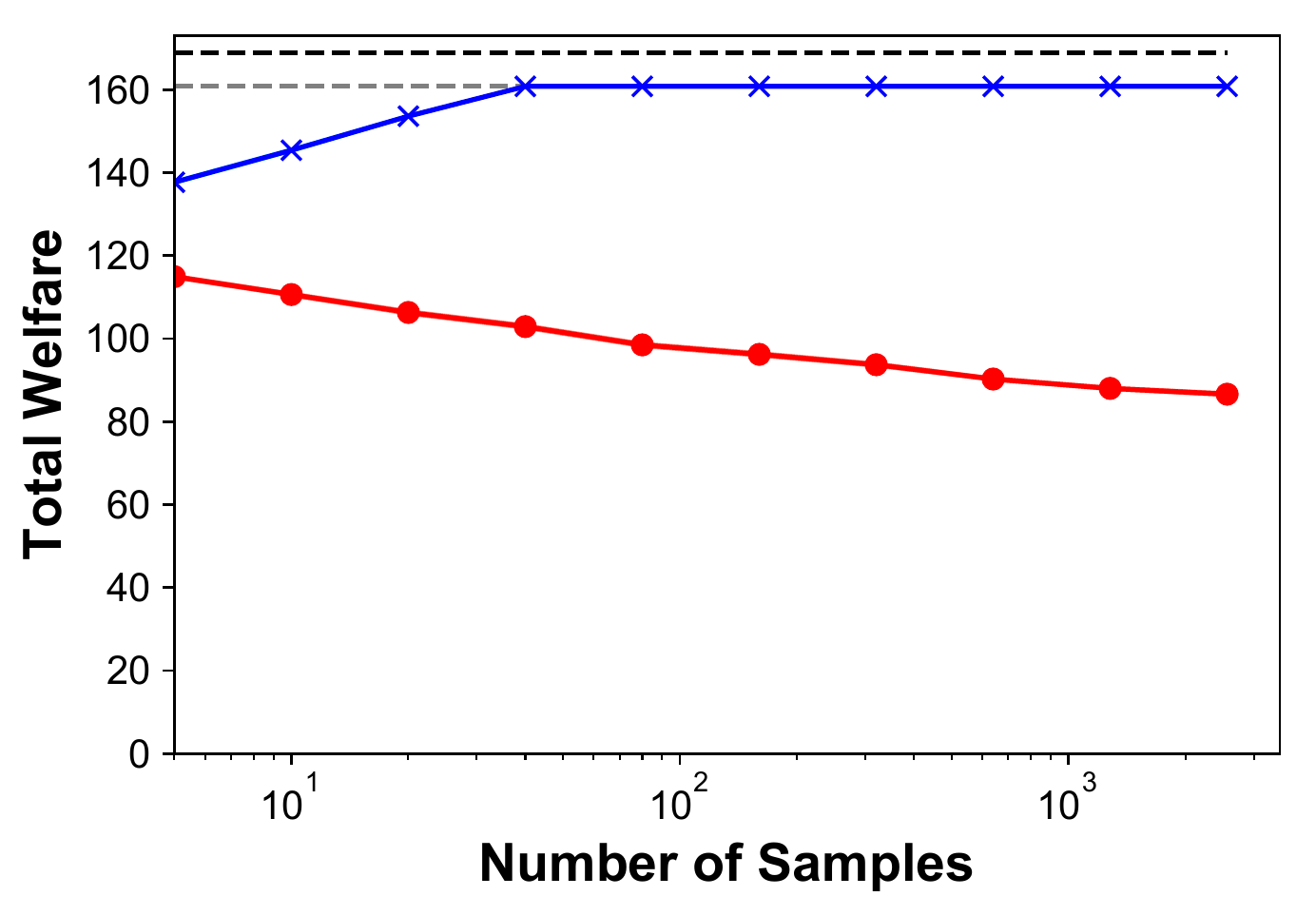}}\hfill
\caption{Additive Markets (Sellers' Market): Total welfare vs. \# of samples for various distributions. \textbf{DLE} represents  directly learned equilibria and \textbf{ILO} represents  indirectly learned outcomes. \textbf{OPT} represents the welfare-maximizing allocation and \textbf{OPTEQ} represents the welfare of the equilibrium allocation when goods are divisible.}
\label{fig:additive-case1-util}
\end{figure}

For all the markets we examine, we see a similar pattern emerge. With a large number of samples, the direct learning approach achieves more than $60\%$ of the welfare that the indirect learning approach achieves (see Figure \ref{fig:additive-case1-util}) inspite of the assumption that goods are indivisible while ensuring consistency with respect to the data. The performance of the direct learning approach decreases marginally as the size of the samples in the dataset increases. 

We further note that welfare first decreases with the number of samples and then increases. This is mainly because the number of burnt goods first increases and then decreases with the samples. Upon closer inspection, these two graphs seem to mirror one another, as one rises when the other falls (see Figure \ref{fig:additive-case1-util-burn}). This indicates that the major cause of the loss of welfare is due to the burning of goods. The trend in the number of burnt goods with respect to the number of samples, however, warrants further discussion. We believe this  occurs due to the fact that the algorithm allocates complete samples to players (when all the samples have the same size). When the size of each sample is constant, the algorithm will allocate disjoint samples to different players and then burn any samples partially but not completely intersecting with the set of allocated goods in order to maintain consistency. When the number of samples is low, we have fewer disjoint samples, so we do not allocate too many goods. Therefore, we are more likely to have bundles that partially intersect with the set of allocated goods and therefore, the algorithm burns a lot of goods in the initial stages. However, at the later stages, when the number of samples is much higher, we are likelier to have disjoint bundles and so we allocate more goods. Therefore, it is less likely to have samples that partially intersect with the set of allocated goods and likelier to have samples that are a subset of the set of allocated goods. As a result of this, the number of burnt goods decreases.

\begin{figure}[ht]
\center
\subfloat[Welfare vs \# of samples\label{subfig:additive-case1-welfare}]
  {\includegraphics[width=.3\linewidth]{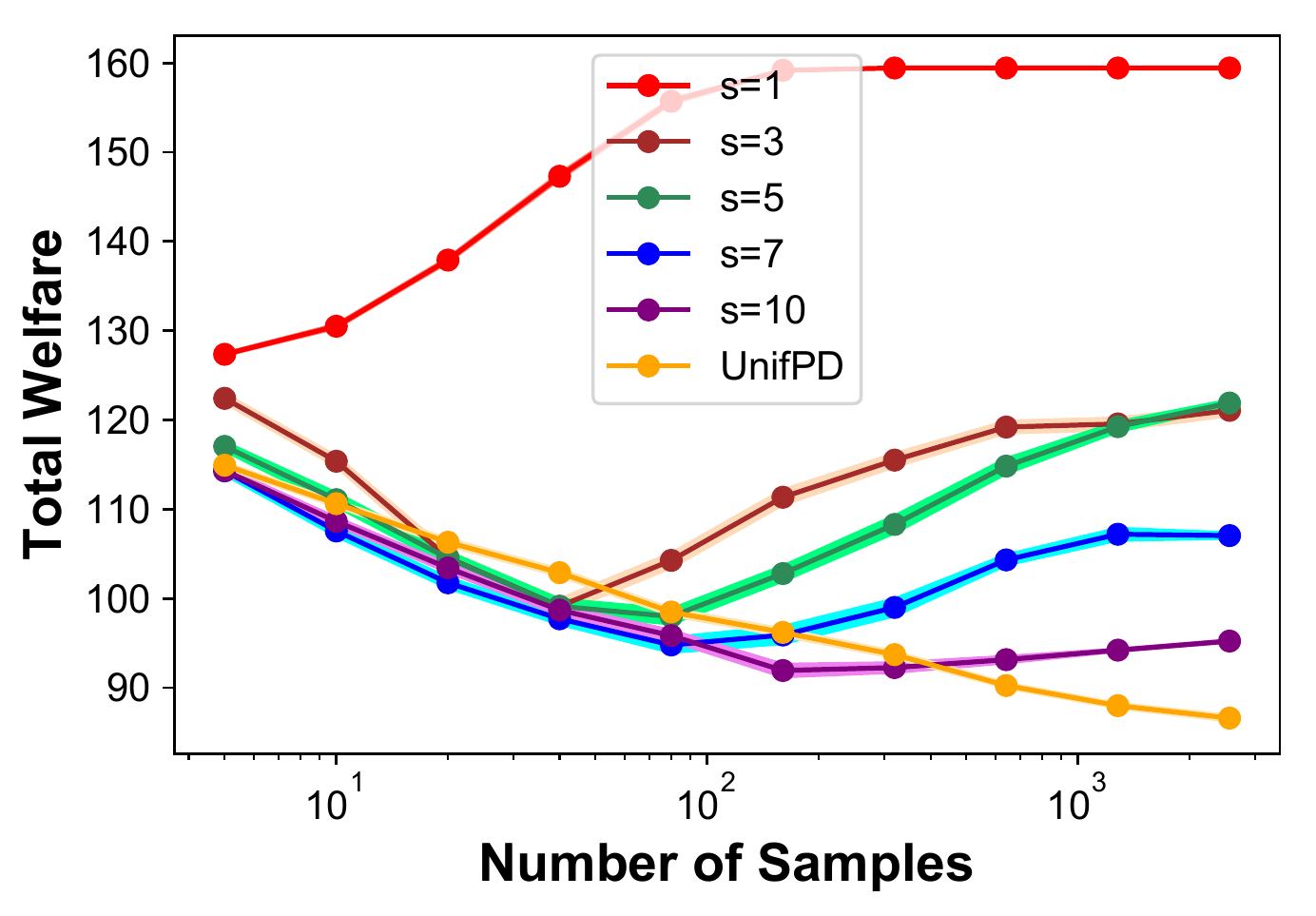}}
\subfloat[Number of Burnt Goods vs. \# of Samples \label{subfig:additive-case1-burn}]
  {\includegraphics[width=.3\linewidth]{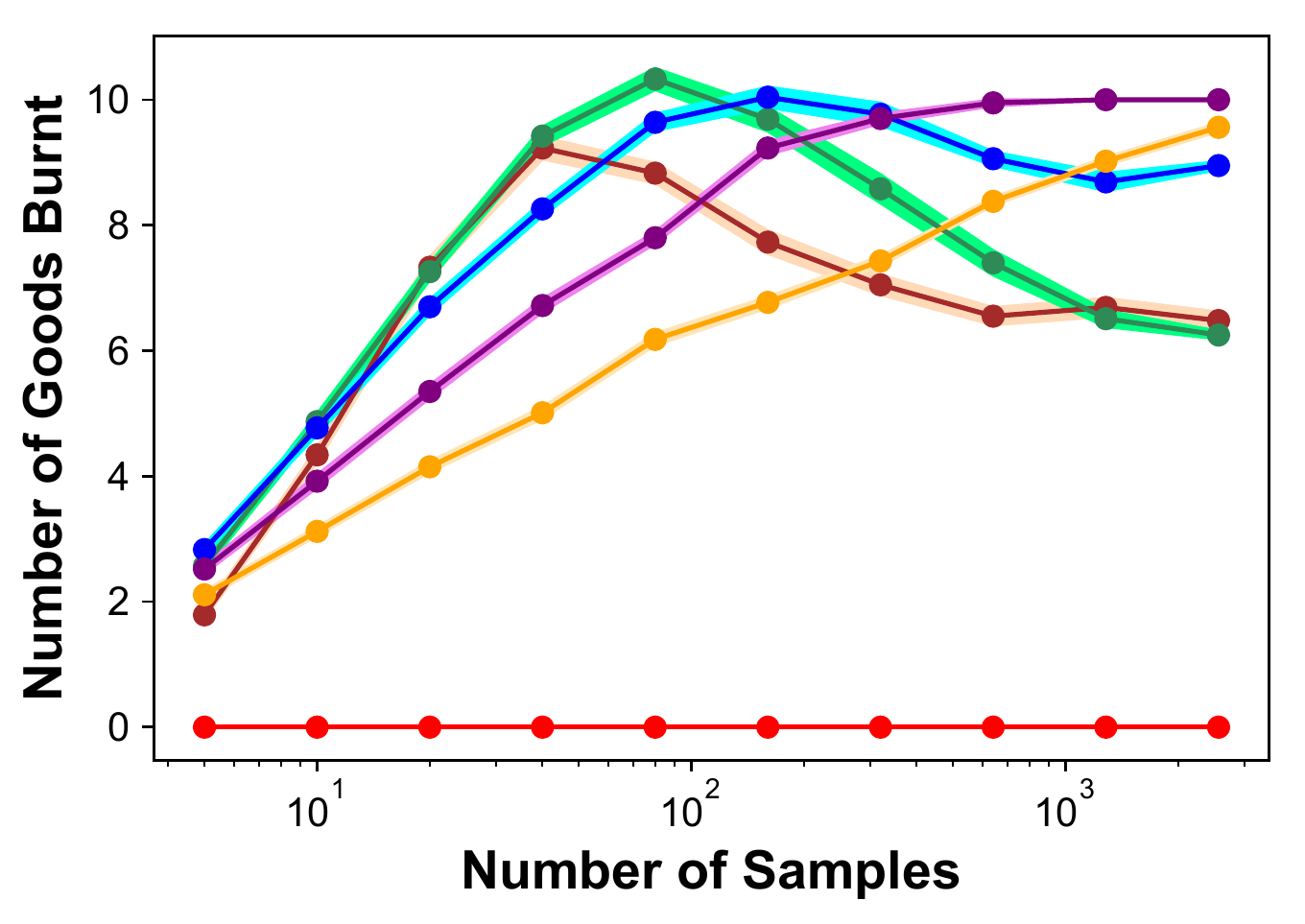}}\hfill
\caption{Additive Markets (Sellers' Market): (Left) Total Welfare vs. \# of samples for various distributions and (Right) \# of Burnt Goods vs. \# of Samples for various distributions.}
\label{fig:additive-case1-util-burn}
\end{figure}

\subsection{Submodular Markets}\label{subsec:submodular-expts}
Similar to additive markets, we use the setup described at the beginning of this section to generate PAC equilibria for all the different market sizes and distributions using Algorithm~\ref{algo:submod-valuations}. For each market and distribution, we evaluate our algorithms for three Threshold values $\Th \in \{3, 5, 10\}$. To the best of our knowledge, there exists no efficient algorithm to compute an equilibrium for these valuations even when goods are divisible, so we compare the efficiency of our algorithm to the optimal welfare allocation which can be computed using an integer linear program. Similar to Section \ref{subsec:additive-expts}, we also evaluate our learning algorithm in terms of its market inconsistency, as measured by the empirical loss and inefficiency with respect to number of goods burnt by the algorithm.

The performance of our algorithm is similar to that of additive markets with two key differences. First, when the number of samples is low, the total welfare is also quite low. When the number of samples is low, there are a lot of leftover goods when the algorithm terminates and both algorithms allocate all the leftover goods to Player $1$ at a price of $0$. While this works and results in a large welfare increase when valuations are additive, when valuations are submodular and there is a threshold value, this has very little effect since the marginal gain drops to $0$ very quickly. Secondly, as the size of each sample grows, the welfare decreases much more sharply to the extent that when all the samples have a size of $10$, the welfare is roughly a fourth of the welfare when all the samples have a size of $3$. This is again because of the Threshold value. When the threshold value is $3$, samples of size $3$ have nearly additive valuations. Therefore, these samples have a much larger utility per unit size than samples of size $10$. Since we allocate complete samples (in the case where all the samples have the same size), when all the samples have size $3$, we can allocate many more samples with high value and the resultant allocation has a much higher welfare. Note that this is also the reason that samples of size $5$ come very close to samples of size $3$ in performance when the threshold value is increased to $5$.

\begin{figure}[ht]
\center
\subfloat[Threshold 3\label{subfig:submod-case1-cutoff3}]
  {\includegraphics[width=.3\linewidth]{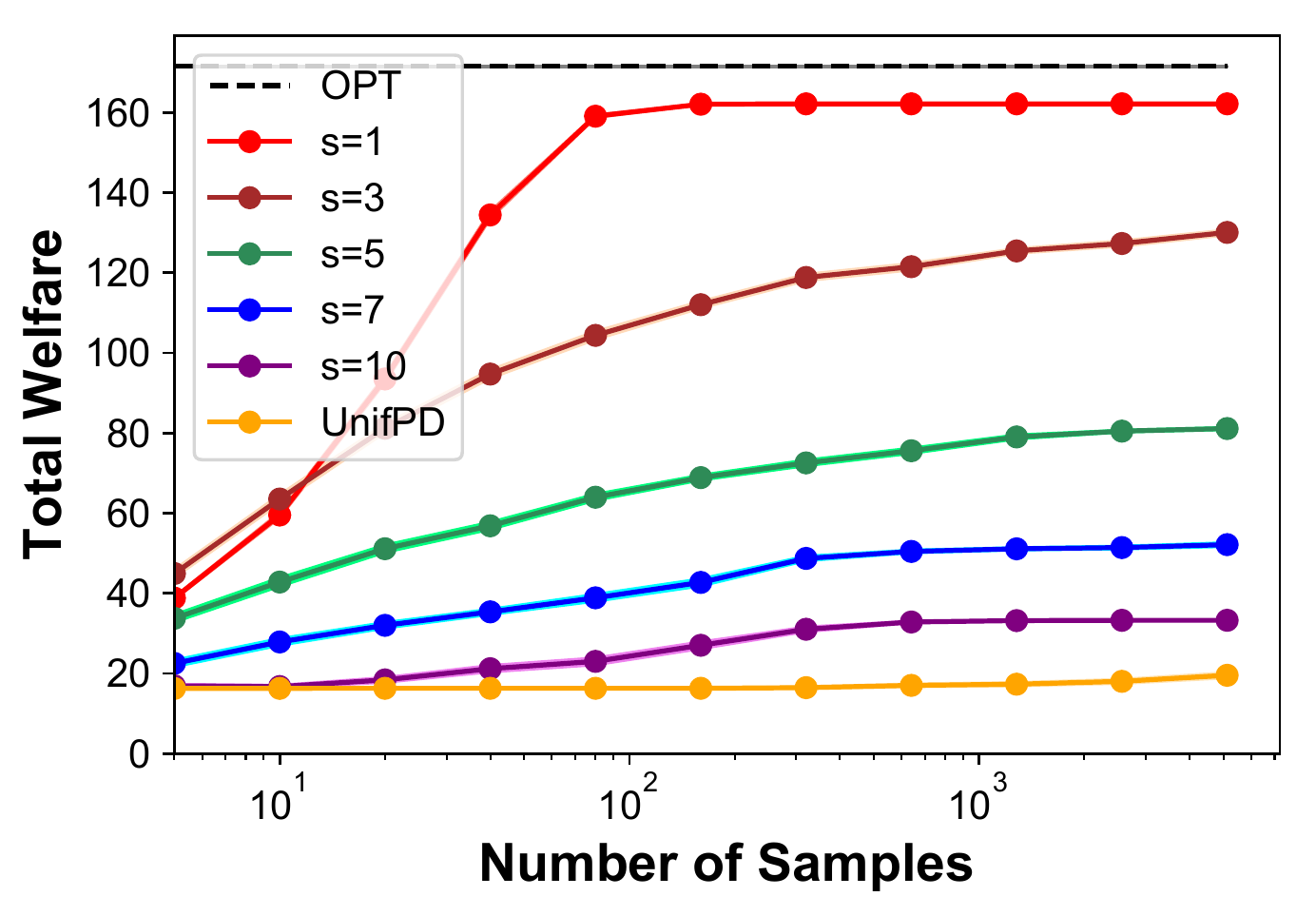}}
 \subfloat[Threshold 5\label{subfig:submod-case1-cutoff5}]
  {\includegraphics[width=.3\linewidth]{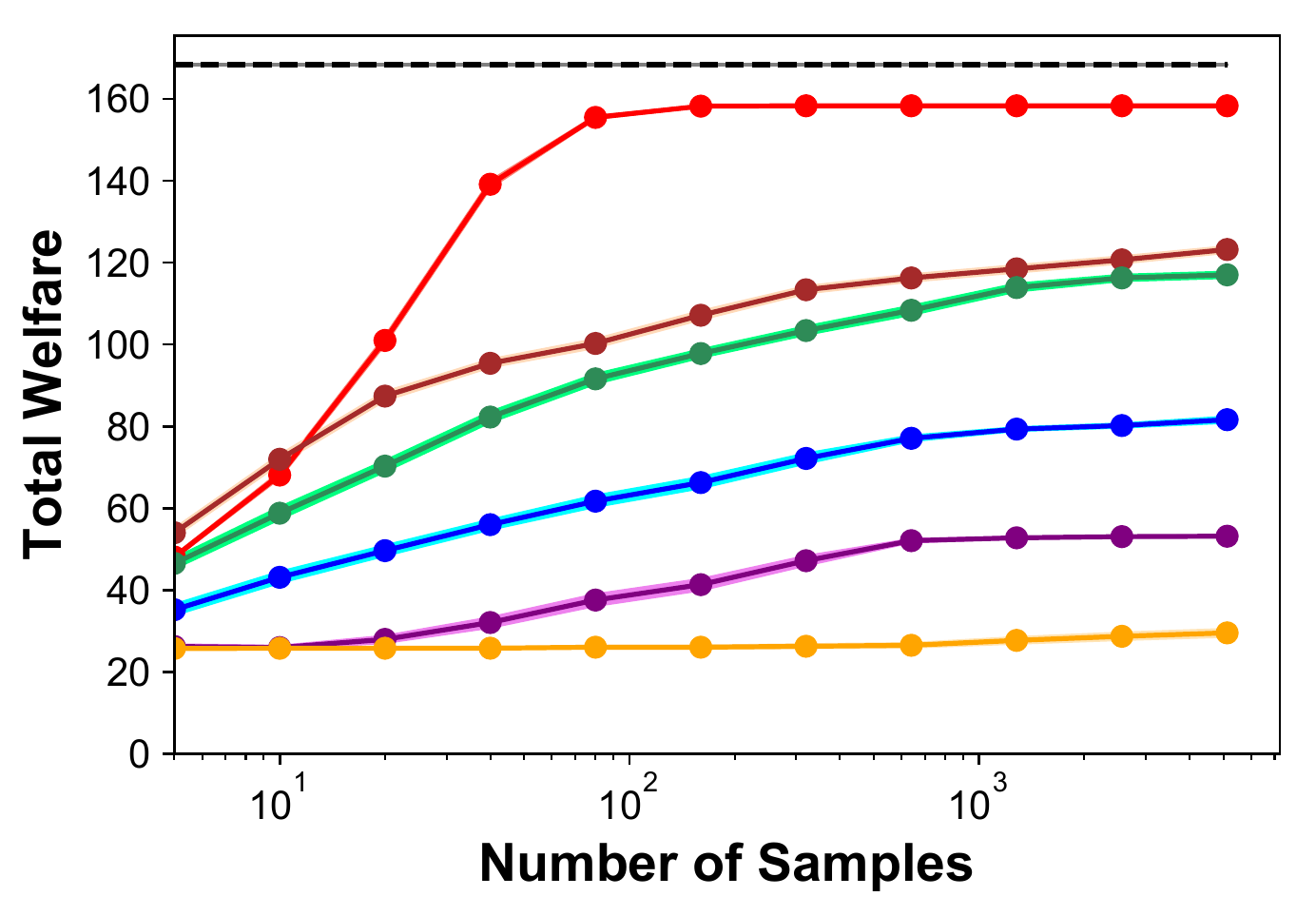}}\hfill
\caption{Submodular Markets (Sellers' Market): Total welfare vs. \# of samples for various distributions. Each line corresponds to a specific distribution. \textbf{OPT} represents the welfare-maximizing allocation.}
\label{fig:submod-case1-util}
\end{figure}

\subsection{Empirical Loss Analysis}
The empirical loss for additive and submodular markets converges to $0$ in roughly $200$ samples for all the different market sizes and distributions we consider (see Figure \ref{fig:add-submod-loss}). We also see a similar trend when we compare empirical loss versus the size of the samples in the dataset: as the size increases, the empirical loss decreases. Similar to unit demand markets, we attribute this to the fact that it is harder to violate consistency with samples of large size as opposed to smaller samples.

\begin{figure}[ht]
\center
\subfloat[Submodular Markets (Threshold 5)\label{subfig:submod-losscase1-cutoff5}]
  {\includegraphics[width=.3\linewidth]{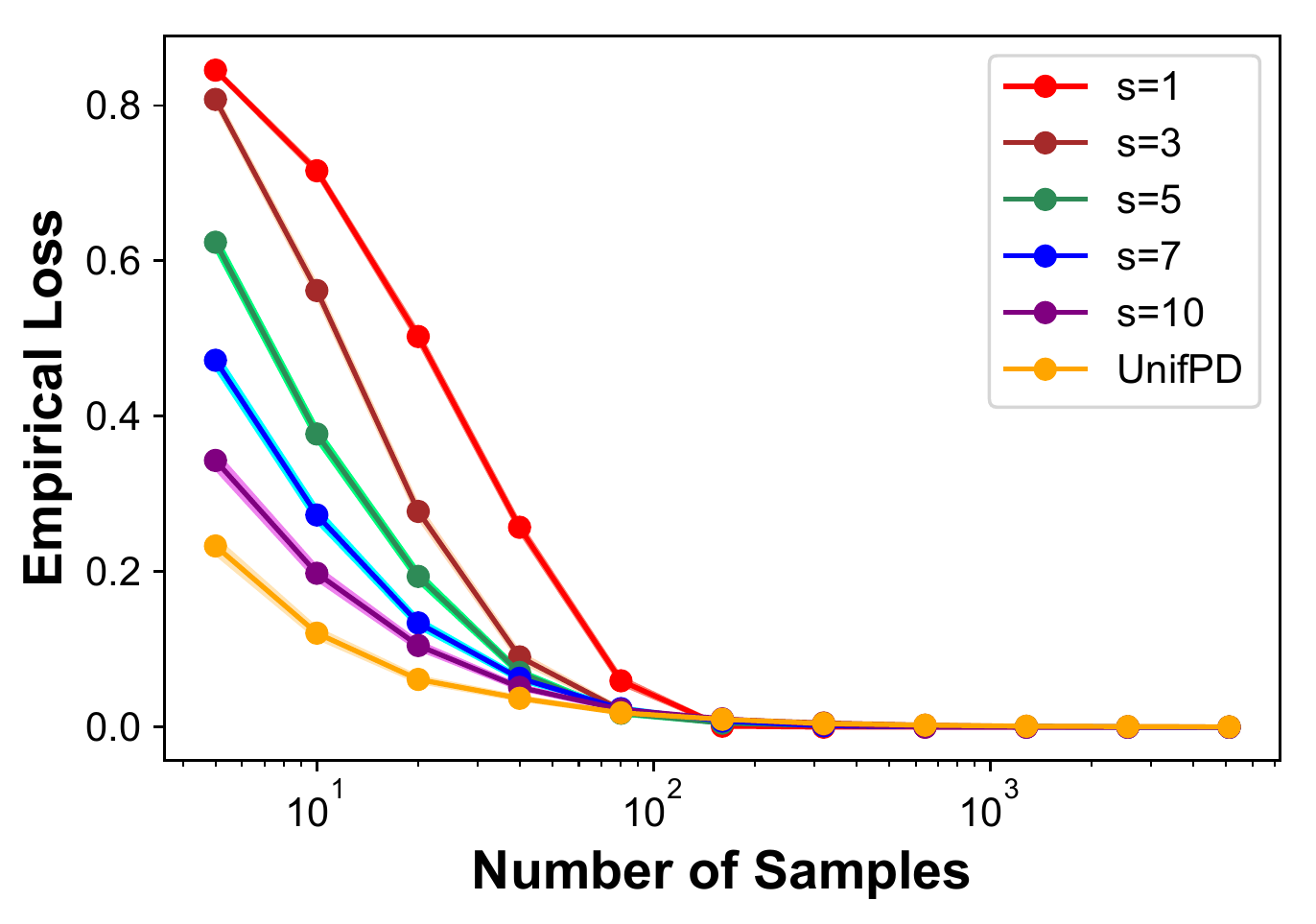}}
 \subfloat[Additive Markets\label{subfig:additive-losscase1}]
  {\includegraphics[width=.3\linewidth]{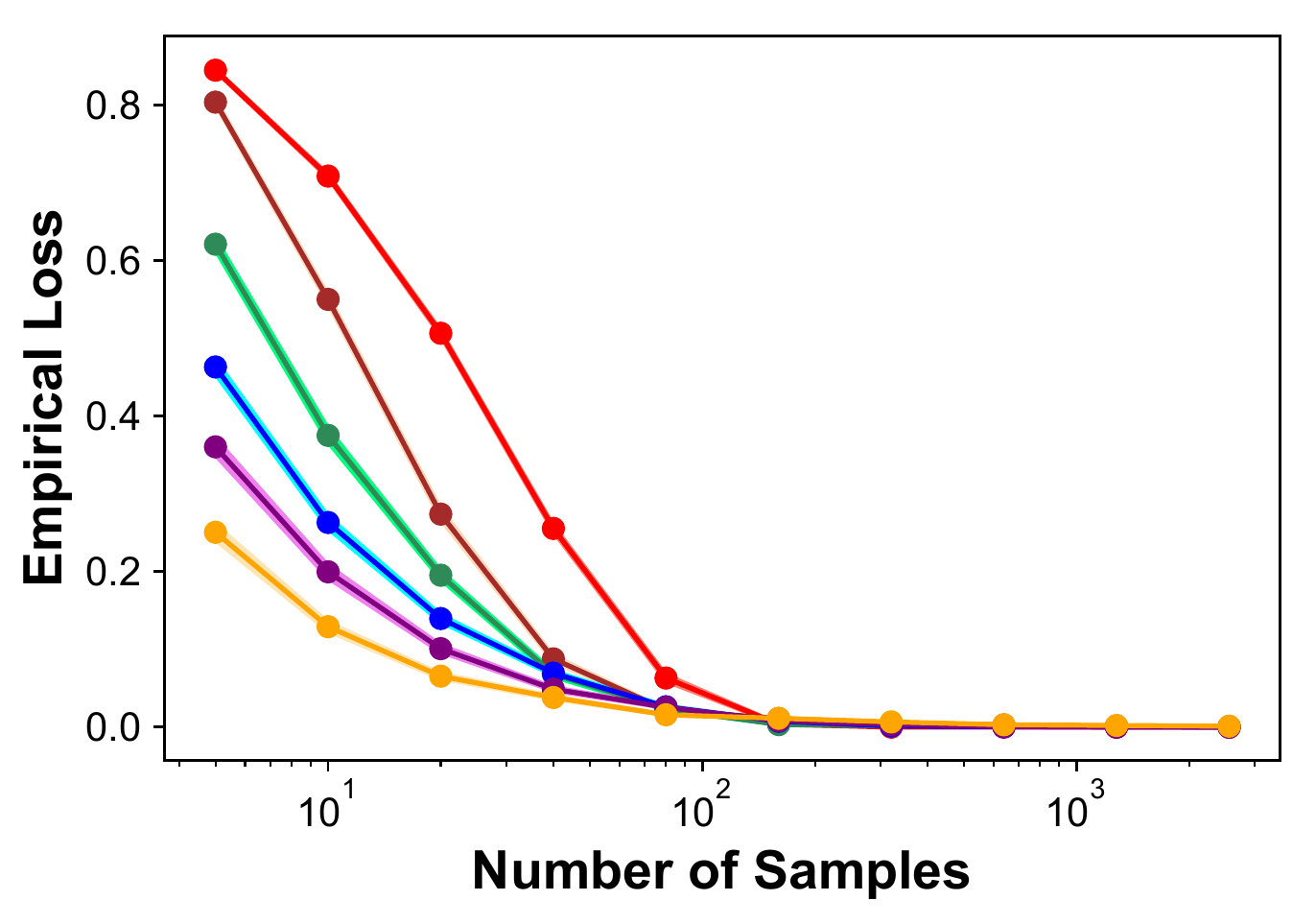}}\hfill
\caption{Sellers' Market: Empirical Loss vs. \# of samples for various distributions.}
\label{fig:add-submod-loss}
\end{figure}
\section{Conclusions and Future Work}\label{sec:conclusion}
This work shows the benefit of directly learning equilibrium states, instead of learning utility functions, and calculating equilibria states from them. We deal with several valuation function families, and in all of them show algorithms to produce a PAC-approximation, with our results being tight, i.e., no better approximation can be guaranteed.

Examining more realistic-seeming settings allows us to see that this approach carries with it several advantages and several drawbacks. The direct algorithm is more ``careful'', avoiding making significant mistakes (which is what ensures its theoretical guarantees), but the indirect approach in unit-demand utilities, performs risky moves in low-information settings (few samples) that can pay off, particularly in cases (as in our simulations), when the difference in valuations between items is not large. In more elaborate settings, involving more complex utility functions, we can see that our method finds equilibria, when having a decent-sized sample space, and the equilibrium quality compares well with the \emph{divisible} case, indicating we find at least a solid baseline for equilibrium in the indivisible case.

We believe that this work is the tip of the iceberg in showing how PAC learning can help in reaching economic, game-theoretic results, directly from the data, without using the data to construct intermediate steps (such as learning utility functions). Plenty of problems are still open -- from expanding results to a larger family of functions (XOS, gross substitutes), to further type of results (e.g., other desirable states beyond equilibria).


\bibliography{abb,learningmarketrefs}

\begin{thebibliography}{}

\bibitem[\protect\BCAY{Alkan, Demange,\ \BBA\ Gale}{Alkan
  et~al.}{1991}]{alkan1991allocation}
Alkan, A., Demange, G., \BBA\ Gale, D. \BBOP1991\BBCP.
\newblock \BBOQ Fair allocation of indivisible goods and criteria of
  justice\BBCQ\
\newblock {\Bem Econometrica}, {\Bem 59\/}(4), 1023--1039.

\bibitem[\protect\BCAY{Anthony\ \BBA\ Bartlett}{Anthony\ \BBA\
  Bartlett}{1999}]{anthony1999learning}
Anthony, M.\BBACOMMA\  \BBA\ Bartlett, P. \BBOP1999\BBCP.
\newblock {\Bem Neural Network Learning: Theoretical Foundations}.
\newblock Cambridge University Press.

\bibitem[\protect\BCAY{Aragones}{Aragones}{1995}]{aragones1995rent}
Aragones, E. \BBOP1995\BBCP.
\newblock \BBOQ A derivation of the money rawlsian solution\BBCQ\
\newblock {\Bem Social Choice and Welfare}, {\Bem 12\/}(3), 267--276.

\bibitem[\protect\BCAY{Babaioff, Nisan,\ \BBA\ Talgam-Cohen}{Babaioff
  et~al.}{2021}]{babaioff2017additive}
Babaioff, M., Nisan, N., \BBA\ Talgam-Cohen, I. \BBOP2021\BBCP.
\newblock \BBOQ Competitive equilibria with indivisible goods and generic
  budgets\BBCQ\
\newblock {\Bem Mathematics of Operations Research}, {\Bem 46}, 382--403.

\bibitem[\protect\BCAY{Balcan\ \BBA\ Harvey}{Balcan\ \BBA\
  Harvey}{2011}]{balcan2011submod}
Balcan, M.-F.\BBACOMMA\  \BBA\ Harvey, N.~J. \BBOP2011\BBCP.
\newblock \BBOQ Learning submodular functions\BBCQ\
\newblock In {\Bem Proceedings of the 43rd Annual ACM Symposium on Theory of
  Computing (STOC)}, \BPGS\ 793--802.

\bibitem[\protect\BCAY{Balcan, Procaccia,\ \BBA\ Zick}{Balcan
  et~al.}{2015}]{balcan2015learning}
Balcan, M., Procaccia, A., \BBA\ Zick, Y. \BBOP2015\BBCP.
\newblock \BBOQ Learning cooperative games\BBCQ\
\newblock In {\Bem Proceedings of the 24th International Joint Conference on
  Artificial Intelligence (IJCAI)}, \BPGS\ 475--481.

\bibitem[\protect\BCAY{Balkanski\ \BBA\ Singer}{Balkanski\ \BBA\
  Singer}{2017a}]{balkanski2017minops}
Balkanski, E.\BBACOMMA\  \BBA\ Singer, Y. \BBOP2017a\BBCP.
\newblock \BBOQ Minimizing a submodular function from samples\BBCQ\
\newblock In {\Bem Proceedings of the 31st Annual Conference on Neural
  Information Processing Systems (NIPS)}, \BPGS\ 814--822.

\bibitem[\protect\BCAY{Balkanski\ \BBA\ Singer}{Balkanski\ \BBA\
  Singer}{2017b}]{balkanski2017opt}
Balkanski, E.\BBACOMMA\  \BBA\ Singer, Y. \BBOP2017b\BBCP.
\newblock \BBOQ The sample complexity of optimizing a convex function\BBCQ\
\newblock In {\Bem Proceedings of the 30th Conference on Computational Learning
  Theory (COLT)}, \BPGS\ 275--301.

\bibitem[\protect\BCAY{Balkanski, Syed,\ \BBA\ Vassilvitskii}{Balkanski
  et~al.}{2017}]{balkanski2017costsharing}
Balkanski, E., Syed, U., \BBA\ Vassilvitskii, S. \BBOP2017\BBCP.
\newblock \BBOQ Statistical cost sharing\BBCQ\
\newblock In {\Bem Proceedings of the 30th Annual Conference on Neural
  Information Processing Systems (NIPS)}, \BPGS\ 6221--6230.

\bibitem[\protect\BCAY{Barman\ \BBA\ Krishnamurthy}{Barman\ \BBA\
  Krishnamurthy}{2019}]{barman2018additive}
Barman, S.\BBACOMMA\  \BBA\ Krishnamurthy, S. \BBOP2019\BBCP.
\newblock \BBOQ On the proximity of markets with integral equilibria\BBCQ\
\newblock In {\Bem Proceedings of the 33rd AAAI Conference on Artificial
  Intelligence (AAAI)}, \BPGS\ 1748--1755.

\bibitem[\protect\BCAY{Bikhchandani\ \BBA\ Mamer}{Bikhchandani\ \BBA\
  Mamer}{1997}]{bikhchandani}
Bikhchandani, S.\BBACOMMA\  \BBA\ Mamer, J.~W. \BBOP1997\BBCP.
\newblock \BBOQ Competitive equilibrium in an exchange economy with
  indivisibilities\BBCQ\
\newblock {\Bem Journal of Economic Theory}, {\Bem 74\/}(2), 385--413.

\bibitem[\protect\BCAY{Birnbaum, Devanur,\ \BBA\ Xiao}{Birnbaum
  et~al.}{2010}]{birnbaum2010new}
Birnbaum, B., Devanur, N.~R., \BBA\ Xiao, L. \BBOP2010\BBCP.
\newblock \BBOQ New convex programs and distributed algorithms for fisher
  markets with linear and spending constraint utilities\BBCQ\
\newblock \BTR.

\bibitem[\protect\BCAY{Borodin, Lev,\ \BBA\ Strangway}{Borodin
  et~al.}{2016}]{BLS16}
Borodin, A., Lev, O., \BBA\ Strangway, T. \BBOP2016\BBCP.
\newblock \BBOQ Budgetary effects on pricing equilibrium in online
  markets\BBCQ\
\newblock In {\Bem Proceedings of the 15th International Conference on
  Autonomous Agents and Multi-Agent Systems (AAMAS)}, \BPGS\ 95--103,
  Singapore.

\bibitem[\protect\BCAY{Br{\^{a}}nzei, Hosseini,\ \BBA\ Miltersen}{Br{\^{a}}nzei
  et~al.}{2015}]{branzei2015unitdemand}
Br{\^{a}}nzei, S., Hosseini, H., \BBA\ Miltersen, P.~B. \BBOP2015\BBCP.
\newblock \BBOQ Characterization and computation of equilibria for indivisible
  goods\BBCQ\
\newblock In {\Bem Proceedings of the 8th International Symposium on
  Algorithmic Game Theory (SAGT)}, \BPGS\ 244--255.

\bibitem[\protect\BCAY{Br{\^a}nzei, Lv,\ \BBA\ Mehta}{Br{\^a}nzei
  et~al.}{2016}]{branzeisingleminded}
Br{\^a}nzei, S., Lv, Y., \BBA\ Mehta, R. \BBOP2016\BBCP.
\newblock \BBOQ To give or not to give: Fair division for single minded
  valuations\BBCQ\
\newblock In {\Bem Proceedings of the 25th International Joint Conference on
  Artificial Intelligence (IJCAI)}, \BPGS\ 123--129.

\bibitem[\protect\BCAY{Brero, Lubin,\ \BBA\ Seuken}{Brero
  et~al.}{2018}]{brero2018auctions}
Brero, G., Lubin, B., \BBA\ Seuken, S. \BBOP2018\BBCP.
\newblock \BBOQ Combinatorial auctions via machine learning-based preference
  elicitation\BBCQ\
\newblock In {\Bem Proceedings of the 27th International Joint Conference on
  Artificial Intelligence (IJCAI)}, \BPGS\ 128--136.

\bibitem[\protect\BCAY{Budish}{Budish}{2011}]{budish2011}
Budish, E. \BBOP2011\BBCP.
\newblock \BBOQ The combinatorial assignment problem: Approximate competitive
  equilibrium from equal incomes\BBCQ\
\newblock {\Bem Journal of Political Economics}, {\Bem 119\/}(6), 1061--1103.

\bibitem[\protect\BCAY{Cole\ \BBA\ Roughgarden}{Cole\ \BBA\
  Roughgarden}{2014}]{cole2014}
Cole, R.\BBACOMMA\  \BBA\ Roughgarden, T. \BBOP2014\BBCP.
\newblock \BBOQ The sample complexity of revenue maximization\BBCQ\
\newblock In {\Bem Proceedings of the 46th Annual ACM Symposium on Theory of
  Computing (STOC)}, \BPGS\ 243--252.

\bibitem[\protect\BCAY{Devanur, Huang,\ \BBA\ Psomas}{Devanur
  et~al.}{2016}]{psomas2016auctions}
Devanur, N.~R., Huang, Z., \BBA\ Psomas, C.-A. \BBOP2016\BBCP.
\newblock \BBOQ The sample complexity of auctions with side information\BBCQ\
\newblock In {\Bem Proceedings of the 48th Annual ACM Symposium on Theory of
  Computing (STOC)}, \BPGS\ 426--439.

\bibitem[\protect\BCAY{Devanur, Papadimitriou, Saberi,\ \BBA\ Vazirani}{Devanur
  et~al.}{2008}]{devanur2008market}
Devanur, N.~R., Papadimitriou, C.~H., Saberi, A., \BBA\ Vazirani, V.~V.
  \BBOP2008\BBCP.
\newblock \BBOQ Market equilibrium via a primal--dual algorithm for a convex
  program\BBCQ\
\newblock {\Bem Journal of the Association for Computing Machinery}, {\Bem
  55\/}(5), 22.

\bibitem[\protect\BCAY{Eisenberg}{Eisenberg}{1961}]{eisenberg1961aggregation}
Eisenberg, E. \BBOP1961\BBCP.
\newblock \BBOQ Aggregation of utility functions\BBCQ\
\newblock {\Bem Management Science}, {\Bem 7\/}(4), 337--350.

\bibitem[\protect\BCAY{Farhadi, Ghodsi, Hajiaghayi, Lahaie, Pennock, Seddighin,
  Seddighin,\ \BBA\ Yami}{Farhadi et~al.}{2019}]{farhadi2019}
Farhadi, A., Ghodsi, M., Hajiaghayi, M., Lahaie, S., Pennock, D., Seddighin,
  M., Seddighin, S., \BBA\ Yami, H. \BBOP2019\BBCP.
\newblock \BBOQ Fair allocation of indivisible goods to asymmetric agents\BBCQ\
\newblock {\Bem Journal of Artificial Intelligence Research}, {\Bem 64\/}(1),
  1--20.

\bibitem[\protect\BCAY{Gal, Mash, Procaccia,\ \BBA\ Zick}{Gal
  et~al.}{2017}]{gal2017rent}
Gal, Y.~K., Mash, M., Procaccia, A.~D., \BBA\ Zick, Y. \BBOP2017\BBCP.
\newblock \BBOQ Which is the fairest (rent division) of them all?\BBCQ\
\newblock {\Bem Journal of the Association for Computing Machinery}, {\Bem
  64\/}(6), 39:1--39:22.

\bibitem[\protect\BCAY{Gul\ \BBA\ Stacchetti}{Gul\ \BBA\
  Stacchetti}{1999}]{gul99}
Gul, F.\BBACOMMA\  \BBA\ Stacchetti, E. \BBOP1999\BBCP.
\newblock \BBOQ Walrasian equilibrium with gross substitutes\BBCQ\
\newblock {\Bem Journal of Economic Theory}, {\Bem 87\/}(1), 95--124.

\bibitem[\protect\BCAY{Guruswami, Hartline, Karlin, Kempe, Kenyon, McSherry,\
  \BBA\ McSherry}{Guruswami et~al.}{2005}]{guruswamiprices}
Guruswami, V., Hartline, J.~D., Karlin, A.~R., Kempe, D., Kenyon, C., McSherry,
  F., \BBA\ McSherry, F. \BBOP2005\BBCP.
\newblock \BBOQ On profit-maximizing envy-free pricing\BBCQ\
\newblock In {\Bem Proceedings of the 16th Annual ACM-SIAM Symposium on
  Discrete Algorithms (SODA)}, \BPGS\ 1164--1173.

\bibitem[\protect\BCAY{Harper\ \BBA\ Konstan}{Harper\ \BBA\
  Konstan}{2015}]{movielens}
Harper, F.~M.\BBACOMMA\  \BBA\ Konstan, J.~A. \BBOP2015\BBCP.
\newblock \BBOQ The movielens datasets: History and context\BBCQ\
\newblock {\Bem ACM Transactions on Interactive Intelligent Systems}, {\Bem 5},
  19:1--19:19.

\bibitem[\protect\BCAY{Igarashi, Sliwinski,\ \BBA\ Zick}{Igarashi
  et~al.}{2019}]{igarashi2019stable}
Igarashi, A., Sliwinski, J., \BBA\ Zick, Y. \BBOP2019\BBCP.
\newblock \BBOQ Forming probably stable communities with limited
  interactions\BBCQ\
\newblock In {\Bem Proceedings of the 33rd AAAI Conference on Artificial
  Intelligence (AAAI)}, \BPGS\ 2053--2060.

\bibitem[\protect\BCAY{Jha\ \BBA\ Zick}{Jha\ \BBA\ Zick}{2020}]{JZ20}
Jha, T.\BBACOMMA\  \BBA\ Zick, Y. \BBOP2020\BBCP.
\newblock \BBOQ A learning framework for distribution-based game-theoretic
  solution concepts\BBCQ\
\newblock In {\Bem Proceedings of the 21st ACM Conference on Economics and
  Computation (EC)}, \BPGS\ 355--377.

\bibitem[\protect\BCAY{Kearns\ \BBA\ Vazirani}{Kearns\ \BBA\
  Vazirani}{1994}]{kearnsvazirani1994learning}
Kearns, M.~J.\BBACOMMA\  \BBA\ Vazirani, U. \BBOP1994\BBCP.
\newblock {\Bem An Introduction to Computational Learning Theory}.
\newblock The {MIT} Press.

\bibitem[\protect\BCAY{Kelso\ \BBA\ Crawford}{Kelso\ \BBA\
  Crawford}{1982}]{kc82}
Kelso, A.~S.\BBACOMMA\  \BBA\ Crawford, V.~P. \BBOP1982\BBCP.
\newblock \BBOQ Job matching, coalition formation, and gross substitutes\BBCQ\
\newblock {\Bem Econometrica}, {\Bem 50\/}(6), 1483--1504.

\bibitem[\protect\BCAY{Kroer, Peysakhovich, Sodomka,\ \BBA\ Stier-Moses}{Kroer
  et~al.}{2019}]{kroer2019computing}
Kroer, C., Peysakhovich, A., Sodomka, E., \BBA\ Stier-Moses, N.~E.
  \BBOP2019\BBCP.
\newblock \BBOQ Computing large market equilibria using abstractions\BBCQ\
\newblock In {\Bem Proceedings of the 19th ACM Conference on Economics and
  Computation (EC)}, \BPGS\ 745--746.

\bibitem[\protect\BCAY{Kurokawa, Procaccia,\ \BBA\ Wang}{Kurokawa
  et~al.}{2016}]{Kurokawa2016}
Kurokawa, D., Procaccia, A.~D., \BBA\ Wang, J. \BBOP2016\BBCP.
\newblock \BBOQ When can the maximin share guarantee be guaranteed?\BBCQ\
\newblock In {\Bem Proceedings of the 30th AAAI Conference on Artificial
  Intelligence (AAAI)}, \BPGS\ 523--529.

\bibitem[\protect\BCAY{Morgenstern\ \BBA\ Roughgarden}{Morgenstern\ \BBA\
  Roughgarden}{2015}]{jamie2015nips}
Morgenstern, J.~H.\BBACOMMA\  \BBA\ Roughgarden, T. \BBOP2015\BBCP.
\newblock \BBOQ On the pseudo-dimension of nearly optimal auctions\BBCQ\
\newblock In {\Bem Proceedings of the 28th Annual Conference on Neural
  Information Processing Systems (NIPS)}, \BPGS\ 136--144.

\bibitem[\protect\BCAY{Murray, Kroer, Peysakhovich,\ \BBA\ Shah}{Murray
  et~al.}{2020}]{murray2020robust}
Murray, R., Kroer, C., Peysakhovich, A., \BBA\ Shah, P. \BBOP2020\BBCP.
\newblock \BBOQ Robust market equilibria with uncertain preferences\BBCQ\
\newblock In {\Bem Proceedings of the 34th AAAI Conference on Artificial
  Intelligence (AAAI)}, \BPGS\ 2192--2199.

\bibitem[\protect\BCAY{Paes-Leme\ \BBA\ Wong}{Paes-Leme\ \BBA\
  Wong}{2017}]{lemefastequilibria}
Paes-Leme, R.\BBACOMMA\  \BBA\ Wong, S. C.-W. \BBOP2017\BBCP.
\newblock \BBOQ Computing walrasian equilibria: Fast algorithms and structural
  properties\BBCQ\
\newblock In {\Bem Proceedings of the 28th Annual ACM-SIAM Symposium on
  Discrete Algorithms (SODA)}, \BPGS\ 632--651.

\bibitem[\protect\BCAY{Rosenfeld, Balkanski, Globerson,\ \BBA\
  Singer}{Rosenfeld et~al.}{2018}]{rosenfield2018}
Rosenfeld, N., Balkanski, E., Globerson, A., \BBA\ Singer, Y. \BBOP2018\BBCP.
\newblock \BBOQ Learning to optimize combinatorial functions\BBCQ\
\newblock In {\Bem Proceedings of the 35th International Conference on Machine
  Learning (ICML)}, \BPGS\ 4374--4383.

\bibitem[\protect\BCAY{Segal-Halevi}{Segal-Halevi}{2017}]{SegalHalevi2017CompetitiveEF}
Segal-Halevi, E. \BBOP2017\BBCP.
\newblock \BBOQ Competitive equilibrium for almost all incomes\BBCQ\
\newblock In {\Bem Proceedings of the 16th International Conference on
  Autonomous Agents and Multi-Agent Systems (AAMAS)}, \BPGS\ 1267--1275.

\bibitem[\protect\BCAY{Shen, Lahaie,\ \BBA\ Paes-Leme}{Shen
  et~al.}{2019}]{renatoprices}
Shen, W., Lahaie, S., \BBA\ Paes-Leme, R. \BBOP2019\BBCP.
\newblock \BBOQ Learning to clear the market\BBCQ\
\newblock In {\Bem Proceedings of the 36th International Conference on Machine
  Learning (ICML)}, \BPGS\ 5710--5718.

\bibitem[\protect\BCAY{Sliwinski\ \BBA\ Zick}{Sliwinski\ \BBA\
  Zick}{2017}]{sliwinski2017hedonic}
Sliwinski, J.\BBACOMMA\  \BBA\ Zick, Y. \BBOP2017\BBCP.
\newblock \BBOQ Learning hedonic games\BBCQ\
\newblock In {\Bem Proceedings of the 26th International Joint Conference on
  Artificial Intelligence (IJCAI)}, \BPGS\ 2730--2736.

\bibitem[\protect\BCAY{Varian}{Varian}{1974}]{varian1974markets}
Varian, H. \BBOP1974\BBCP.
\newblock \BBOQ Equity, envy, and efficiency\BBCQ\
\newblock {\Bem Journal of Economic Theory}, {\Bem 9}, 63--91.

\bibitem[\protect\BCAY{Viqueira\ \BBA\ Greenwald}{Viqueira\ \BBA\
  Greenwald}{2020}]{greenwald2020}
Viqueira, E.~A.\BBACOMMA\  \BBA\ Greenwald, A. \BBOP2020\BBCP.
\newblock \BBOQ Learning competitive equilibria in noisy combinatorial
  markets\BBCQ\
\newblock In {\Bem Proceedings of the 2nd Games, Agents, and Incentives
  Workshop (GAIW@AAMAS 2020)}.

\end{thebibliography}
\bibliographystyle{theapa}

\appendix
\onecolumn


\section{Missing Proofs from Section~\ref{sec:unitdemand}}\label{apdx:unit-demand}

\lembestegm*
\begin{proof}
We prove this result by induction on $i$. 
Since player 1's budget, $b_1$, is the highest, in any equilibrium allocation they should be allocated item $g_1^* = \argmax_{g'\in \cal G_1} v_1(g')$. Note that since players have strict preferences over items, $g_1^*$ is unique. If player 1 is not allocated $g_1^*$, then either it is unallocated (and has a price of 0), in which case player 1 demands it, contradicting that it is an equilibrium. Otherwise, it is allocated to another player $j$ whose budget is $b_j < b_i$, in which case the price of $g_1^*$ is less than $b_1$, and player 1 demands it.
Assume that the claim holds for players $1,\dots,i-1$, and consider player $i$.
If $g_i^* \neq \argmax_{g\in \cal G_i} v_i(g)$ then $\argmax_{g\in \cal G_i} v_i(g)$ is allocated to some other player $b_{i'}$ where $i'>i$ which is impossible because then player $i$ can afford the good $\argmax_{g\in \cal G_i} v_i(g)$. The above argument shows that any market equilibrium should assign $g_i^*$ to player $i$ and $g_i^*$ is the best possible good which can be assigned to player $i$, which implies that for any equilibrium allocation $\cal A$, $v_i(A_i) = v_i(g_i^*)$. Therefore the social welfare for any equilibrium is $\sum_{i\in N}v_i(g_i^*)$    
\end{proof}

\lemprodDexpbound*
\begin{proof}
Let $\hat{g}_i$ be the most valued good for player $i$ in $\bigcup \limits_{S\in \cal S^t_i }S$; we set $\alpha = \min_{g \in G} \Pr_{S \sim \cal D}(g \in S)$ and $\beta = \max_{g \in G} \Pr_{S \sim \cal D}(g \in S)$. 

At the $t$-th iteration of the \textbf{\textit{while loop}} in Algorithm~\ref{algo:unit-demand-market} for player $i$; $B_i^t \neq \{\hat{g}_i\}$ if and only if there exists another good $g{'} \ne \hat{g}_i$ which appears in all samples which contain $\hat{g}_i$, and does not appear in samples which do not contain $\hat{g}_i$ in $\cal S_i^t$. When this happens, $\{\hat{g}_i,g'\}\subseteq B_i^t$.

This event occurs with an exponentially low probability.
The probability that a good $g' (\ne \hat{g}_i)$ is present or absent together with $\hat{g}_i$ in a sample $S\in \cal S_i^t$ is 
\begin{align}
\Pr_{S \sim \cal D}(g' \in S) \Pr_{S \sim \cal D}(\hat{g}_i \in S)  &+\notag\\ \bigg (1 - \Pr_{S \sim \cal D}(g' \in S) \bigg ) \bigg (1 - \Pr_{S \sim \cal D}(\hat{g}_i \in S) \bigg ) &\label{eq:prob-together}
\end{align}
The upper bound on \eqref{eq:prob-together} for any product distribution is $\gamma^2 + (1-\gamma)^2$ where $\gamma = \min(\alpha, 1 - \beta)$.
When given $\ge k^2$ samples in $\cal S^t_i$, the probability that a good $g' (\ne g^*)$ is present or absent together with $\hat{g}_i$ in all samples is 
\begin{equation}
\le(\gamma^2 + (1-\gamma)^2)^{k^2}= \me^{-k^2 \log \big (  \frac{1}{\gamma^2 + (1-\gamma)^2} \big )}\label{eq:bound-gamma}
\end{equation}
Equation~\eqref{eq:bound-gamma} is $< \me^{-k}$ only when 
$\gamma \in \left(\frac12 \pm\frac{\sqrt{2\me^{-1/k}-1}}{2} \right)$; in particular
\begin{align*}
\min(\alpha, 1 - \beta) > \frac{1}{2} - \frac{\sqrt{2\me^{-1/k}-1}}{2}
\end{align*} 
This implies that 
\begin{align}
\frac{1}{2} - \frac{\sqrt{2\me^{-1/k}-1}}{2} < \Pr_{S \sim \cal D}(g \in S) < \frac{1}{2} + \frac{\sqrt{2\me^{-1/k}-1}}{2} \label{eq:prob-interval}
\end{align}

Since this is true for all goods, using the union bound, the probability that $B_i^t \ne \{\hat{g}_i\}$ is  $\le (k-1)\cdot \me^{-k} \le \me^{-k/2}$. This completes the proof.

Note that this lemma holds when the probabilities of sampling each good are given by \eqref{eq:prob-interval}. This is a larger interval than the one given in the statement of the lemma (where the lower bound is doubled). The smaller interval in the statement of the lemma exists solely to make the proof of Theorem \ref{thm:prod_improved_bound} easier to understand. 
\end{proof}

\propproddistunitD*
\begin{proof}
Assume that player $i$ gets good $g_i^*$ in the optimal equilibrium allocation; let $\alpha = \min_{g \in G} \Pr_{S \sim \cal D}(g \in S)$ and let $\beta$ be $\max_{g \in G} \Pr_{S \sim \cal D}(g \in S)$. Using a similar argument to that in Theorem~\ref{thm:prod_improved_bound}, the probability that Algorithm~\ref{algo:unit-demand-market} assigns $g_i^*$ to player $i$ for $i\leq \max\{\log n,\log k\} /\log (\frac{1}{1-\beta})$ is at least 

\begin{equation*}
  1- \frac{ 2\max\{\log n,\log k\}^2}{\log^2\big(\frac{1}{ (1-\beta)}\big)}\me^{-\frac{k}{4}}   
\end{equation*}

The second part of the proof uses the above result to show efficiency bounds. 
The efficiency ratio $\ER_v(\cal A)$ is 
\begin{align*}
    &\frac{\sum_{i=1}^{n}v_i(A_i)}
{\sum_{i=1}^{n}v_i(A^*_i)} \ge\\
& \frac{\sum_{i=1}^{\log n/\log (\frac{1}{1-\beta})}v_i(A^*_i)}
{\sum_{i=1}^{\log n/\log (\frac{1}{1-\beta})}v_i(A^*_i) + \sum_{i=\log n/\log (\frac{1}{1-\beta})+ 1}^{n}v_i(A^*_i)} 
\end{align*}

Let us assume that the minimum utility achieved by any player among the first $\log n/\log (\frac{1}{1-\beta})$ players is $c$. 
This makes the $\ER_v(\cal A)$,
\begin{align}
& \ge \frac{(\log n/\log (\frac{1}{1-\beta}))c}{(\log n/\log (\frac{1}{1-\beta}))c + \sum_{i=\log n/\log (\frac{1}{1-\beta})+ 1}^{n}v_i(A^*_i)} \notag   
\end{align}
For the remaining players, the optimal utility is bounded by $\rho c$ since anything higher would violate the equilibrium condition. This is because if any remaining player (say $i^*$) receives a bundle with value $> \rho c$, then some player (say $i$) with a higher budget who currently has a value of $c$ for their allocated bundle will have a value $> c$ for the bundle allocated to player $i^*$. 
This violates the equilibrium condition since player $i$ can afford $A_{i^*}$ and strictly prefers $A_{i^*}$ to their allocation. 
This implies that the efficiency is 
\begin{align}
& \ge \frac{(\log n/\log (\frac{1}{1-\beta}))c}{(\log n/\log (\frac{1}{1-\beta}))c + (n-\log n/\log (\frac{1}{1-\beta}))\rho c}\ \notag \\   
& = \frac{(\log n/\log (\frac{1}{1-\beta}))}{\rho n + (1-\rho)\log n/\log (\frac{1}{1-\beta})} \notag 
 > \frac{\log n}{\rho n \log (\frac{1}{1-\beta})} \notag 
\end{align}
\end{proof}

\section{Missing Proofs from Section \ref{sec:singleminded}}\label{apdx:single-minded}

\thmSMEfficiencyNP*
\begin{proof}
We use a reduction from the NP-Complete problem SET PACKING:
\begin{quote}
    {\em Given a collection $\cal C = \{C_1, C_2, \dots , C_n\}$ of finite sets, all of which are a subset of a universal set $U = \{e_1, e_2, \dots e_m\}$, and a positive integer $K \le n$, does C contain at least $K$ mutually disjoint sets? .}
\end{quote}
Given a collection $\cal C$, a universal set $U$ and an integer $K$, construct a market with $n$ players $N= \{1,2,\dots,n\}$ and $m+n$ goods $G = \{g_1, g_2, \dots, g_{m+n}\}$ where $g_1 = e_1, g_2 = e_2, \dots, g_m = e_m$. Let each player have an arbitrary non-zero budget $b_i$ and desired set $\Des_i = C_i \cup \{g_{m+i}\}$. We show that there exists an equilibrium with total welfare at least $K$ if and only if $\cal C$ has a disjoint collection of $K$ sets. 

If an equilibrium with total welfare at least $K$ exists, then there are at least $K$ players who receive their desired set. This means that the $C_i$'s for all the players who receive their desired must be disjoint; otherwise, the equilibrium allocation would not be feasible. Therefore, there are at least $K$ sets in $\cal C$ which are mutually disjoint.  

If there are at least $K$ sets which are mutually disjoint, we can construct an equilibrium as follows:

Assume w.l.o.g. the sets $C_1, C_2, \dots, C_K$ are mutually disjoint. Furthermore, assume w.l.o.g. that they are maximal i.e. there is no other set $C_i \in \cal C$ which can be added to $\{C_1, C_2, \dots, C_K\}$ to create a set of $K+1$ mutually disjoint sets. For each $i \in \{1,2,\dots,K\}$, assign each good in the bundle $C_i$ a non-zero price such that the total price is equal to $b_i$ and price good $g_{m+i}$ at zero. For each $i \in \{K+1,K+2, \dots, n\}$, assign each good $g_{m+i}$ a price equal to $b_i$. For all the goods whose prices have not been defined so far, set them to zero. Now, allocate the first $K$ players their desired set and for every player $i \in \{K+1, K+2, \dots, n\}$, allocate the good $g_{m+i}$. Lastly, assign all the remaining goods to player $n$. This allocation has a total welfare at least $K$ since $K$ players get their desired set.
It is also easy to verify that the above allocation is an equilibrium since any player $i \in \{K+1, K+2, \dots, n\}$ (who do not get their desired set) cannot afford their desired set. This is because $C_i$ intersects with another set $C_j$ such that $j \in \{1, 2, \dots, K\}$. Otherwise, this would violate the maximality assumption. Therefore, $C_i$ has a non-zero price and the bundle $C_i \cup \{g_{m+i}\}$ has a price strictly greater than $b_i$.
This concludes our proof.
\end{proof}

\thmSMinfobounds*
\begin{proof}
Consider a market with $n$ players and $k$ goods. Define a set of single minded valuation function profiles $\cal V'$ as follows: the desired set of each player consists of only one good. This good is referred to as the desired good. Furthermore, let no two players in the top $\min\{n, k\}$ players budget wise have the same desired good. 

Define the budget vector $\{b_1, b_2, \dots, b_n\}$ as any budget vector such that $b_1 > b_2 > \dots > b_n$. Let us call this vector of budgets $\vec{b'}$.

Now, suppose the only sample we have is the set of goods $G$ ($\cal S = \{G\}$) and $v_i(G) = 1$ for all $i \in N$. This sample set is consistent with all the valuation function profiles in $\cal V'$.

Note that for any valuation profile $v \in \cal V'$, the best equilibrium allocation is where the top $\min\{n, k\}$ players get their desired good. This allocation gives us a total value of $\min\{n, k\}$.

Suppose that allocation $\cal A$ allocates all the goods to one player. The maximum total utility that $\cal A$ can guarantee is $1$. This allocation gives us an efficiency 
\begin{align*}
    \min_{v \in \cal V} ER_v(\cal A) \le \frac{1}{\min\{n, k\}} 
\end{align*}
since the maximum utility that the optimal equilibrium allocation can get among all the valuation function profiles consistent with $\cal S$ is lower bounded by $\min\{n, k\}$.

If this is not the case and $\cal A$ allocates goods to more than one player, then we show that the maximum utility that $\cal A$ can guarantee is $0$. Let $\cal A$ allocate non-empty bundles to players in $\{i_1, i_2, \dots, i_{n'}\}$. Therefore, the bundles $\{A_{i_1}, A_{i_2}, \dots, A_{i_{n'}}\}$ are non-empty. There exists a valuation function in $\cal V'$ such that the desired good of $i_1$ is in $A_{i_2}$, the desired good of $i_2$ is in $A_{i_3}$ and so on till finally, the desired good of $i_{n'}$ is in $A_{i_1}$. All the players in $\{i_1, i_2, \dots, i_{n'}\}$ have different desired goods here implying that all the players in $\{i_1, i_2, \dots, i_{n'}\}$ which are in the top $\min\{n, k\}$ budget wise players have different desired goods. For those players in the top $\min\{n, k\}$ budget wise who are not allocated any goods, we can set their desired good such that no two players in the top $\min\{n, k\}$ budget wise have the same desired good. This valuation profile is in $\cal V'$ and is consistent with $\cal S$. The optimal equilibrium utility in this case is non-zero trivially and therefore the efficiency guaranteed by this allocation is $0$.

This means, given the set of samples and the set of budgets as defined above, we cannot guarantee an efficiency greater than $\frac{1}{\min\{n, k\}}$. This means that 
\begin{align*}
    \min_{v \in \cal V} \max_{\cal A} \min_{\cal S \subseteq 2^G, b \in \cal B} \ER_v(\cal A) \le
    \min_{v \in \cal V} \max_{\cal A} \min_{\cal S = \{G\}, \vec{b} = \vec{b'}} \ER_v(\cal A) \le \frac{1}{\min\{n, k\}}
\end{align*}
This concludes the proof.
\end{proof}

\section{Missing Proofs from Section \ref{sec:additive}}\label{apdx:additive}


\thmadditiveinfobounds*
\begin{proof}
Consider a market with $n$ players and $k$ goods. We divide this proof into two parts.

\textbf{When $n \ge k$:}
Define a set of additive valuation function profiles $\cal V'$ as follows: each player has one good for which $v_i(\{g\}) = b_i$ and every other good has value $0$ for this player. We refer to the good with non-zero valuation as the favourite good of player $i$. Also, let no two players in the top $k$ players budget wise have the same favourite good. This set of valuations satisfies our budget normalisation condition. 

Define the budget vector $\{b_1, b_2, \dots, b_n\}$ as follows: for every player $b_i = b_1 - \delta_i$ where $\delta_1 = 0$, $0 < \delta_2 < \delta_3 < \dots < \delta_n$ and $\delta_n = \frac{\delta b_1}{k}$. Let us call this vector of budgets $\vec{b'}$.

Now, suppose the only sample we have is the set of goods $G$ ($\cal S = \{G\}$) and $v_i(G) = b_i$ for all $i \in N$. This is consistent with all the valuation function profiles in $\cal V'$.

Note that for any valuation profile $v \in \cal V'$, the best equilibrium allocation is where the top $k$ players get their favourite good. This allocation gives us a total value of 
\begin{align}
    \sum_{i \in N} v_i(A^*_i) &= \sum_{i = 1}^{k} b_i \notag \\
    &= k b_1 - \sum_{i = 1}^{k} \delta_i \notag \\
    &\ge k b_1 - \sum_{i = 1}^{k} \delta_n \notag \\
    &= k b_1 - \delta b_1 \label{eqn:additive-info-bound-optimal-utility}
\end{align}

Suppose that allocation $\cal A$ allocates all the goods to one player. The maximum total utility that $\cal A$ can guarantee is $b_1$ and this arises when the entire bundle is allocated to player $1$. Allocating the entire bundle to any other player will give us a strictly lower utility since all other players have a lower budget. This allocation gives us an efficiency 
\begin{align*}
    \min_{v \in \cal V} ER_v(\cal A) \le \frac{b_1}{k b_1 - \delta b_1} = \frac{1}{k - \delta}
\end{align*}
since the maximum utility that the optimal equilibrium allocation can get among all the valuation function profiles consistent with $\cal S$ is lower bounded by Equation \ref{eqn:additive-info-bound-optimal-utility}.

If this is not the case and $\cal A$ allocates goods to more than one player, then we show that the maximum utility that $\cal A$ can guarantee is $0$. Let $\cal A$ allocate non-empty bundles to players in $\{i_1, i_2, \dots, i_{n'}\}$. Therefore, the bundles $\{A_{i_1}, A_{i_2}, \dots, A_{i_{n'}}\}$ are non-empty. There exists a valuation function in $\cal V'$ such that the favourite good of $i_1$ is in $A_{i_2}$, the favourite good of $i_2$ is in $A_{i_3}$ and so on till finally, the favourite good of $i_{n'}$ is in $A_{i_1}$. All the players in $\{i_1, i_2, \dots, i_{n'}\}$ have different favourite goods here implying that all the players in $\{i_1, i_2, \dots, i_{n'}\}$ which are in the top $k$ players budget wise have different favourite goods. For those players in the top $k$ budget wise who are not allocated any goods, we can set their favourite good such that no two players in the top $k$ budget wise have the same favourite good. This valuation profile is in $\cal V'$ and is consistent with $\cal S$. The optimal equilibrium utility in this case is non-zero trivially and therefore the efficiency guaranteed by this allocation is $0$.

This means, given the set of samples and the set of budgets as defined above, we cannot guarantee an efficiency greater than $\frac{1}{k - \delta}$. This means that 
\begin{align*}
    \min_{v \in \cal V} \max_{\cal A} \min_{\cal S \subseteq 2^G, b \in \cal B} \ER_v(\cal A) \le \min_{v \in \cal V} \max_{\cal A} \min_{\cal S = \{G\}, \vec{b} = \vec{b'}} \ER_v(\cal A) \le \frac{1}{k - \delta}
\end{align*}
\textbf{When $n < k$}:
Let $G'$ be a subset of $G$ such that $|G'| = n$. Define a set of additive valuation function profiles $\cal V'$ as follows: each player has one good in $G'$ for which $v_i(g) = b_i$ and every other good in $G'$ has value $0$ for this player. We refer to this good with non-zero valuation as the favourite good of player $i$. Also, let no two players have the same favourite good. Each of the goods in $G \setminus G'$ is valued by exactly one player in $N$ at a value equal to their budget. Note that it is not necessary for all the goods in $G\setminus G'$ to be valued by the same player. All the valuations in $\cal V$ satisfy the budget normalisation property.

Define the budget vector $\{b_1, b_2, \dots, b_n\}$ as follows: for every player $b_i = b_1 - \delta_i$ where $\delta_1 = 0$, $0 < \delta_2 < \delta_3 < \dots < \delta_n$ and $\delta_n = \frac{\delta b_1}{k}$. Let us call this vector of budgets $\vec{b'}$.

Now, suppose the only sample we have is the set $G'$ ($\cal S = \{G'\}$) and $v_i(G') = b_i$ for all $i \in N$. This is consistent with all the valuation function profiles in $\cal V'$.

Note that for any valuation profile $v \in \cal V'$, the best equilibrium allocation is where all the players get their favourite good and the goods in $G \setminus G'$ are given to the only player who values them at a non-zero value. 

This allocation gives us a total value of 
\begin{align}
    \sum_{i \in N} v_i(A^*_i) &\ge \sum_{i = 1}^{k} b_n \notag \\
    &= k b_1 - k \delta_n \notag \\
    &= k b_1 - \delta b_1 \label{eqn:additive2-info-bound-optimal-utility}
\end{align}
Before we prove the highest utility a consistent allocation can guarantee, we first show that no allocation can guarantee any utility from any good in the set $G \setminus G'$ when the valuation function profile is in $\cal V'$. If the allocation allocates all the goods in $G \setminus G'$ to one player (say $i$), there exists a valuation function profile with the same set of favourite goods where all the goods in $G \setminus G'$ is valued by some player $j \ne i$. If this is not the case and the allocation allocates the good in $G \setminus G'$ to multiple players (say $\{j_1, j_2, \dots, j_{n'}\}$), then there exists a valuation function profile in $\cal V$ with the same favourite goods such that all the goods given to $j_2$ are valued by $j_1$, all the goods given to $j_3$ are valued by $j_2$ and so on till finally, all the goods given to $j_1$ are valued by $j_{n'}$. Either way, there exists a valuation function for which no good in $G \setminus G'$ provides any value. Therefore, we only need to look at the utility guaranteed by goods in $G'$.

Now, suppose that allocation $\cal A$ allocates all the goods in $G'$ to one player. The maximum total utility that $\cal A$ can guarantee is $b_1$ and this arises when the entire bundle is allocated to player $1$. This is because $G'$ guarantees a utility of $b_1$ and $G \setminus G'$ cannot guarantee a non-zero utility. Allocating the entire bundle to any other player will give us a strictly lower utility since all other players have a lower budget. This allocation gives us an efficiency 
\begin{align*}
    \min_{v \in \cal V} ER_v(\cal A) \le \frac{b_1}{k b_1 - \delta b_1} = \frac{1}{k - \delta}
\end{align*}
since the maximum utility that the optimal equilibrium allocation can get among all the valuation function profiles consistent with $\cal S$ is lower bounded by Equation \ref{eqn:additive2-info-bound-optimal-utility}.

If this is not the case and $\cal A$ allocates goods in $G'$ to more than one player, then we show that the maximum utility that $\cal A$ can guarantee is $0$. Let $\cal A$ allocate non-empty subsets of $G'$ to players in $\{i_1, i_2, \dots, i_{n'}\}$. Therefore, the bundles $\{A_{i_1}, A_{i_2}, \dots, A_{i_n'}\}$ are non-empty. There exists a valuation function in $\cal V'$ such that the favourite good of $i_1$ is in $A_{i_2}$, the favourite good of $i_2$ is in $A_{i_3}$ and so on till finally, the favourite good of $i_{n'}$ is in $A_{i_1}$. All the players in $\{i_1, i_2, \dots, i_{n'}\}$ have different favourite goods. For those players who are not allocated any goods, we can set their favourite good such that no two players have the same favourite good. Furthermore, we can choose a valuation function in $\cal V$ with these favourite goods such that no utility is guaranteed by the goods in $G \setminus G'$. The optimal equilibrium utility in this case is non-zero trivially and therefore the efficiency guaranteed by this allocation is $0$.

This means, given the set of samples and the set of budgets as defined above, we cannot guarantee an efficiency greater than $\frac{1}{k - \delta}$. This means that 
\begin{align*}
    \min_{v \in \cal V} \max_{\cal A} \min_{\cal S \subseteq 2^G, b \in \cal B} \ER_v(\cal A) \le 
    \min_{v \in \cal V} \max_{\cal A} \min_{\cal S = \{G'\}, \vec{b} = \vec{b'}} \ER_v(\cal A) \le \frac{1}{k - \delta}
\end{align*}
This concludes the proof.
\end{proof}

\end{document}